\definecolor{jazzberryjam}{rgb}{0.65, 0.04, 0.37}
\definecolor{egyptianblue}{rgb}{0.06, 0.2, 0.65}
\theoremstyle{plain}
\newtheorem{algorithm}{Algorithm}
\newtheorem{observation}{Observation}
\newcommand{\Cmax}{C_{\max}}
\newcommand{\jobs}{\mathcal{J}}
\newcommand{\confs}{\mathcal{C}}
\newcommand{\size}{\Lambda}
\newcommand{\mods}{\mathcal{M}}
\newcommand{\bobs}{\mathcal{B}}
\newcommand{\groups}{\Gamma}
\newcommand{\msizes}{H}
\newcommand{\layers}{\Xi}
\newcommand{\MCIP}{\mathrm{MCIP}}
\newcommand{\bsjobs}{\jobs^{\mathrm{bst}}}
\newcommand{\msjobs}{\jobs^{\mathrm{mst}}}
\newcommand{\ssjobs}{\jobs^{\mathrm{sst}}}
\newcommand{\bsmods}{\mods^{\mathrm{bst}}}
\newcommand{\msmods}{\mods^{\mathrm{mst}}}
\newcommand{\ssmods}{\mods^{\mathrm{sst}}}
\newcommand{\ptiny}{\mathrm{tiny}}
\newcommand{\pbig}{\mathrm{big}}
\newcommand{\plarge}{\mathrm{large}}
\newcommand{\psmall}{\mathrm{small}}
\newcommand{\jparts}{\jobs'}
\newcommand{\jfrac}{\lambda}
\newcommand{\jcard}{\kappa}
\newcommand{\poly}{\mathrm{poly}}
\newcommand{\Opt}{\mathrm{OPT}}
\newcommand{\Oh}{\mathcal{O}}
\newcommand{\eps}{\varepsilon}
\newcommand{\ZZ}{\mathbb{Z}}
\newcommand{\QQ}{\mathbb{Q}}
\DeclarePairedDelimiter\floor{\lfloor}{\rfloor}
\DeclarePairedDelimiter\ceil{\lceil}{\rceil}
\DeclarePairedDelimiter\set{\lbrace}{\rbrace}
\DeclarePairedDelimiterX\sett[2]{\lbrace}{\rbrace}{ #1 \,\delimsize| \,\mathopen{} #2 }
\title{Empowering the Configuration-IP -- New PTAS Results for Scheduling with Setups Times}
\titlerunning{Empowering the Configuration-IP}
\author{Klaus Jansen}{Department of Computer Science, Kiel University, Kiel, Germany}{kj@informatik.uni-kiel.de}{}{German Research Foundation (DFG) project JA 612/20-1}
\author{Kim-Manuel Klein}{Department of Computer Science, Kiel University, Kiel, Germany}{kmk@informatik.uni-kiel.de}{}{}
\author{Marten Maack}{Department of Computer Science, Kiel University, Kiel, Germany}{mmaa@informatik.uni-kiel.de}{}{}
\author{Malin Rau}{Department of Computer Science, Kiel University, Kiel, Germany}{mra@informatik.uni-kiel.de}{}{}
\authorrunning{K. Jansen, K. Klein, M. Maack and M. Rau}
\subjclass{\ccsdesc[500]{Theory of computation~Scheduling algorithms}, \ccsdesc[300]{Theory of computation~Discrete optimization}}
\keywords{Parallel Machines, Setup Time, EPTAS, $n$-fold integer programming}
\begin{document}


\maketitle

\begin{abstract}
Integer linear programs of configurations, or configuration IPs, are a classical tool in the design of algorithms for scheduling and packing problems, where a set of items has to be placed in multiple target locations. 
Herein a configuration describes a possible placement on one of the target locations, and the IP is used to chose suitable configurations covering the items. 
We give an augmented IP formulation, which we call the module configuration IP. 
It can be described within the framework of n-fold integer programming and therefore be solved efficiently. 
As an application, we consider scheduling problems with setup times, in which a set of jobs has to be scheduled on a set of identical machines, with the objective of minimizing the makespan. 
For instance, we investigate the case that jobs can be split and scheduled on multiple machines. 
However, before a part of a job can be processed an uninterrupted setup depending on the job has to be paid. 
For both of the variants that jobs can be executed in parallel or not, we obtain an efficient polynomial time approximation scheme (EPTAS) of running time $f(1/\eps)\times \poly(|I|)$ with a single exponential term in $f$ for the first and a double exponential one for the second case. 
Previously, only constant factor approximations of $5/3$ and $4/3 + \eps$ respectively were known.
Furthermore, we present an EPTAS for a problem where classes of (non-splittable) jobs are given, and a setup has to be paid for each class of jobs being executed on one machine. 
\end{abstract}

\section{Introduction}

In this paper, we present an augmented formulation of the classical integer linear program of configurations (configuration IP) and demonstrate its use in the design of efficient polynomial time approximation schemes for scheduling problems with setup times.
Configuration IPs are widely used in the context of scheduling or packing problems, in which items have to be distributed to multiple target locations.
The configurations describe possible placements on a single location, and the integer linear program (IP) is used to choose a proper selection covering all items.
Two fundamental problems, for which configuration IPs have prominently been used, are bin packing and minimum makespan scheduling on identical parallel machines, or machine scheduling for short.
For bin packing, the configuration IP was introduced as early as 1961 by Gilmore and Gomory \cite{gilmore1961linear}, and the recent results for both problems typically use configuration IPs as a core technique, see, e.g., \cite{goemans2014polynomiality,JKV16ICALP}.
In the present work, we consider scheduling problems and therefore introduce the configuration IP in more detail using the example of machine scheduling.

\subparagraph*{Configuration IP for Machine Scheduling.}

In the problem of machine scheduling, a set $\jobs$ of $n$ jobs is given together with processing times $p_j$ for each job $j$ and a number $m$ of identical machines.
The objective is to find a schedule $\sigma: \jobs \rightarrow [m]$, such that the makespan is minimized, that is, the latest finishing time of any job $C_{\max}(\sigma)=\max_{i\in[m]}\sum_{j\in\sigma^{-1}(i)}p_j$.
For a given makespan bound, the configurations may be defined as multiplicity vectors indexed by the occurring processing times, where the overall length of the chosen processing times does not violate the bound.
The configuration IP is then given by variables $x_C$ for each configuration $C$; constraints ensuring that there is a machine for each configuration, i.e., $\sum_{C}x_C = m$; and further constraints due to which the jobs are covered, i.e., $\sum_{C}C_p x_C = |\sett{j\in\jobs}{p_j=p}|$ for each processing time $p$.
In combination with certain simplification techniques, this type of IP is often used in the design of \emph{polynomial time approximation schemes} (PTAS).
A PTAS is a procedure that, for any fixed accuracy parameter $\eps > 0$, returns a solution with approximation guarantee $(1+\eps)$ that is, a solution, whose objective value lies within a factor of $(1 + \eps)$ of the optimum.
In the context of machine scheduling, the aforementioned simplification techniques can be used to guess the target makespan $T$ of the given instance; to upper bound the cardinality of the set of processing times $P$ by a constant (depending in $1/\eps$); and to lower bound the processing times in size, such that they are within a constant factor of the makespan $T$ (see, e.g., \cite{alon1998approximation,JKV16ICALP}).
Hence, only a constant number of configurations is needed, yielding an integer program with a constant number of variables.
Integer programs of that kind can be efficiently solved using the classical algorithm by Lenstra and Kannan \cite{lenstra1983integer,kannan1987minkowski}, yielding a PTAS for machine scheduling.
Here, the error of $(1+\eps)$ in the quality of the solution is due to the simplification steps, and the scheme has a running time of the form $f(1/\eps)\times\poly(|I|)$, where $|I|$ denotes the input size, and $f$ some computable function.
A PTAS with this property is called \emph{efficient} (EPTAS).
Note that for a regular PTAS a running time of the form $|I|^{f(1/\epsilon)}$ is allowed.
It is well known, that machine scheduling is strongly NP-hard, and therefore there is no optimal polynomial time algorithm, unless P$=$NP, and also a so-called \emph{fully polynomial} PTAS (FPTAS)---which is an EPTAS with a polynomial function~$f$---cannot be hoped for.

\subparagraph*{Machine Scheduling with Classes.}

The configuration IP is used in a wide variety of approximation schemes for machine scheduling problems \cite{alon1998approximation, JKV16ICALP}. 
However, the approach often ceases to work for scheduling problems in which the jobs have to fulfill some additional requirements, like, for instance, class dependencies.
A problem emerging, in this case, is that the additional requirements have to be represented in the configurations, resulting in a super-constant number of variables in the IP.
We elaborate on this using a concrete example:
Consider the variant of machine scheduling in which the jobs are partitioned into $K$ setup classes.
For each job $j$ a class $k_j$ is given and for each class $k$ a setup time $s_k$ has to be paid on a machine, if a job belonging to that class is scheduled on it, i.e., $\Cmax(\sigma) = \max_{i\in[m]}\big( \sum_{j\in\sigma^{-1}(i)} p_j + \sum_{k\in\sett{k_j}{j\in\sigma{-1}(i)}} s_k \big)$.
With some effort, simplification steps similar to the ones for machine scheduling can be applied.
In the course of this, the setup times as well can be bounded in number and guaranteed to be sufficiently big \cite{SetupPTAS2016}.
However, it is not hard to see that the configuration IP still cannot be trivially extended, while preserving its solvability.
For instance, extending the configurations with multiplicities of setup times will not work, because then we have to make sure that a configuration is used for a fitting subset of classes, creating the need to encode class information into the configurations or introduce other class dependent variables.

\subparagraph*{Module Configuration IP.}

Our approach to deal with the class dependencies of the jobs is to cover the job classes with so-called modules and cover the modules in turn with configurations in an augmented IP called the module configuration IP ($\MCIP$).
In the setup class model, for instance, the modules may be defined as combinations of setup times and configurations of processing times, and the actual configurations as multiplicity vectors of module sizes.
The number of both the modules and the configurations will typically be bounded by a constant.
To cover the classes by modules each class is provided with its own set of modules, that is, there are variables for each pair of class and module. 
Since the number of classes is part of the input, the number of variables in the resulting $\MCIP$ is super-constant, and therefore the algorithm by Lenstra and Kannan \cite{lenstra1983integer,kannan1987minkowski} is not the proper tool for the solving of the $\MCIP$.
However, the $\MCIP$ has a certain simple structure:
The mentioned variables are partitioned into uniform classes each corresponding to the set of modules, and for each class, the modules have to do essentially the same---cover the jobs of the class.
Utilizing these properties, we can formulate the $\MCIP$ in the framework of $n$-fold integer programms---a class of IPs whose variables and constraints fulfill certain uniformity requirements.
In 2013 Hemmecke, Onn, and Romanchuk \cite{nfoldcubic} showed that $n$-fold IPs can be efficiently solved, and very recently both Eisenbrand, Hunkenschröder and Klein \cite{Kim-n-fold2018} and independently Koutecký, Levin and Onn \cite{KLS-n-fold-2018} developed algorithms with greatly improved running times for the problem.
For a detailed description of the $\MCIP$, the reader is referred to Section \ref{sec:ILP}.
In Figure \ref{fig:MCIP} the basic idea of the $\MCIP$ is visualized. 

Using the $\MCIP$, we are able to formulate an EPTAS for machine scheduling in the setup class model described above.
Before, only a regular PTAS with running time $nm^{\Oh(1/\eps^5)}$ was known \cite{SetupPTAS2016}.
To the best of our knowledge, this is the first use of $n$-fold integer programing in the context of approximation algorithms. 

\begin{figure}
\resizebox{\textwidth}{!}{%
\begin{tikzpicture}
\pgfmathsetmacro{\w}{0.25}
\pgfmathsetmacro{\h}{0.25}
\pgfmathsetmacro{\dist}{1.5}
\foreach \x/\y/\z/\k in {0/Green/0.23/0.2,1/Purple/0.21/0.1,2/TealBlue/0.14/0.1,3/Cerulean/0.15/0.3,4/Dandelion/0.24/0.2,5/Maroon/0.03/0.1}{
    \draw[fill = \y, opacity = 0.6] (\dist*\x*\w,0) rectangle (\dist*\x*\w +\w,\x*\h +\h);
    \draw[] (\dist*\x*\w,0) rectangle (\dist*\x*\w +\w,\x*\h +\h);
    	\node at (\dist*\x*\w+ 0.5*\w,-2*\h) {\small{$p_{\x}$}};
    	\begin{scope}[yshift = 7*\h cm]
    	\draw (\dist*\x*\w,11*\h) rectangle (\dist*\x*\w +\w,20*\h);
    	\node at (\dist*\x*\w+ 0.5*\w,9*\h) {\small{$x_{\x}$}};
    	\end{scope}
    	}
\begin{scope}[yshift = 7*\h cm]
\draw[fill = Maroon, opacity = 0.6] (\dist*0*\w,11*\h) rectangle (\dist*0*\w +\w,11*\h + 6*\h);
\draw[fill = TealBlue, opacity = 0.6] (\dist*0*\w,11*\h+ 6*\h) rectangle (\dist*0*\w +\w,11*\h + 9*\h);  
\draw[fill = Dandelion, opacity = 0.6] (\dist*1*\w,11*\h) rectangle (\dist*1*\w +\w,11*\h + 5*\h);	
\draw[fill = TealBlue, opacity = 0.6] (\dist*1*\w,11*\h + 5*\h) rectangle (\dist*1*\w +\w,11*\h + 8*\h);
\draw[fill = Green, opacity = 0.6] (\dist*1*\w,11*\h + 8*\h) rectangle (\dist*1*\w +\w,11*\h + 9*\h);		
\draw[fill = Maroon, opacity = 0.6] (\dist*2*\w,11*\h) rectangle (\dist*2*\w +\w,11*\h + 6*\h);	
\draw[fill = Purple, opacity = 0.6] (\dist*2*\w,11*\h + 6*\h) rectangle (\dist*2*\w +\w,11*\h + 8*\h);		
\draw[fill = Green, opacity = 0.6] (\dist*2*\w,11*\h + 8*\h) rectangle (\dist*2*\w +\w,11*\h + 9*\h);	    	
\draw[fill = Cerulean, opacity = 0.6] (\dist*3*\w,11*\h) rectangle (\dist*3*\w +\w,11*\h + 4*\h);	
\draw[fill = TealBlue, opacity = 0.6] (\dist*3*\w,11*\h + 4*\h) rectangle (\dist*3*\w +\w,11*\h + 7*\h);	
\draw[fill = Purple, opacity = 0.6] (\dist*3*\w,11*\h+ 7*\h) rectangle (\dist*3*\w +\w,11*\h + 9*\h);
\draw[fill = Dandelion, opacity = 0.6] (\dist*4*\w,11*\h) rectangle (\dist*4*\w +\w,11*\h + 5*\h);	
\draw[fill = Cerulean, opacity = 0.6] (\dist*4*\w,11*\h + 5*\h) rectangle (\dist*4*\w +\w,11*\h + 9*\h);	
\draw[fill = TealBlue, opacity = 0.6] (\dist*5*\w,11*\h) rectangle (\dist*5*\w +\w,11*\h + 3*\h);	
\draw[fill = TealBlue, opacity = 0.6] (\dist*5*\w,11*\h + 3*\h) rectangle (\dist*5*\w +\w,11*\h + 6*\h);
\draw[fill = TealBlue, opacity = 0.6] (\dist*5*\w,11*\h + 6*\h) rectangle (\dist*5*\w +\w,11*\h + 9*\h);
\end{scope}
\node[text width=2cm] at (-7*\w,-2*\h) {s.c. mult.};
\node[text width=2cm] at (-7*\w,1.5*\h) {constant\\ sizes};
\node[text width=2cm] at (-7*\w,19*\h) {constant\\ configurations};
\node[text width=2cm] at (-7*\w,16*\h) {s.c. mult.};


\begin{scope}[xshift = 20*\w cm]
\pgfmathsetmacro{\w}{0.25}
\pgfmathsetmacro{\h}{0.25}
\pgfmathsetmacro{\dist}{1.6}
\pgfmathsetmacro{\distClass}{7}
\foreach \l/\p in {0/crosshatch,1/crosshatch dots,2/north east lines}{
	\foreach \x/\y/\z in {0/Green/-1,1/Purple/1,2/TealBlue/-1,3/Cerulean/1,4/Dandelion/-1,5/Maroon/1}{
    		\draw[pattern = \p, pattern color = black] (\dist*\x*\w+\dist*\l*\distClass*\w,0) rectangle (\dist*\x*\w +\w+\dist*\l*\distClass*\w,\x*\h +\h);
    			\node at (\dist*\x*\w +0.5*\w+\dist*\l*\distClass*\w,-1*\h-\h) {\small{$p_{\x}^{\l}$}};
    			
    }
}

\foreach \x/\y/\z/\k in {0, ..., 5}{
    	\begin{scope}[yshift = 7*\h cm]
    	\node at (\dist*\x*\w +0.5*\w+\dist*1*\distClass*\w,9*\h) {\small{$x_{\x}$}};
    	\end{scope}
    	}

\node at (\dist*2*\distClass*\w + 0.9*\dist*\distClass*\w,0) {$\dots$};
\node[text width=2cm] at (-4*\w,2.5*\h) {s.c. classes\\ with \\ c. sizes};    
\node[text width=2cm] at (-4*\w,-2*\h) {s.c. mult.};	
\node[text width=2cm] at (-4*\w,13*\h) {constant \\ module \\ sizes/c.};
\node[text width=2cm] at (-4*\w,8*\h) {modules \\ per class};  
\node[text width=2cm] at (-4*\w,19*\h) {constant \\ configurations};

\draw[fill = Green, opacity = 0.6] (\dist*0*\w+\dist*0*\distClass*\w,10*\h) rectangle (\dist*0*\w +\w+\dist*0*\distClass*\w,10*\h +\h);
\draw[fill = Green, opacity = 0.6] (\dist*1*\w+\dist*0*\distClass*\w,10*\h) rectangle (\dist*1*\w +\w+\dist*0*\distClass*\w,10*\h +\h);
\draw[fill = Green, opacity = 0.6] (\dist*2*\w+\dist*0*\distClass*\w,10*\h) rectangle (\dist*2*\w +\w+\dist*0*\distClass*\w,10*\h +\h);
\draw[pattern = crosshatch] (\dist*0*\w+\dist*0*\distClass*\w,10*\h) rectangle (\dist*0*\w +\w+\dist*0*\distClass*\w,10*\h +\h);
\draw[pattern = crosshatch dots] (\dist*1*\w+\dist*0*\distClass*\w,10*\h) rectangle (\dist*1*\w +\w+\dist*0*\distClass*\w,10*\h +\h);
\draw[pattern = north east lines] (\dist*2*\w+\dist*0*\distClass*\w,10*\h) rectangle (\dist*2*\w +\w+\dist*0*\distClass*\w,10*\h +\h);

\draw[fill = Purple, opacity = 0.6] (\dist*3*\w+\dist*0*\distClass*\w,10*\h) rectangle (\dist*3*\w +\w+\dist*0*\distClass*\w,10*\h +\h);
\draw[fill = Purple, opacity = 0.6] (\dist*3*\w+\dist*0*\distClass*\w,11*\h) rectangle (\dist*3*\w +\w+\dist*0*\distClass*\w,11*\h +\h);
\draw[pattern = crosshatch] (\dist*3*\w+\dist*0*\distClass*\w,10*\h) rectangle (\dist*3*\w +\w+\dist*0*\distClass*\w,10*\h +\h);
\draw[pattern = crosshatch] (\dist*3*\w+\dist*0*\distClass*\w,11*\h) rectangle (\dist*3*\w +\w+\dist*0*\distClass*\w,11*\h +\h);

\draw[fill = Purple, opacity = 0.6] (\dist*4*\w+\dist*0*\distClass*\w,10*\h) rectangle (\dist*4*\w +\w+\dist*0*\distClass*\w,10*\h +\h);
\draw[fill = Purple, opacity = 0.6] (\dist*4*\w+\dist*0*\distClass*\w,11*\h) rectangle (\dist*4*\w +\w+\dist*0*\distClass*\w,11*\h +\h);
\draw[pattern = crosshatch dots] (\dist*4*\w+\dist*0*\distClass*\w,10*\h) rectangle (\dist*4*\w +\w+\dist*0*\distClass*\w,10*\h +\h);
\draw[pattern = crosshatch dots] (\dist*4*\w+\dist*0*\distClass*\w,11*\h) rectangle (\dist*4*\w +\w+\dist*0*\distClass*\w,11*\h +\h);

\draw[fill = Purple, opacity = 0.6] (\dist*5*\w+\dist*0*\distClass*\w,10*\h) rectangle (\dist*5*\w +\w+\dist*0*\distClass*\w,10*\h +\h);
\draw[fill = Purple, opacity = 0.6] (\dist*5*\w+\dist*0*\distClass*\w,11*\h) rectangle (\dist*5*\w +\w+\dist*0*\distClass*\w,11*\h +\h);
\draw[pattern = north east lines] (\dist*5*\w+\dist*0*\distClass*\w,10*\h) rectangle (\dist*5*\w +\w+\dist*0*\distClass*\w,10*\h +\h);
\draw[pattern = north east lines] (\dist*5*\w+\dist*0*\distClass*\w,11*\h) rectangle (\dist*5*\w +\w+\dist*0*\distClass*\w,11*\h +\h);

\draw[fill = Purple, opacity = 0.6] (\dist*6*\w+\dist*0*\distClass*\w,10*\h) rectangle (\dist*6*\w +\w+\dist*0*\distClass*\w,11*\h +\h);
\draw[pattern = crosshatch] (\dist*6*\w+\dist*0*\distClass*\w,10*\h) rectangle (\dist*6*\w +\w+\dist*0*\distClass*\w,11*\h +\h);

\draw[fill = Purple, opacity = 0.6] (\dist*7*\w+\dist*0*\distClass*\w,10*\h) rectangle (\dist*7*\w +\w+\dist*0*\distClass*\w,11*\h +\h);
\draw[pattern = crosshatch dots] (\dist*7*\w+\dist*0*\distClass*\w,10*\h) rectangle (\dist*7*\w +\w+\dist*0*\distClass*\w,11*\h +\h);

\draw[fill = Purple, opacity = 0.6] (\dist*8*\w+\dist*0*\distClass*\w,10*\h) rectangle (\dist*8*\w +\w+\dist*0*\distClass*\w,11*\h +\h);
\draw[pattern = north east lines] (\dist*8*\w+\dist*0*\distClass*\w,10*\h) rectangle (\dist*8*\w +\w+\dist*0*\distClass*\w,11*\h +\h);

\draw[fill = TealBlue , opacity = 0.6] (\dist*9*\w+\dist*0*\distClass*\w,10*\h) rectangle (\dist*9*\w +\w+\dist*0*\distClass*\w,11*\h +\h);
\draw[fill = TealBlue, opacity = 0.6 ] (\dist*9*\w+\dist*0*\distClass*\w,12*\h) rectangle (\dist*9*\w +\w+\dist*0*\distClass*\w,12*\h +\h);
\draw[pattern = crosshatch] (\dist*9*\w+\dist*0*\distClass*\w,10*\h) rectangle (\dist*9*\w +\w+\dist*0*\distClass*\w,11*\h +\h);
\draw[pattern = crosshatch] (\dist*9*\w+\dist*0*\distClass*\w,12*\h) rectangle (\dist*9*\w +\w+\dist*0*\distClass*\w,12*\h +\h);

\draw[fill = TealBlue, opacity = 0.6] (\dist*10*\w+\dist*0*\distClass*\w,10*\h) rectangle (\dist*10*\w +\w+\dist*0*\distClass*\w,12*\h +\h);
\draw[pattern = crosshatch dots] (\dist*10*\w+\dist*0*\distClass*\w,10*\h) rectangle (\dist*10*\w +\w+\dist*0*\distClass*\w,12*\h +\h);

\draw[fill = TealBlue, opacity = 0.6] (\dist*11*\w+\dist*0*\distClass*\w,10*\h) rectangle (\dist*11*\w +\w+\dist*0*\distClass*\w,12*\h +\h);
\draw[pattern = north east lines] (\dist*11*\w+\dist*0*\distClass*\w,10*\h) rectangle (\dist*11*\w +\w+\dist*0*\distClass*\w,10*\h +\h);
\draw[pattern = north east lines] (\dist*11*\w+\dist*0*\distClass*\w,10*\h) rectangle (\dist*11*\w +\w+\dist*0*\distClass*\w,10*\h +2*\h);
\draw[pattern = north east lines] (\dist*11*\w+\dist*0*\distClass*\w,10*\h) rectangle (\dist*11*\w +\w+\dist*0*\distClass*\w,10*\h +3*\h);

\node at (\dist*13*\w+\dist*0*\distClass*\w,10*\h) {$\dots$};

\draw[fill = Cerulean, opacity = 0.6] (\dist*14*\w+\dist*0*\distClass*\w,10*\h) rectangle (\dist*14*\w +\w+\dist*0*\distClass*\w,13*\h +\h);
\draw[fill = Cerulean, opacity = 0.6] (\dist*15*\w+\dist*0*\distClass*\w,10*\h) rectangle (\dist*15*\w +\w+\dist*0*\distClass*\w,10*\h +2*\h);
\draw[fill = Cerulean, opacity = 0.6] (\dist*15*\w+\dist*0*\distClass*\w,10*\h +2*\h) rectangle (\dist*15*\w +\w+\dist*0*\distClass*\w,13*\h +\h);
\draw[pattern = crosshatch] (\dist*14*\w+\dist*0*\distClass*\w,10*\h) rectangle (\dist*14*\w +\w+\dist*0*\distClass*\w,10*\h +\h);
\draw[pattern = crosshatch] (\dist*14*\w+\dist*0*\distClass*\w,10*\h+\h) rectangle (\dist*14*\w +\w+\dist*0*\distClass*\w,10*\h +2*\h);
\draw[pattern = crosshatch] (\dist*14*\w+\dist*0*\distClass*\w,10*\h+2*\h) rectangle (\dist*14*\w +\w+\dist*0*\distClass*\w,13*\h +\h);
\draw[pattern = crosshatch dots] (\dist*15*\w+\dist*0*\distClass*\w,10*\h) rectangle (\dist*15*\w +\w+\dist*0*\distClass*\w,13*\h +\h);

\node at (\dist*17*\w+\dist*0*\distClass*\w,10*\h) {$\dots$};

\draw[fill = Dandelion, opacity = 0.6] (\dist*18*\w+\dist*0*\distClass*\w,10*\h) rectangle (\dist*18*\w +\w+\dist*0*\distClass*\w,14*\h +\h);
\draw[pattern = crosshatch dots] (\dist*18*\w+\dist*0*\distClass*\w,10*\h) rectangle (\dist*18*\w +\w+\dist*0*\distClass*\w,10*\h +3*\h);
\draw[pattern = crosshatch dots] (\dist*18*\w+\dist*0*\distClass*\w,10*\h+3*\h) rectangle (\dist*18*\w +\w+\dist*0*\distClass*\w,14*\h +\h);

\node at (\dist*20*\w+\dist*0*\distClass*\w,10*\h) {$\dots$};

\draw[fill = Maroon, opacity = 0.6] (\dist*21*\w+\dist*0*\distClass*\w,10*\h) rectangle (\dist*21*\w +\w+\dist*0*\distClass*\w,15*\h +\h);
\draw[pattern = north east lines] (\dist*21*\w+\dist*0*\distClass*\w,10*\h) rectangle (\dist*21*\w +\w+\dist*0*\distClass*\w,15*\h +\h);

\node at (\dist*23*\w+\dist*0*\distClass*\w,10*\h) {$\dots$};

\draw[fill = Maroon, opacity = 0.6] (\dist*0*\w+\dist*1*\distClass*\w,18*\h) rectangle (\dist*0*\w+\dist*1*\distClass*\w + \w,18*\h +6*\h);
\draw[fill = TealBlue, opacity = 0.6] (\dist*0*\w+\dist*1*\distClass*\w,18*\h +6*\h) rectangle (\dist*0*\w+\dist*1*\distClass*\w + \w,18*\h +3*\h +6*\h);

\draw[fill = Dandelion, opacity = 0.6] (\dist*1*\w+\dist*1*\distClass*\w,18*\h) rectangle (\dist*1*\w+\dist*1*\distClass*\w+\w,18*\h +5*\h);
\draw[fill = TealBlue, opacity = 0.6] (\dist*1*\w+\dist*1*\distClass*\w,18*\h +5*\h) rectangle (\dist*1*\w+\dist*1*\distClass*\w+\w,18*\h +3*\h +5*\h);
\draw[fill = Green, opacity = 0.6] (\dist*1*\w+\dist*1*\distClass*\w,18*\h +5*\h+3*\h) rectangle (\dist*1*\w+\dist*1*\distClass*\w+\w,18*\h +3*\h +5*\h +1*\h);

\draw[fill = Maroon, opacity = 0.6] (\dist*2*\w+\dist*1*\distClass*\w,18*\h) rectangle (\dist*2*\w+\dist*1*\distClass*\w+\w,18*\h +6*\h);
\draw[fill = Purple, opacity = 0.6] (\dist*2*\w+\dist*1*\distClass*\w,18*\h +6*\h) rectangle (\dist*2*\w+\dist*1*\distClass*\w+\w,18*\h +2*\h +6*\h);
\draw[fill = Green, opacity = 0.6] (\dist*2*\w+\dist*1*\distClass*\w,18*\h +5*\h+3*\h) rectangle (\dist*2*\w+\dist*1*\distClass*\w+\w,18*\h +2*\h +6*\h +1*\h);

\draw[fill = Cerulean, opacity = 0.6] (\dist*3*\w+\dist*1*\distClass*\w,18*\h) rectangle (\dist*3*\w+\dist*1*\distClass*\w+\w,18*\h +4*\h);
\draw[fill = TealBlue, opacity = 0.6] (\dist*3*\w+\dist*1*\distClass*\w,18*\h +4*\h) rectangle (\dist*3*\w+\dist*1*\distClass*\w+\w,18*\h +3*\h +4*\h);
\draw[fill = Purple, opacity = 0.6] (\dist*3*\w+\dist*1*\distClass*\w,18*\h +4*\h+3*\h) rectangle (\dist*3*\w+\dist*1*\distClass*\w+\w,18*\h +3*\h +4*\h +2*\h);

\draw[fill = Dandelion, opacity = 0.6] (\dist*4*\w+\dist*1*\distClass*\w,18*\h) rectangle (\dist*4*\w+\dist*1*\distClass*\w+\w,18*\h +5*\h);
\draw[fill = Cerulean, opacity = 0.6] (\dist*4*\w+\dist*1*\distClass*\w,18*\h +5*\h) rectangle (\dist*4*\w+\dist*1*\distClass*\w+\w,18*\h +5*\h +4*\h);

\draw[fill = TealBlue, opacity = 0.6] (\dist*5*\w+\dist*1*\distClass*\w,18*\h) rectangle (\dist*5*\w+\dist*1*\distClass*\w+\w,18*\h +3*\h);
\draw[fill = TealBlue, opacity = 0.6] (\dist*5*\w+\dist*1*\distClass*\w,18*\h +3*\h) rectangle (\dist*5*\w+\dist*1*\distClass*\w+\w,18*\h +3*\h +3*\h);
\draw[fill = TealBlue, opacity = 0.6] (\dist*5*\w+\dist*1*\distClass*\w,18*\h +3*\h+3*\h) rectangle (\dist*5*\w+\dist*1*\distClass*\w+\w,18*\h +3*\h +3*\h +3*\h);

\end{scope}

\end{tikzpicture}
}
\caption{On the left, there is a schematic representation of the configuration IP.
There are constant different sizes each occurring a super-constant number of times.
The sizes are directly mapped to configurations. 
On the right, there is a schematic representation of the MCIP. 
There is a super-constant number of classes, each containing a constant number of sizes which have super-constant multiplicities. 
The elements from the class are mapped to a constant number of different modules, which have a constant number of sizes. 
These module sizes are mapped to configurations.}
\label{fig:MCIP}
\end{figure}
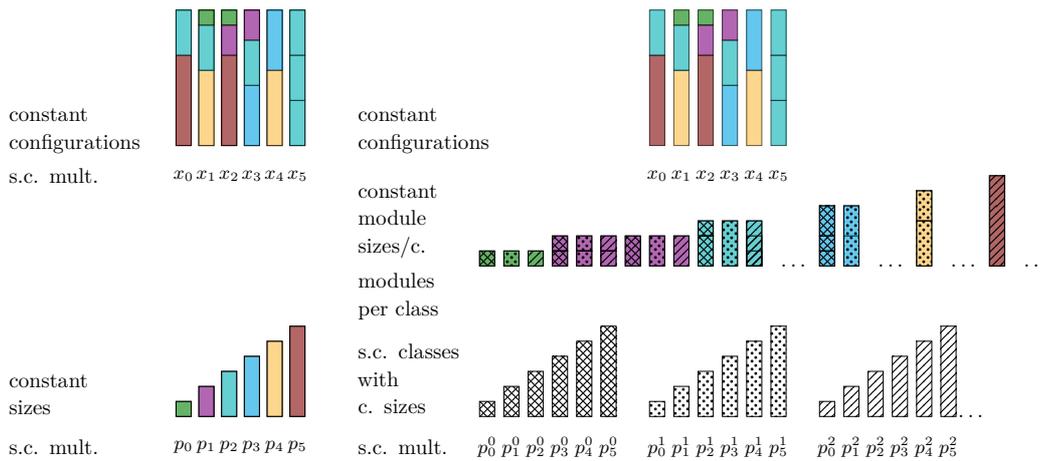

\subparagraph*{Results and Methodology.}

To show the conceptual power of the $\MCIP$, we utilize it for two more problems:
The \emph{splittable} and the \emph{preemptive} setup model of machine scheduling.
In both variants for each job $j$, a setup time $s_j$ is given.
Each job may be partitioned into multiple parts that can be assigned to different machines, but before any part of the job can be processed the setup time has to be paid.
In the splittable model, job parts belonging to the same job can be processed in parallel, and therefore beside the partition of the jobs, it suffices to find an assignment of the job parts to machines.
This is not the case for the preemptive model, in which additionally a starting time for each job part has to be found, and two parts of the same job may not be processed in parallel.
In 1999 Schuurman and Woeginger \cite{schuurman1999preemptive} presented a polynomial time algorithm for the preemptive model with approximation guarantee $4/3 + \eps$, and for the splittable case a guarantee of $5/3$ was achieved by Chen, Ye and Zhang \cite{chen2006lot}.
These are the best known approximation guarantees for the problems at hand.  
We show that solutions arbitrarily close to the optimum can be found in polynomial time:
\begin{theorem}\label{thm:main_result}
There is an efficient PTAS with running time $2^{f(1/\eps)}\poly(|I|)$ for minimum makespan scheduling on identical parallel machines in the setup-class model, as well as in the preemptive and splittable setup models. 
\end{theorem}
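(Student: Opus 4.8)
The plan is to treat all three models uniformly through the $\MCIP$, following a now-standard three-phase scheme: simplify the instance, encode the simplified instance as an $n$-fold integer program built from the $\MCIP$, solve that program, and lift the solution back to a schedule.

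\emph{Simplification.} First I would invoke the dual-approximation framework: by binary search it suffices, for a guessed target makespan $T$, to either produce a schedule of makespan at most $(1+\Oh(\eps))T$ or certify that none of makespan $T$ exists; after scaling assume $T=1$. I would classify jobs---and, in the setup-class model, the setup times---as \emph{big} or \emph{small} depending on whether their size is at least $\eps T$, and round every big processing time and every big setup time to the nearest power of $1+\eps$; this loses only a factor $1+\eps$ and leaves only $\Oh(\tfrac1\eps\log\tfrac1\eps)$ distinct big values. Small contributions are postponed: on each machine I reserve a block of empty space whose length equals a suitable rounding of the total small load placed there. These are exactly the reductions available for machine scheduling and for the setup-class model in \cite{alon1998approximation,JKV16ICALP,SetupPTAS2016}; in the splittable and preemptive settings one has to note in addition that every job carries a setup, so a job with tiny processing time can still count as big through its setup.

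\emph{The module configuration IP.} For each model I would fix a constant-size set of \emph{modules} describing the admissible local interaction of a single class (setup-class model) or a single job (splittable and preemptive models) with one machine, together with a constant-size set of \emph{configurations} assembled from module sizes plus reserved small-load space, of total length at most $(1+\Oh(\eps))T$. In the setup-class model a module is a setup value together with a configuration of big processing times scheduled behind that setup; in the splittable model a module is a setup value together with the length of the piece of a job placed on a machine; in the preemptive model a module must additionally record which of a constant number of time layers of the machine it occupies, so that two parts of one job can be forbidden to overlap---this extra bookkeeping is what forces a second exponentiation in the running time. The $\MCIP$ then has one block of integer variables per class (resp.\ per job) counting how often each module is used for it, plus global variables counting how often each configuration is used; the block constraints force the modules selected for a class to cover exactly its jobs, and the global constraints force the multiset of selected modules to equal the multiset of module slots offered by the chosen configurations and force exactly $m$ configurations to be chosen. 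Since the number of modules, the number of configurations, the number of block constraints and all matrix entries are bounded by a function of $1/\eps$ only (size-dependent entries are among the $\Oh(\tfrac1\eps\log\tfrac1\eps)$ rounded values, and $m\le n$ after discarding superfluous machines), this is an $n$-fold integer program in the sense of \cite{nfoldcubic}.

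\emph{Solving and lifting.} I would then apply the $n$-fold algorithms of \cite{Kim-n-fold2018,KLS-n-fold-2018}, whose running time is of the form $g(\Delta)\poly(n)$ with $\Delta$ bounding the block dimensions and the largest coefficient; here $g(\Delta)=2^{f(1/\eps)}$---singly exponential for the setup-class and splittable models, doubly exponential for the preemptive model---which yields the claimed bound $2^{f(1/\eps)}\poly(|I|)$. A feasible $\MCIP$ solution is turned into a schedule by opening the chosen configurations on distinct machines, placing into them the selected modules and hence the big jobs and their setups, and finally distributing the small jobs greedily into the reserved blocks; the rounding step and the small-job step each add at most $\Oh(\eps)T$ to the makespan, so the schedule has makespan $(1+\Oh(\eps))T$, and rescaling $\eps$ finishes the PTAS. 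Conversely, if the $\MCIP$ for a guessed $T$ is infeasible then no schedule of makespan $T$ exists, so the binary search is correct. The step I expect to be the main obstacle is the construction of the $\MCIP$ for the preemptive model: one must choose modules and configurations that simultaneously (i) faithfully encode that parts of the same job assigned to different machines do not run in parallel, (ii) keep the number of modules and configurations, and hence the block width of the $n$-fold program, bounded by a function of $1/\eps$ alone, and (iii) keep all coefficients small. A clean encoding---presumably through a constant number of time layers per machine and a careful account of how a job's parts may be spread across layers---is the technical core of the argument, whereas the setup-class and splittable models admit a considerably more direct treatment.
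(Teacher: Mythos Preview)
Your plan follows essentially the same route as the paper: dual approximation, simplification to a constant number of sizes, an $\MCIP$ solved as an $n$-fold IP, and a greedy lift. The module choices you sketch for each model---setup plus processing-time configuration for setup classes, setup plus piece length for the splittable model, and setup plus piece length plus occupied time layers for the preemptive model---match the paper's, and you correctly single out the preemptive case as the one causing a doubly-exponential $f$.

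One step fails as stated. The remark ``$m\le n$ after discarding superfluous machines'' is wrong in the splittable model: a job may profitably be split across arbitrarily many machines, $m$ is given in binary, and no machine can be discarded a priori. Then ``opening the chosen configurations on distinct machines'' and the greedy placement of small-setup jobs take $\Omega(m)$ time and the scheme is not polynomial in $|I|$. The paper fixes this (Section~\ref{sec:better_running_time}) by proving that an optimal schedule can be assumed \emph{simple}---at most $\binom{n}{2}$ machines are nontrivial---adding one extra global constraint to the $\MCIP$ to enforce this, and placing the small-setup jobs by computing arithmetically how many whole machines each one fills rather than iterating over machines.

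On the preemptive side you are right that the layered encoding is the crux, but the existence of a layered near-optimal schedule is not pure bookkeeping. After stretching by $(1+3\eps)$ one can align big-setup blocks to $\eps\delta T$-layer boundaries, but small- and medium-setup jobs remain spread fractionally across layers. The paper rounds this to an integral layer assignment by building a bipartite max-flow between jobs and layers (capacities coming from the fractional occupation), invoking flow integrality, and then repacking via machine swaps and repair pieces (Lemma~\ref{lem_prmt_simplification_3}). This flow-integrality argument is the concrete missing ingredient in your plan; the additional local constraints~(\ref{eq:MCIP_colis_preempt}) you anticipate are then exactly what the paper adds to the $\MCIP$ to forbid a job from using the same layer twice.
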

More precisely, we get a running time of $2^{\Oh(\nicefrac{1}{\eps^{3}}\log^4\nicefrac{1}{\eps})}K^2 n m \log (Km)$ in the setup class model, $2^{\Oh(\nicefrac{1}{\eps^2}\log^3\nicefrac{1}{\eps})}n^2\log^3 (nm)$ in the splittable, and $2^{2^{\Oh(\nicefrac{1}{\eps}\log\nicefrac{1}{\eps})}}n^2 m\log m\log (nm)$ in the preemptive model. 
Note, that all three problems are strongly NP-hard, due to trivial reductions from machine scheduling, and our results are therefore in some sense best possible.

Summing up, the main achievement of this work is the development of the module configuration IP and its application in the development of approximation schemes.
Up to now, EPTAS or even PTAS results seemed out of reach for the considered problems, and for the preemptive model we provide the first improvement in 20 years.
The simplification techniques developed for the splittable and preemptive model in order to employ the $\MCIP$ are original and in the latter case quite elaborate, and therefore interesting by themselfs.
Furthermore, we expect the $\MCIP$ to be applicable to other packing and scheduling problems as well, in particular for variants of machine scheduling and bin packing with additional class depended constraints.
On a more conceptual level, we gave a first demonstration of the potential of $n$-fold integer programming in the theory of approximation algorithms, and hope to inspire further studies in this direction. 

We conclude this paragraph with a more detailed overview of our results and their presentation.
For all three EPTAS results we employ the classical dual approximation framework by Hochbaum and Shmoys \cite{dualapprox} to get a guess of the makespan $T$.
This approach is introduced in Section \ref{sec:prelim} together with $n$-fold IPs and formal definitions of the problems.
In the following section, we develop the module configuration IP, in its basic form and argue that it is indeed an $n$-fold IP.
The EPTAS results follow the same basic approach described above for machine scheduling:
We find a schedule for a simplified instance via the $\MCIP$ and transform it into a schedule for the original one.
The simplification steps typically include rounding of the processing and setup times using standard techniques, as well as, the removal of certain jobs, which later can be reinserted via carefully selected greedy procedures.
For the splittable and preemptive model, we additionally have to prove that schedules with a certain simple structure exist, and in the preemptive model, the $\MCIP$ has to be extended.
In Section \ref{sec:EPTAS} the basic versions of the EPTAS are presented and in Section \ref{sec:better_running_time}, some improvements of the running time for the splittable and the setup class model are discussed.

\subparagraph*{Related work.}

For an overview on $n$-fold IPs and their applications, we refer to the book by Onn \cite{onn2010nonlinear}.
There have been recent applications of $n$-fold integer programming to scheduling problems in the context of parameterized algorithms:
Knop and Kouteck{\`y} \cite{knop2016scheduling} showed, among other things, that the problem of makespan minimization on unrelated parallel machines, where the processing times are dependent on both jobs and machines, is fixed-parameter tractable with respect to the maximum processing time and the number of distinct machine types.
This was generalized to the parameters maximum processing time and rank of the processing time matrix by Chen et al. \cite{chen2017parameterized}.
Furthermore, Knop, Kouteck{\`y} and Mnich \cite{KKM17} provided an improved algorithm for a special type of $n$-fold IPs yielding improved running times for several applications of $n$-fold IPs including results for scheduling problems.

There is extensive literature concerning scheduling problems with setup times.
We highlight a few closely related results and otherwise refer to the surveys \cite{allahverdi1999survey,allahverdi2008survey}.
In the following, we use the term $\alpha$-approximation as an abbreviation for polynomial time algorithms with approximation guarantee $\alpha$.
The setup class model was first considered by Mäcker et al. \cite{macker2015non} in the special case that all classes have the same setup time.
They designed a $2$-approximation and additionally a $3/2 + \eps$-approximation for the case that the overall length of the jobs from each class is bounded.
Jansen and Land \cite{SetupPTAS2016} presented a simple $3$-approximation with linear running time, a $2+\eps$-approximation, and the aforementioned PTAS for the general setup class model.
As indicated before, Chen et al. \cite{chen2006lot} developed a $5/3$-approximation for the splittable model.
A generalization of this, in which both setup and processing times are job and machine dependent, has been considered by Correa et al. \cite{correa2015strong}.
They achieve a $(1+\phi)$-approximation, where $\phi$ denotes the golden ratio, using a newly designed linear programming formulation.
Moreover, there are recent results concerning machine scheduling in the splittable model considering the sum of the (weighted) completion times as the objective function, e.g. \cite{schalekamp2015split,correa2016splitting}. 
For the preemptive model, a PTAS for the special case that all jobs have the same setup time has been developed by Schuurman and Woeginger \cite{schuurman1999preemptive}.
The mentioned $(4/3 +\eps)$-approximation for the general case \cite{schuurman1999preemptive} follows the same approach.
Furthermore, a combination of the setup class and the preemptive model has been considered, in which the jobs are scheduled preemptively, but the setup times are class dependent.
Monma and Potts \cite{monma1993analysis} presented, among other things, a $(2-1/(\floor{m/2}+1))$-approximation for this model, and later Chen \cite{chen1993better} achieved improvements for some special cases.

\section{Preliminaries}\label{sec:prelim}

In the following, we establish some concepts and notations, formally define the considered problems, and outline the dual approximation approach by Hochbaum and Shmoys \cite{dualapprox}, as well as $n$-fold integer programs.

For any integer $n$, we denote the set $\set{1,\dots, n}$ by $[n]$; we write $\log(\cdot)$ for the logarithm with basis $2$; and we will usually assume that some instance $I$ of the problem considered in the respective context is given together with an accuracy parameter $\eps \in (0,1)$ such that $1/\eps$ is an integer.
Furthermore for any two sets $X,Y$ we write $Y^X$ for the set of functions $f:X\rightarrow Y$. 
If $X$ is finite, we say that $Y$ is indexed by $X$ and sometimes denote the function value of $f$ for the argument $x\in X$ by $f_x$.

\subparagraph*{Problems.}

For all three of the considered models, a set $\jobs$ of $n$ jobs with processing times $p_j\in\QQ_{>0}$ for each job $j\in\jobs$ and a number of machines $m$ is given.
In the preemptive and the splittable model, the input additionally includes a setup time $s_j\in\QQ_{>0}$ for each job $j\in\jobs$; while in the setup class model, it includes a number $K$ of setup classes, a setup class $k_j\in[K]$ for each job $j\in\jobs$, as well as setup times $s_k\in\QQ_{>0}$ for each $k\in[K]$. 

We take a closer look at the definition of a schedule in the preemptive model.
The jobs may be split.
Therefore, partition sizes $\jcard: \jobs \rightarrow \ZZ_{>0}$, together with  processing time fractions $\jfrac_j:[\jcard(j)]\rightarrow(0,1]$, such that $\sum_{k\in[\jcard(j)]}\jfrac_j(k)=1$, have to be found, meaning that job $j$ is split into $\jcard(j)$ many parts and the $k$-th part for $k\in[\kappa(j)]$ has processing time $\jfrac_j(k)p_j$.
This given, we define $\jparts = \sett{(j,k)}{j\in\jobs,k\in[\jcard(j)]}$ to be the set of job parts.
Now, an assignment $\sigma:\jparts\rightarrow[m]$ along with starting times $\xi:\jparts\rightarrow\QQ_{>0}$ has to be determined, such that any two job parts assigned to the same machine or belonging to the same job do not overlap.
More precisely, we have to assure that for each two job parts $(j,k),(j',k')\in\jparts$ with $\sigma(j,k)=\sigma(j',k')$ or $j=j'$, we have $\xi(j,k)+ s_j + \lambda_j(k)p_j \leq \xi(j')$ or $\xi(j',k')+ s_{j'} + \lambda_{j'}(k)p_{j'} \leq \xi(j)$.
A schedule is given by $(\jcard,\jfrac,\sigma,\xi)$ and the makespan can be defined as $\Cmax = \max_{(j,k)\in\jparts} (\xi(j,k)+ s_j + \lambda_j(k)p_j)$.
Note that the variant of the problem in which overlap between a job part and setup of the same job is allowed is equivalent to the one presented above.
This was pointed out by Schuurmann and Woeginger \cite{schuurman1999preemptive} and can be seen with a simple swapping argument.

In the splittable model, it is not necessary to determine starting times for the job parts, because, given the assignment $\sigma$, the job parts assigned to each machine can be scheduled as soon as possible in arbitrary order without gaps.
Hence, in this case, the output is of the form $(\jcard,\jfrac,\sigma)$ and the makespan can be defined as $\Cmax = \max_{i\in[m]} \sum_{(j,k)\in\sigma^{-1}(i)} s_j + \jfrac_j(k)p_j$.

Lastly, in the setup class model the jobs are not split and given an assignment, the jobs assigned to each machine can be scheduled in batches comprised of the jobs of the same class assigned to the machine without overlaps and gaps.
The output is therefore just an assignment $\sigma:\jobs\rightarrow[m]$ and the makespan is given by $\Cmax = \max_{i\in[m]} \sum_{j\in\sigma^{-1}(i)} p_j + \sum_{k\in\sett{k_j}{j\in\sigma{-1}(i)}} s_k $.

Note, that in the preemptive and the setup class model, we can assume that the number of machines is bounded by the number of jobs:
If there are more machines than jobs, placing each job on a private machine yields an optimal schedule in both models and the remaining machines can be ignored.
This, however, is not the case in the splittable model, which causes a minor problem in the following.

\subparagraph*{Dual Approximation.}

All of the presented algorithms follow the dual approximation framework introduced by Hochbaum and Shmoys \cite{dualapprox}:
Instead of solving the minimization version of a problem directly, it suffices to find a procedure that for a given bound $T$ on the objective value either correctly reports that there is no solution with value $T$ or returns a solution with value at most $(1+ a \eps)T$ for some constant $a$.
If we have some initial upper bound $B$ for the optimal makespan $\Opt$ with $B\leq b\Opt$ for some $b$, we can define a PTAS by trying different values $T$ from the interval $[B/b,B]$ in a binary search fashion, and find a value $T^*\leq (1+\Oh(\eps))\Opt$ after $\Oh(\log b/\eps)$ iterations.
Note that for all of the considered problems constant approximation algorithms are known, and the sum of all processing and setup times is a trivial $m$-approximation.
Hence, we always assume that a target makespan $T$ is given.
Furthermore, we assume that the setup times and in the preemptive and setup class cases also the processing times are bounded by $T$, because otherwise we can reject $T$ immediately.

\subparagraph*{$n$-fold Integer Programs.}

We briefly define $n$-fold integer programs (IP) following the notation of \cite{nfoldcubic} and \cite{knop2016scheduling} and state the main algorithmic result needed in the following.
Let $n,r,s,t\in\ZZ_{>0}$ be integers and $A$ be an integer $((r + ns) \times nt)$-matrix of the following form:
\[A=
\begin{pmatrix}
  A_1    & A_1    & \cdots & A_1 \\
  A_2    & 0      & \cdots & 0 \\
  0      & A_2    & \cdots & 0 \\
  \vdots & \vdots & \ddots & \vdots \\
  0      & 0      & \cdots & A_2 
 \end{pmatrix}
\]
The matrix $A$ is the so-called $n$-fold product of the bimatrix $\binom{A_1}{A_2}$, with $A_1$ an $r\times t$ and $A_2$ an $s\times t$ matrix.
Furthermore, let $w,\ell,u\in\ZZ^{nt}$ and $b\in\ZZ^{r+ns}$.
Then the $n$-fold integer programming problem is given by:
\[\min\sett{wx}{Ax = b, \ell \leq x \leq u, x\in\ZZ^{nt}}\]
We set $\Delta$ to be the maximum absolute value occurring in $A$.
Up to recently the best known algorithm for solving $n$-fold IPs was due to Hemmecke, Onn and Romanchuk \cite{nfoldcubic}:
\begin{theorem}\label{thm:solving_n-fold_old}
Let $\varphi$ be the encoding length of $w$, $b$, $\ell$, $u$ and $\Delta$.
The $n$-fold integer programming problem can be solved in time $\Oh(\Delta^{3t(rs+st+r+s)}n^3\varphi)$, when $r$, $s$ and $t$ are fixed.
\end{theorem}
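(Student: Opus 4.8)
The plan is to prove this via the \emph{Graver-basis augmentation} framework, specialised to the block structure of $n$-fold products. Recall that the Graver basis $\mathcal{G}(A)$ of an integer matrix $A$ is the set of $\sqsubseteq$-minimal nonzero integer vectors of $\ker(A)$, where $u\sqsubseteq v$ means that $u,v$ lie in the same orthant and $|u_i|\le|v_i|$ for all $i$. Its decisive properties are that every integer kernel vector is a sign-compatible sum of Graver elements, and hence that whenever a feasible $x$ is not optimal there exist $g\in\mathcal{G}(A)$ and an integer $\gamma\ge 1$ with $x+\gamma g$ feasible and $wx>w(x+\gamma g)$. The algorithm is therefore: compute an initial feasible point (by solving an auxiliary $n$-fold IP with slack variables, which has an obvious feasible solution), and then repeatedly replace $x$ by $x+\gamma g$ for an approximately best such pair $(\gamma,g)$ until none improves.

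The first, and main, step is a structural bound on $\mathcal{G}(A)$ when $A$ is the $n$-fold product of $\binom{A_1}{A_2}$. Write a vector $x\in\ZZ^{nt}$ in \emph{bricks} $x=(x^{(1)},\dots,x^{(n)})$ with $x^{(i)}\in\ZZ^t$. The claim is that every $g\in\mathcal{G}(A)$ has at most $\gamma_0=\gamma_0(A_1,A_2)$ nonzero bricks, with $\gamma_0$ depending only on $\Delta,r,s,t$, and that each brick $g^{(i)}$ satisfies $A_2g^{(i)}=0$ and $\|g^{(i)}\|_\infty\le\Gamma$ for a bound $\Gamma$ again depending only on $\Delta,s,t$. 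The brick-norm bound is immediate: $g^{(i)}\in\ker(A_2)$ is a conformal sub-sum of $g$, hence itself a sign-compatible sum of elements of $\mathcal{G}(A_2)$, and one bounds $\max_{h\in\mathcal{G}(A_2)}\|h\|_\infty$ by $(t\Delta)^{\Oh(s)}$ via Cramer's rule and standard estimates on Graver bases of fixed matrices. The brick-count bound is the heart of the matter: given $g\in\mathcal{G}(A)$ with nonzero bricks $g^{(i_1)},\dots,g^{(i_k)}$, decompose each conformally over $\mathcal{G}(A_2)$; the top block $\sum_jA_1g^{(i_j)}=0$ turns the multiplicity vector of the resulting pieces into a nonnegative integer kernel vector of the fixed, finite matrix whose columns are the vectors $A_1h$ for $h\in\mathcal{G}(A_2)$. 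If $k$ exceeds the maximum $\ell_1$-norm of a Graver element of that matrix, a Graver sub-element selects a proper conformal sub-sum of $g$ that still lies in $\ker(A)$, contradicting minimality. Tracking the norm bounds through this composition yields an estimate of the form $\gamma_0\le\Delta^{\Oh(t(rs+st+r+s))}$.

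The second step is to find a good augmenting step by dynamic programming across the bricks. Fix a step length $\gamma\in\{1,2,4,\dots\}$ up to $\|u-\ell\|_\infty$; we search for $g\in\ker(A)$ with at most $\gamma_0$ nonzero bricks, $\|g^{(i)}\|_\infty\le\Gamma$, and $\ell\le x+\gamma g\le u$, minimising $wg$. Processing bricks $1,\dots,n$ from left to right, the dynamic-programming state after brick $i$ is the running partial sum $\sum_{i'\le i}A_1g^{(i')}\in\ZZ^r$, which — because only $\gamma_0$ bricks contribute, each a vector of $\ell_\infty$-norm at most $t\Delta\Gamma$ — ranges over only $\Delta^{\Oh(rt\cdot\poly(s))}$ many values; at each brick one enumerates the $\Oh((2\Gamma+1)^t)$ candidate local vectors with $A_2g^{(i)}=0$ and $\ell^{(i)}\le x^{(i)}+\gamma g^{(i)}\le u^{(i)}$ and updates the table. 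Taking the best over the $\Oh(\varphi)$ choices of $\gamma$ yields the desired step or a certificate of optimality. A standard geometric-improvement argument shows that polynomially many such augmentations (roughly $\Oh(nt\varphi)$) reach the optimum from any feasible start; multiplying the per-iteration cost by this bound and adding the analogous cost of the feasibility phase gives a running time of the stated form $\Oh(\Delta^{3t(rs+st+r+s)}n^3\varphi)$ for fixed $r,s,t$.

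I expect the brick-count bound of the first step to be the genuine obstacle: the mere existence of a finite Graver complexity independent of $n$ is not obvious, and extracting an explicit polynomial-in-$\Delta$ exponent requires careful norm estimates on the Graver bases of the fixed matrices $A_2$ and $[\,A_1h\,]_{h\in\mathcal{G}(A_2)}$. Once that is in place, the dynamic program and the iteration count amount to routine bookkeeping of the resulting bounds.
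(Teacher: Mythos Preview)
The paper does not prove this theorem at all: it is quoted verbatim as a result of Hemmecke, Onn and Romanchuk \cite{nfoldcubic}, so there is no ``paper's own proof'' to compare against. What you have sketched is, in outline, precisely the argument of that reference --- Graver-basis augmentation, a structural bound on Graver elements of the $n$-fold product via the Graver complexity of the bimatrix, a brick-wise dynamic program over the bounded partial sums $\sum_{i'\le i}A_1g^{(i')}$, and a geometric-improvement bound on the number of augmentations --- and your identification of the brick-count (Graver complexity) bound as the crux is accurate.

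One small wrinkle in your write-up: the bound $\|g^{(i)}\|_\infty\le\Gamma$ on an individual brick does \emph{not} follow merely from ``$g^{(i)}$ is a conformal sum of elements of $\mathcal{G}(A_2)$'' together with a norm bound on $\mathcal{G}(A_2)$, since a priori a single brick could absorb arbitrarily many such summands. The correct order is the one you give in the second half of your first step: bound the \emph{total} number of $\mathcal{G}(A_2)$-pieces across all bricks via the Graver basis of the auxiliary matrix with columns $A_1h$, $h\in\mathcal{G}(A_2)$; this simultaneously bounds the number of nonzero bricks and the $\ell_1$-norm (hence $\ell_\infty$-norm) of each brick. With that reordering your plan is sound and matches the cited source.
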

However, in 2018 both Eisenbrand, Hunkenschröder and Klein \cite{KLS-n-fold-2018} and independently Koutecký, Levin and Onn \cite{KLS-n-fold-2018} developed algorithms with improved and very similar running times.
We state a variant due to Eisenbrand et al. that is adapted to our needs:
\begin{theorem}\label{thm:solving_n-fold}
Let $\varphi$ be the encoding length of the largest number occurring in the input, and $\Phi = \max_i(u_i-\ell_i)$.
The $n$-fold integer programming problem can be solved in time $(rs\Delta)^{\Oh(r^2s+rs^2)}t^2n^2\varphi\log(\Phi)\log(nt\Phi)$.
\end{theorem}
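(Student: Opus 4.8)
The plan is to prove the statement by following the iterative augmentation paradigm for $n$-fold IPs, rather than any branching or rounding approach. Starting from a feasible integral point, one repeatedly adds an integral step $y$ in the kernel of $A$ (so that the equality constraints stay satisfied) that improves the objective and respects the bounds $\ell \le x+y \le u$, until no improving step exists; by the theory of Graver bases such a point is then optimal. The two quantities that must be controlled are (i) the cost of computing a sufficiently good augmenting step and (ii) the number of augmentation rounds, and both are handled by exploiting the block structure of $A$: the $n$ bricks are coupled only through the $r$ top rows, whose entries have absolute value at most $\Delta$, and each brick contributes only $t$ columns and $s$ local rows.

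First I would reduce the optimization problem to itself over a feasible instance: an initial feasible solution, or a certificate that none exists, is obtained by solving an auxiliary $n$-fold IP of the same shape augmented with slack variables and the obvious objective. Next comes the structural core, namely the bound on the Graver basis of the $n$-fold product $A$: every conformally minimal kernel element $g$ of $A$ has $\ell_1$-norm (hence $\ell_\infty$-norm) at most a bound of the form $(rs\Delta)^{\Oh(rs)}$ that is \emph{independent of $n$}. This sharpens the Hemmecke--Onn--Romanchuk bound via a Steinitz-type rearrangement argument: if $g$ had too many nonzero bricks, the partial sums of the local contributions of these bricks to the $r$ shared rows would repeat inside a box of controlled radius, and the segment between two equal partial sums would yield a strictly smaller kernel element of the same sign pattern, contradicting minimality; within the bounded number of active bricks the norm is then controlled by the usual single-block Graver bound.

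Given this norm bound, I would compute an (approximately) best augmenting step of prescribed step length $\lambda$ by dynamic programming over the bricks: search for $y$ in the kernel with $\lVert y \rVert_\infty$ below the Graver bound and $\ell \le x+\lambda y \le u$, minimizing $wy$. Scanning the bricks from left to right, the DP state is the running vector of partial sums in the $r$ shared rows, which ranges over a box of size $(\Delta \cdot \text{Graver bound})^{\Oh(r)}$, and for each brick one enumerates its admissible local vectors. Organizing this carefully yields one augmentation step in time $(rs\Delta)^{\Oh(r^2 s+rs^2)} t^{\Oh(1)} n$. Finally, the number of rounds is bounded by a bit-scaling argument: iterating $\lambda$ over powers of two and always taking a Graver-best step for the current $\lambda$ makes the gap $wx - wx^\ast$ shrink geometrically, so $\Oh(nt\,\varphi)$ rounds, times a $\log(\Phi)\log(nt\Phi)$ factor absorbing the scaling and halving, suffice; multiplying per-round cost by the number of rounds gives the claimed running time.

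The main obstacle is the interplay between the two quantitative ingredients. One needs the Graver norm bound to be polynomial in $\Delta, r, s, t$ and \emph{free of $n$}, so that the DP state space does not blow up, and simultaneously needs the augmentation to converge in a number of rounds that is polynomial in $n$ and $t$ and only logarithmic in $\Phi$. Getting the norm bound right — in particular pushing the exponent down to $\Oh(r^2 s + r s^2)$ through the Steinitz lemma instead of the weaker bound coming from iterated Graver-complexity estimates — is the technically delicate part; once that is in place, the dynamic program and the bit-scaling convergence analysis are comparatively routine.
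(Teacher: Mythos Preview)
The paper does not prove this theorem at all: it is quoted as an external result due to Eisenbrand, Hunkenschr\"oder and Klein (and, independently, Kouteck\'y, Levin and Onn), and is used as a black box throughout. There is therefore nothing in the paper to compare your argument against.

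That said, your sketch is a faithful high-level outline of the proof in the cited Eisenbrand--Hunkenschr\"oder--Klein work: the Steinitz-lemma argument giving an $n$-free bound on the $\ell_\infty$-norm of Graver-basis elements, the brick-by-brick dynamic program whose state is the partial sum in the $r$ global rows, and the bit-scaling bound on the number of augmentation rounds are exactly the three ingredients of that paper. So your proposal is not wrong, but it reconstructs a result that the present paper merely imports; for the purposes of this paper a one-line citation suffices.
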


The variables $x$ can naturally be partitioned into \emph{bricks} $x^{(q)}$ of dimension $t$ for each $q\in [n]$, such that $x=(x^{(1)},\dots x^{(n)})$.
Furthermore, we denote the constraints corresponding to $A_1$ as \emph{globally uniform} and the ones corresponding to $A_2$ as \emph{locally uniform}.
Hence, $r$ is the number of globally and $s$ the number of locally uniform constraints (ignoring their $n$-fold duplication); $t$ the \emph{brick size} and $n$ the \emph{brick number}.

\section{Module Configuration IP}\label{sec:ILP}

In this section, we state the configuration IP for machine scheduling; introduce a basic version of the module configuration IP (MCIP) that is already sufficiently general to work for both the splittable and setup class model; and lastly show that the configuration IP can be expressed by the $\MCIP$ in multiple ways.
Before that, however, we formally introduce the concept of \emph{configurations}.

Given a set of objects $A$, a configuration $C$ of these objects is a vector of multiplicities indexed by the objects, i.e., $C\in\ZZ_{\geq 0}^A$.
For given sizes $\size(a)$ of the objects $a\in A$, the size $\size(C)$ of a configuration $C$ is defined as $\sum_{a\in A}C_a\size(a)$. 
Moreover, for a given bound $B$, we define $\confs_A(B)$ to be the set of configurations of $A$ that are bounded in size by $B$, that is, $\confs_A(B) = \sett{C\in\ZZ_{\geq 0}^A}{\size(C)\leq B}$.

\subparagraph*{Configuration IP.}

We give a recollection of the configuration IP for scheduling on identical parallel machines.
Let $P$ be the set of distinct processing times for some instance $I$ with multiplicities $n_p$ for each $p\in P$, meaning, $I$ includes exactly $n_p$ jobs with processing time $p$.
The size $\size(p)$ of a processing time $p$ is given by itself.
Furthermore, let $T$ be a guess of the optimal makespan.
The configuration IP for $I$ and $T$ is given by variables $x_C\geq 0$ for each $C\in \confs_P(T)$ and the following constraints:
\begin{align}
\sum_{C\in\confs_P(T)}x_{C} & = m & \label{eq:CILP_machs} \\ 
\sum_{C\in\confs_P(T)} C_p x_{C} & = n_p & \forall p\in P\label{eq:CILP_jobs} 
\end{align}
Due to constraint (\ref{eq:CILP_machs}), exactly one configuration is chosen for each machine, while (\ref{eq:CILP_jobs}) ensures that the correct number of jobs or job sizes is covered.

\subparagraph*{Module Configuration IP.}

Let $\bobs$ be a set of basic objects (e.g. jobs or setup classes) and let there be $D$ integer values $B_1,\dots,B_D$ for each basic object $B\in\bobs$ (e.g. processing time or numbers of different kinds of jobs).
Our approach is to cover the basic objects with so-called \emph{modules} and in turn cover the modules with configurations.
Depending on the context, modules correspond to batches of jobs or job piece sizes together with a setup time and can also encompass additional information like a starting time.
Let $\mods$ be a set of such modules.
In order to cover the basic objects, each module $M\in\mods$ also has $D$ integer values $M_1,\dots,M_D$.
Furthermore, each module $M$ has a size $\size(M)$ and a set of eligible basic objects $ \bobs(M)$.
The latter is needed because not all modules are compatible with all basic objects, e.g., because they do not have the right setup times.
The configurations are used to cover the modules, however, it typically does not matter which module exactly is covered, but rather which size the module has.
Let $\msizes$ be the set of distinct module sizes, i.e., $\msizes = \sett{\size(M)}{M\in\mods}$, and for each module size $h\in\msizes$ let $\mods(h)$ be the set of modules with size $h$.
We consider the set $\confs$ of configurations of module sizes which are bounded in size by a guess of the makespan $T$, i.e., $\confs=\confs_\msizes(T)$.
In the preemptive case configurations need to additionally encompass information about starting times of modules, and therefore the definition of configurations will be slightly more complicated in that case.

Since we want to chose configurations for each machine, we have variables $x_C$ for each $C\in\confs$ and constraints corresponding to (\ref{eq:CILP_machs}).
Furthermore, we chose modules with variables $y_M$ for each $M\in\mods$ and because we want to cover the chosen modules with configurations, we have some analogue of constraint (\ref{eq:CILP_jobs}), say $\sum_{C\in\confs(T)} C_{h} x_{C}  = \sum_{M\in \mods(h)}y_M $ for each module size $h \in\msizes$.
It turns out however, that to properly cover the basic objects with modules, we need the variables $y_M$ for each basic object, and this is were $n$-fold IPs come into play: 
The variables stated so far form a brick of the variables of the $n$-fold IP and there is one brick for each basic object, that is, we have, for each $B\in\bobs$, variables $x^{(B)}_C$ for each $C\in\confs$, and $y^{(B)}_M$ for each $M\in\mods$.
Using the upper bounds of the $n$-fold model, variables $y^{(B)}_M$ are set to zero, if $B$ is not eligible for $M$; and we set the lower bounds of all variables to zero.
Sensible upper bounds for the remaining variables, will be typically clear from context.
Besides that, the module configuration integer program $\MCIP$ (for $\bobs$, $\mods$ and $\confs$) is given by:
\begin{align}
\sum_{B\in\bobs}\sum_{C\in\confs}x^{(B)}_{C} & = m & \label{eq:MCIP_machs} \\ 
\sum_{B\in\bobs}\sum_{C\in\confs(T)} C_h x^{(B)}_{C} & = \sum_{B\in\bobs}\sum_{M\in \mods(h)}y^{(B)}_M & \forall h \in\msizes \label{eq:MCIP_mods}\\
\sum_{M\in\mods}M_d y^{(B)}_M & = B_d & \forall B\in\bobs, d\in[D] \label{eq:MCIP_bobs}
\end{align}
It is easy to see that the constraints (\ref{eq:MCIP_machs}) and (\ref{eq:MCIP_mods}) are globally uniform.
They are the mentioned adaptations of (\ref{eq:CILP_machs}) and (\ref{eq:CILP_jobs}).
The constraint (\ref{eq:MCIP_bobs}), on the other hand, is locally uniform and ensures that the basic objects are covered.

Note that, while the duplication of the configuration variables does not carry meaning, it also does not upset the model: 
Consider the modified $\MCIP$ that is given by not duplicating the configuration variables.
A solution $(\tilde{x}, \tilde{y})$ for this IP gives a solution $(x,y)$ for the $\MCIP$ by fixing some basic object $B^*$, setting $x^{(B^*)}_C = \tilde{x}_C$ for each configuration $C$, setting the remaining configuration variables to $0$, and copying the remaining variables.
Given a solution $(x,y)$ for the $\MCIP$, on the other hand, gives a solution for the modified version $(\tilde{x}, \tilde{y})$ by setting $\tilde{x}_C = \sum_{B\in\bobs}x^B_C$ for each configuration $C$.
Summarizing we get:

\begin{observation}\label{rem:MCIP}
The $\MCIP$ is an $n$-fold IP with brick-size $t = |\mods| + |\confs|$, brick number $n = |\bobs|$, $r = |\msizes|+1$ globally uniform and $s = D$ locally uniform constraints.
\end{observation}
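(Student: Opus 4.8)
The plan is to verify directly that the $\MCIP$ as written---constraints \eqref{eq:MCIP_machs}, \eqref{eq:MCIP_mods}, \eqref{eq:MCIP_bobs}, together with the stated lower and upper bounds on the variables---matches the template of an $n$-fold IP given before Theorem~\ref{thm:solving_n-fold_old}, by exhibiting the bimatrix $\binom{A_1}{A_2}$ and reading off $n$, $r$, $s$, $t$. First I would fix, for each basic object $B\in\bobs$, the brick $x^{(B)} = \big((x^{(B)}_C)_{C\in\confs},\,(y^{(B)}_M)_{M\in\mods}\big)$, so that the brick size is $t = |\confs| + |\mods|$ and the brick number is $n = |\bobs|$; the overall variable vector is the concatenation of these $n$ bricks, exactly as in the definition where $x = (x^{(1)},\dots,x^{(n)})$.

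Next I would identify the two constraint blocks. The globally uniform block, encoded by $A_1$, consists of \eqref{eq:MCIP_machs} and the $|\msizes|$ equations \eqref{eq:MCIP_mods}; note both are sums over all $B\in\bobs$ of an identical linear form in the brick $x^{(B)}$, which is precisely the structure "$A_1\,A_1\,\cdots\,A_1$" acting on the concatenated vector. Concretely, the row of $A_1$ for \eqref{eq:MCIP_machs} has a $1$ in each $x_C$-coordinate and $0$ elsewhere, and the row for module size $h$ has entry $C_h$ in coordinate $x_C$ and $-1$ in coordinate $y_M$ for each $M\in\mods(h)$, giving $r = |\msizes| + 1$ globally uniform rows. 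The locally uniform block $A_2$ is the constraint \eqref{eq:MCIP_bobs}, which for a single object $B$ reads $\sum_{M}M_d\,y^{(B)}_M = B_d$ for $d\in[D]$ and involves only the variables of brick $x^{(B)}$; this is one $s\times t$ matrix with $s = D$ rows (entry $M_d$ in coordinate $y_M$, zeros in the $x_C$-coordinates), repeated once per brick down the block-diagonal. The right-hand side is $b = (m,\,0,\dots,0,\,(B_d)_{B,d})$, and the bounds $\ell\le x\le u$ are exactly the "set $y^{(B)}_M = 0$ when $B\notin\bobs(M)$, all lower bounds $0$, sensible upper bounds" prescription, so every hypothesis of the $n$-fold template is met.

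Finally I would record that the paragraph immediately preceding the statement already argues the harmless nature of duplicating the configuration variables, so no further work is needed there; the observation is purely a bookkeeping summary of the construction. I do not expect a genuine obstacle: the only thing to be careful about is making sure the count is stated for the bimatrix $\binom{A_1}{A_2}$ rather than for the full $n$-fold product $A$ (so $r$, $s$, $t$ do not pick up a factor of $n$), and that \eqref{eq:MCIP_mods} is correctly classified as globally uniform despite its right-hand side being a variable expression---this is fine because after moving the $y$-terms to the left it becomes a homogeneous equation whose coefficient pattern is identical across all bricks. Hence the mild "main obstacle" is merely notational discipline in lining up coordinates, and the proof is a direct check.
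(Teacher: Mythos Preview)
Your proposal is correct and matches the paper's approach: the paper does not give a separate proof of this observation but simply notes in the preceding paragraphs that constraints \eqref{eq:MCIP_machs} and \eqref{eq:MCIP_mods} are globally uniform while \eqref{eq:MCIP_bobs} is locally uniform, and then states the observation as a summary. Your write-up is just a more explicit spelling-out of that same bookkeeping, including the point about moving the $y$-terms in \eqref{eq:MCIP_mods} to the left to see the homogeneous globally uniform structure.
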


Moreover, in all the considered applications we will minimize the overall size of the configurations, i.e., $\sum_{B\in\bobs}\sum_{C\in\confs}\size(C)x^{(B)}_{C}$.
This will be required, because in the simplification steps of our algorithms some jobs are removed and have to be reinserted later, and we therefore have to make sure that no space is wasted.

\subparagraph*{First Example.}

We conclude the section by pointing out several different ways to replace the classical configuration IP for scheduling on identical machines with the $\MCIP$, thereby giving some intuition for the model.
The first possibility is to consider the jobs as the basic objects and their processing times as their single value ($\bobs= \jobs$, $D=1$); the modules are the processing times ($\mods = P$), and a job is eligible for a module, if its processing time matches; and the configurations are all the configurations bounded in size by $T$.
Another option is to chose the processing times as basic objects, keeping all the other definitions essentially like before.
Lastly, we could consider the whole set of jobs or the whole set of processing times as a single basic object with $D=|P|$ different values.
In this case, we can define the set of modules as the set of configurations of processing times bounded by $T$.

\section{EPTAS results}\label{sec:EPTAS}

In this section, we present approximation schemes for each of the three considered problems.
Each of the results follows the same approach:
The instance is carefully simplified, a schedule for the simplified instance is found using the $\MCIP$, and this schedule is transformed into a schedule for the original instance.
The presentation of the result is also similar for each problem:
We first discuss how the instance can be sensibly simplified, and how a schedule for the simplified instance can be transformed into a schedule for the original one.
Next, we discuss how a schedule for the simplified instance can be found using the $\MCIP$, and lastly, we summarize and analyze the taken steps.

For the sake of clarity, we have given rather formal definitions for the problems at hand in Section \ref{sec:prelim}.
In the following, however, we will use the terms in a more intuitive fashion for the most part, and we will, for instance, often take a geometric rather than a temporal view on schedules and talk about the \emph{length} or the \emph{space} taken up by jobs and setups on machines rather than time.
In particular, given a schedule for an instance of any one of the three problems together with an upper bound for the makespan $T$, the \emph{free space} with respect to $T$ on a machine is defined as the summed up lengths of time intervals between $0$ and $T$ in which the machine is idle.
The free space (with respect to $T$) is the summed up free space of all the machines.
For bounds $T$ and $L$ for the makespan and the free space, we say that a schedule is a $(T,L)$-schedule if its makespan is at most $T$ and the free space with respect to $T$ is at least $L$.

When transforming the instance we will increase or decrease processing and setup times and fill in or remove extra jobs.
Consider a $(T',L')$-schedule, where $T'$ and $L'$ denote some arbitrary makespan or free space bounds.
If we fill in extra jobs or increase processing or setup times, but can bound the increase on each machine by some bound $b$, we end up with a $(T' + b,L')$-schedule for the transformed instance.
In particular we have the same bound for the free space, because we properly increased the makespan bound.
If, on the other hand, jobs are removed or setup times decreased, we obviously still have a $(T',L')$-schedule for the transformed instance.
This will be used frequently in the following.

\subsection{Setup Class Model}

We start with the setup class model.
In this case, we can essentially reuse the simplification steps that were developed by Jansen and Land \cite{SetupPTAS2016} for their PTAS.
The main difference between the two procedures is that we solve the simplified instance via the $\MCIP$, while they used a dynamic program.
For the sake of self-containment, we include our own simplification steps, but remark that they are strongly inspired by those from \cite{SetupPTAS2016}.
In Section \ref{sec:better_running_time} we give a more elaborate rounding procedure resulting in an improved running time.

\subparagraph*{Simplification of the Instance.}

In the following, we distinguish \emph{big setup} jobs $j$ jobs belonging to classes $k$ with setup times $s_k \geq \eps^3 T$ and \emph{small setup} jobs with $s_k < \eps^3 T$.
We denote the corresponding subsets of jobs by $\bsjobs$ and $\ssjobs$ respectively.
Furthermore, we call a job \emph{tiny} or \emph{small}, if its processing time is smaller than $\eps^4 T$ or $\eps T$ respectively, and \emph{big} or \emph{large} otherwise.
For any given set of jobs $J$, we denote the subset of tiny jobs from $J$ with $J_\ptiny$ and the small, big and large jobs analogously.
We simplify the instance in four steps, aiming for an instance that exclusively includes big jobs with big setup times and additionally only a constant number of distinct processing and setup times. 
For technical reason we assume $\eps \leq 1/2$.

We proceed with the first simplification step.
Let $I_1$ be the instance given by the job set $\jobs\setminus\ssjobs_\psmall$ and $Q$ the set of setup classes completely contained in $\ssjobs_\psmall$, i.e., $Q = \sett{k}{\forall j\in\jobs: k_j = k \Rightarrow j\in\ssjobs_\psmall }$.
An obvious lower bound on the space taken up by the jobs from $\ssjobs_\psmall$ in any schedule is given by $L=\sum_{j\in\ssjobs_\psmall} p_j + \sum_{k\in Q} s_k$.
Note that the instance $I_1$ may include a reduced number $K'$ of setup classes.
\begin{lemma}\label{lem:sclass_rounding1}
A schedule for $I$ with makespan $T$ induces a $(T,L)$-schedule for $I_{1}$, that is, a schedule with makespan $T$ and free space at least $L$; and any $(T',L)$-schedule for $I_1$ can be transformed into a schedule for $I$ with makespan at most $(1+\eps)T'+ 2\eps^3 T$.
\end{lemma}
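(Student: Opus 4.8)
The plan is to prove the two directions of the equivalence separately, exploiting the general principle stated just before the lemma: filling in jobs or increasing processing/setup times by at most $b$ on each machine turns a $(T',L')$-schedule into a $(T'+b,L')$-schedule, while removing jobs or shrinking setup times preserves a $(T',L')$-schedule.

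For the first direction, let $\sigma$ be a schedule for $I$ with makespan $T$. Since $I_1$ is obtained from $I$ by deleting the job set $\ssjobs_\psmall$, restricting $\sigma$ to $\jobs\setminus\ssjobs_\psmall$ yields a valid schedule for $I_1$ — here one must observe that deleting jobs can only remove setups: a setup for class $k$ is paid on a machine $i$ exactly when some job of class $k$ is assigned to $i$, so the makespan does not increase, and it stays at most $T$. For the free-space claim, I would argue that in $\sigma$ the jobs of $\ssjobs_\psmall$ together with the setups of the classes in $Q$ occupy at least $L = \sum_{j\in\ssjobs_\psmall} p_j + \sum_{k\in Q} s_k$ units of time inside $[0,T]$. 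The point is that the processing times $p_j$ for $j \in \ssjobs_\psmall$ contribute their full length, and each class $k\in Q$ — being \emph{entirely} contained in $\ssjobs_\psmall$ — has all of its setups among the removed material, so its setup time $s_k$ is counted on every machine that runs a job of class $k$ (at least once, possibly more, which only helps). All of this occupied length lies within $[0,T]$ since $\sigma$ has makespan $T$, and none of it survives in the restriction to $I_1$; hence the restricted schedule has at least $L$ free space with respect to $T$, i.e. it is a $(T,L)$-schedule for $I_1$.

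For the converse, take a $(T',L)$-schedule for $I_1$. I need to reinsert the jobs of $\ssjobs_\psmall$, which are precisely the small-processing-time jobs ($p_j < \eps T$) with small setup time ($s_k < \eps^3 T$). The strategy is a greedy filling argument: process the removed jobs class by class, and for each class assign its jobs greedily into the free space of the machines, opening the class's setup (of length $< \eps^3 T$) on a machine the first time a job of that class is placed there, and moving to the next machine whenever the current one fills up. Because the total volume to be inserted — jobs plus the newly needed setups for classes in $Q$ — is at most $L$ plus the extra setups incurred for classes not in $Q$ (those that already appear in $I_1$), and the free space is at least $L$, the greedy procedure can place everything while overshooting the makespan bound $T'$ on any machine by at most the size of a single job plus a single setup that does not fit, i.e. by at most $\eps T + \eps^3 T \le \eps T' + \eps^3 T$ (using $T \le T'$, which holds because rejecting $T$ would have been detected; alternatively one argues directly with $T$). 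Actually the cleaner accounting: on a machine that overflows, all but the last inserted job-and-setup pair fit within the available space, so the excess is bounded by $\eps T + \eps^3 T$ on that machine; applying the general increase principle gives a schedule of makespan at most $T' + \eps T + \eps^3 T$. To reach the claimed bound $(1+\eps)T' + 2\eps^3 T$ one has to be slightly more careful about the setups: a class in $Q$ might need its setup opened on several machines during the greedy fill, and each re-opening costs an extra $\eps^3 T$; bounding the number of "overflow" machines per class and folding the $\eps T'$ term in to absorb the job-length slack yields the stated estimate, with the $2\eps^3 T$ absorbing both the last-setup slack and one extra re-opening.

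The main obstacle is the second direction's bookkeeping, specifically controlling how much \emph{extra} setup time is created when the small-setup-small-job classes are reinserted greedily and a class gets spread over more machines than it occupied originally. The volume bound $L$ accounts for one setup per class in $Q$, but the greedy fill may pay that setup on two or more machines, and classes \emph{not} in $Q$ (which survive in $I_1$) may also acquire new machines and hence new setups. The resolution is that each such extra setup is tiny ($< \eps^3 T$), and one can arrange the greedy procedure so that at most one machine per class is "partially filled" (all others are filled to capacity), so the number of extra setups is at most the number of machines that end up overfull, which in turn is controlled because each overfull machine absorbs at least, say, $\eps^3 T$ of genuine volume beyond $T'$ — making the $\eps T'$ slack and the additive $2\eps^3 T$ sufficient to cover the total overshoot. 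I would present this as: bound the per-machine overshoot by $\eps T + \eps^3 T$ from the last job-setup pair, note $\eps T \le \eps T'$, and separately observe that the newly created setups beyond those counted in $L$ number at most the machine count needed, each of size $< \eps^3 T$, and a short amortization over the free space shows the aggregate fits, leaving a clean additive $2\eps^3 T$ in the makespan bound.
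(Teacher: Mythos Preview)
Your first direction is fine and matches the paper's (which simply calls it obvious).

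The second direction has a genuine gap. You never bound the number of \emph{distinct classes} whose jobs end up on a single machine, and without that you cannot bound the total setup overhead on that machine. Your per-machine overshoot estimate ``$\eps T + \eps^3 T$ from the last job-setup pair'' tacitly assumes only one new setup is opened per machine, but a greedy class-by-class fill can place many tiny classes on the same machine, each requiring its own setup of size up to $\eps^3 T$. Your later amortization goes in the wrong direction: you argue that at most one machine per class is partially filled, but the quantity that matters is classes per machine, not machines per class. Also, the step ``$\eps T \le \eps T'$ since $T\le T'$'' is not part of the lemma's hypotheses.

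The paper's proof fixes exactly this by introducing an intermediate threshold $\eps^2 T$. Classes whose total small-job length is at most $\eps^2 T$ are handled separately: if such a class still has a large job in $I_1$, the small jobs are dumped on a machine already running that large job (so no new setup is needed, and since there are at most $T'/(\eps T)$ large jobs per machine the added load is at most $\eps T'$); if the class lies entirely in $Q$, the whole class plus one setup is wrapped into a container of size at most $(\eps^2+\eps^3)T\le \eps T$. The containers together with the remaining classes (each of total length at least $\eps^2 T$) are then inserted by next-fit, ordered by class, into the free space. The key point is that now every item in the sequence that needs a fresh setup represents at least $\eps^2 T$ of volume, so at most $T'/(\eps^2 T)+2$ distinct classes without a setup can land on any machine; paying their setups costs at most $(T'/(\eps^2 T)+2)\eps^3 T = \eps T' + 2\eps^3 T$, which is precisely the claimed bound.
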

\begin{proof}
The first claim is obvious and we therefore assume that we have a $(T',L)$-schedule for $I_1$.
We group the jobs from $\ssjobs_\psmall$ by setup classes and first consider the groups with summed up processing time at most $\eps^2 T$.
For each of these groups we check whether the respective setup class contains a large job.
If this is the case, we schedule the complete group on a machine on which such a large job is already scheduled if possible using up free space.
Since the large jobs have a length of at least $\eps T$, there are at most $T'/(\eps T)$ many large jobs on each machine and therefore the schedule on the respective machine has length at most $(1+\eps)T'$ or there is free space with respect to $T'$ left.
If, on the other hand, the respective class does not contain a large job and is therefore fully contained in $\ssjobs_\psmall$, we create a container including the whole class and its setup time.
Note that the overall length of the container is at most $(\eps^2+\eps^3)T\leq \eps T$ (using $\eps \leq 1/2$).
Next, we create a sequence containing the containers and the remaining jobs ordered by setup class.
We insert the items from this sequence greedily into the remaining free space in a next-fit fashion, exceeding $T'$ on each machine by at most one item from the sequence.
This can be done because we had a free space of at least $L$ and the inserted objects had an overall length of at most $L$.
To make the resulting schedule feasible, we have to insert some setup times.
However, because the overall length of the jobs from each class in need of a setup is at least $\eps^2 T$ and the sequence was ordered by classes, there are at most $T'/(\eps^2 T) + 2$ distinct classes without a setup time on each machine.
Inserting the missing setup times will therefore increase the makespan by at most $(T'/(\eps^2 T) + 2)\eps^3 T = \eps T' + 2\eps^3 T$.
\end{proof}

Next, we deal with the remaining (large) jobs with small setup times $j\in\ssjobs_\plarge$.
Let $I_2$ be the instance we get by increasing the setup times of the classes with small setup times to $\eps^3 T$.
We denote the setup time of class $k\in[K']$ for $I_2$ by $s'_k$.
Note that there are no small setup jobs in $I_2$.
\begin{lemma}\label{lem:sclass_rounding2}
A $(T',L')$-schedule $I_1$ induces a $((1 + \eps^2)T',L')$-schedule for $I_2$, and a $(T',L')$-schedule for $I_2$ is also a $(T',L')$-schedule for $I_1$.
\end{lemma}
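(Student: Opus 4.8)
The second direction is immediate and requires no work: passing from $I_2$ back to $I_1$ only decreases setup times. Indeed, the big-setup classes are untouched, and a small-setup class $k$ of $I_1$ satisfies $s_k < \eps^3 T = s'_k$. Hence, as recalled in the discussion preceding this subsection, decreasing setup times preserves feasibility and makespan while not decreasing the free space, so a $(T',L')$-schedule for $I_2$ is in particular a $(T',L')$-schedule for $I_1$.

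For the first direction the plan is to start from a $(T',L')$-schedule for $I_1$ and bound, machine by machine, the extra length incurred by raising the small setup times to $\eps^3 T$. The crucial observation is that $I_1$ was obtained from $\jobs$ by deleting precisely the set $\ssjobs_\psmall$, i.e.\ all jobs that have \emph{both} a small setup time and a small processing time. Consequently every surviving job of a small-setup class is \emph{large}, with processing time at least $\eps T$. Therefore, on any fixed machine, each small-setup class that appears there occupies at least $\eps T$ of processing volume (not counting its setup), so the number of distinct small-setup classes on that machine is at most $T'/(\eps T)$. Raising the setup time of each such class to $\eps^3 T$ costs at most $\eps^3 T$ per class, while the setup times of big-setup classes remain unchanged; hence the total increase of used length on that machine is at most $(T'/(\eps T))\cdot \eps^3 T = \eps^2 T'$.

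This bound is uniform over all machines, so with $b = \eps^2 T'$ the general principle stated before this subsection (increasing setup times by at most $b$ on every machine turns a $(T',L')$-schedule into a $(T'+b,L')$-schedule, the free space being preserved because the makespan bound was raised accordingly) immediately yields a $((1+\eps^2)T',L')$-schedule for $I_2$, as claimed.

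There is no genuine obstacle in this argument; the one point that must be identified correctly is that after the first simplification step the small-setup classes contain only large jobs. This is exactly what forces "few small-setup classes per machine" and keeps the blow-up down to the factor $(1+\eps^2)$; without the preceding removal of $\ssjobs_\psmall$ the same counting would fail.
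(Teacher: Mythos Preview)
Your proof is correct and follows essentially the same approach as the paper. The paper's one-line argument bounds the number of large jobs on any machine by $T'/(\eps T)$; you make explicit the intermediate step that the remaining small-setup jobs in $I_1$ are all large, so the number of distinct small-setup classes per machine is bounded by the same quantity, and the rest is identical.
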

\begin{proof}
The first claim is true because in a schedule with makespan at most $T$ there can be at most $T'/(\eps T)$ many large jobs on any machine, and the second claim is obvious.
\end{proof}

Let $I_3$ be the instance we get by replacing the jobs from $\bsjobs_\ptiny$ with placeholders of size $\eps^4 T$.
More precisely, for each class $k\in[K]$ we introduce $\ceil{(\sum_{j\in\bsjobs_\ptiny,k_j=k}p_j)/(\eps^4 T)}$ many jobs with processing time $\eps^4 T$ and class $k$.
We denote the job set of $I_3$ by $\jobs'$ and the processing time of a job $j\in\jobs'$ by $p'_j$.
Note that $I_3$ exclusively contains big jobs with big setup times.
\begin{lemma}\label{lem:sclass_rounding3}
If there is a $(T',L')$-schedule for $I_2$, there is also a $((1 + \eps)T',L')$-schedule; and if there is a $(T',L')$-schedule for $I_3$, there is also a $((1 + \eps)T',L')$-schedule for $I_2$.
\end{lemma}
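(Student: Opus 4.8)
The plan is to prove both implications by local exchange arguments that, working class by class and machine by machine, swap the tiny big‑setup jobs $\bsjobs_\ptiny$ for the $\eps^4T$‑placeholders (in one direction) and back (in the other), controlling the makespan increase through a single simple counting fact. Write $P_k=\sum_{j\in\bsjobs_\ptiny,\,k_j=k}p_j$ and $n_k=\lceil P_k/(\eps^4T)\rceil$, so that $I_3$ contains exactly $n_k$ placeholders of class $k$. The counting fact, used throughout, is: since every setup time in $I_2$ (and hence in $I_3$) is at least $\eps^3T$, any machine in a schedule of makespan at most $T'$ carries at most $T'/(\eps^3T)$ distinct setup classes; summing over all $m$ machines, the number $K'$ of classes satisfies $K'\le mT'/(\eps^3T)$.

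\textbf{From $I_3$ to $I_2$.}
Take a $(T',L')$‑schedule for $I_3$. Fix a class $k$; its placeholders occupy total length $n_k\eps^4T\ge P_k$, distributed over the machines carrying class $k$. Delete all of them and refill the vacated slots, machine by machine, greedily in a next‑fit fashion with the actual tiny jobs of class $k$ (total length $P_k$, each of length $<\eps^4T$), allowing the last job placed on each machine to overshoot the vacated space by less than $\eps^4T$. Because the vacated length per machine is an integer multiple of $\eps^4T$ and their sum is $n_k\eps^4T\ge P_k$, all tiny jobs fit, and every machine that receives a class‑$k$ tiny job had a class‑$k$ placeholder before, hence already carries the setup $s'_k$; machines that lose class $k$ entirely drop their setup. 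Performing this for every class enlarges each machine by less than $\eps^4T$ per class it carries, i.e.\ by less than $(T'/(\eps^3T))\eps^4T=\eps T'$, so the makespan is at most $(1+\eps)T'$. For the free space it is enough to note that the total occupied volume does not grow (we remove jobs of total length $\sum_k n_k\eps^4T$, reinsert jobs of total length $\sum_k P_k$, and only remove setups), while the makespan bound rose by $\eps T'$ on each machine; hence the free space with respect to $(1+\eps)T'$ is still at least $L'$.

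\textbf{From $I_2$ to $I_3$.}
Take a $(T',L')$‑schedule for $I_2$. Fix a class $k$ and let $x_{k,i}$ be the total length of class‑$k$ tiny jobs on machine $i$, so $\sum_i x_{k,i}=P_k$. Delete these tiny jobs, and onto machine $i$ place $\lfloor x_{k,i}/(\eps^4T)\rfloor$ placeholders; these fit into the just‑vacated space and cause no increase. This uses $\sum_i\lfloor x_{k,i}/(\eps^4T)\rfloor$ placeholders, which is at most $n_k$ but, by $\lfloor a\rfloor>a-1$, at least $n_k-m_k$, where $m_k$ is the number of machines carrying class $k$. Hence at most $m_k$ placeholders remain; place them one per machine among the $m_k$ machines carrying class $k$ (all of which have the setup), each such machine now overshooting by less than $\eps^4T$. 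As before the per‑machine increase is less than $\eps T'$, so the makespan is at most $(1+\eps)T'$; and the occupied volume grows by at most $\sum_k(n_k\eps^4T-P_k)<K'\eps^4T\le m\eps T'$ in total (no new setups are introduced), which is exactly the extra room created by raising the makespan bound on the $m$ machines, so the free space stays at least $L'$.

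\textbf{Main obstacle.}
The only delicate point is the $I_2\to I_3$ direction: rounding each per‑machine tiny‑job volume \emph{down} to a multiple of $\eps^4T$ produces slightly too few placeholders, and one must argue that this deficit is at most one per machine carrying the class, so that the leftover placeholders can be absorbed within the $(1+\eps)$‑makespan slack and without creating any new setup — only machines that already have the relevant setup installed are used. The free‑space and makespan accounting in both directions is then routine.
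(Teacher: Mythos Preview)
Your proof is correct and follows essentially the same approach as the paper: both directions rely on the fact that each machine carries at most $T'/(\eps^3T)$ distinct classes, and swap the tiny big-setup jobs for placeholders (or back) with an overshoot of less than $\eps^4T$ per class per machine. The only cosmetic difference is in the $I_2\to I_3$ direction, where the paper rounds each machine's class-$k$ tiny volume \emph{up} to the next multiple of $\eps^4T$ (so that $\sum_i\lceil x_{k,i}/(\eps^4T)\rceil\ge n_k$ immediately absorbs all placeholders), while you round down and then distribute the at most $m_k$ leftover placeholders one per machine; the resulting per-machine increase and free-space accounting are identical.
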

\begin{proof}
Note, that for any $(T',L')$-schedule for $I_2$ or $I_3$ there are at most $T'/ (\eps^3T) $ many distinct big setup classes scheduled on any machine.
Hence, when considering such a schedule for $I_2$, we can remove the tiny jobs belonging to $\bsjobs_\ptiny$ from the machines and instead fill in the placeholders, such that each machine for each class receives at most as much length from that class, as was removed, rounded up to the next multiple of $\eps^4 T$.
All placeholders can be placed like this and the makespan is increased by at most $(T'/ (\eps^3T))\eps^4 T = \eps T'$.
If, on the other hand, we consider such a schedule for $I_3$, we can remove the placeholders and instead fill in the respective tiny jobs, again overfilling by at most one job.
This yields a $((1 + \eps)T',L')$-schedule for $I_2$ with the same argument.
\end{proof}

Lastly, we perform both a geometric and an arithmetic rounding step for the processing and setup times.
The geometric rounding is needed to suitably bound the number of distinct processing and setup times and due to the arithmetic rounding we will be able to guarantee integral coefficients in the IP.
More precisely, we set $\tilde{p}_j = (1+\eps)^{\ceil{\log_{1+\eps}p'_j/(\eps^4 T)}}\eps^4 T$ and $\bar{p}_j = \ceil{\tilde{p}_j/\eps^5 T}\eps^5 T$ for each $j\in\jobs'$, as well as $\tilde{s}_j = (1+\eps)^{\ceil{\log_{1+\eps}s'_j/(\eps^3 T)}}\eps^3 T$ and $\bar{s}_k = \ceil{\tilde{s}_j/\eps^5 T}\eps^5 T$ for each setup class $k\in[K']$.
The resulting instance is called $I_4$.
\begin{lemma}\label{lem:sclass_rounding4}
A $(T',L')$-schedule for $I_3$ induces a $((1 + 3\eps)T',L')$-schedule for $I_4$, and any $(T',L')$-schedule for $I_4$ can be turned into a $(T',L')$-schedule for $I_3$.
\end{lemma}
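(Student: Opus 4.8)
The plan is to leave the job-to-machine assignment untouched in both directions and merely track how the rounded times change the load of each machine; this is exactly the situation covered by the general transformation principle described at the start of Section~\ref{sec:EPTAS} (increasing processing/setup times on each machine by a common bound $b$ turns a $(T',L')$-schedule into a $(T'+b,L')$-schedule, and decreasing them leaves a $(T',L')$-schedule intact), so it suffices to bound the per-machine change in load.

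The backward direction is immediate: both rounding steps round \emph{up}, so $\bar p_j \geq \tilde p_j \geq p'_j$ and, likewise, $\bar s_k \geq \tilde s_k \geq s'_k$. Hence a $(T',L')$-schedule for $I_4$ is, under the same assignment, a schedule for $I_3$ in which every machine load has only decreased; its makespan is therefore still at most $T'$ and its free space only grows, so it is still at least $L'$.

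For the forward direction I would merge the geometric and arithmetic roundings into one multiplicative estimate, exploiting that $I_3$ contains only big jobs ($p'_j\ge \eps^4 T$) with big setups ($s'_k\ge \eps^3 T$). Geometric rounding gives $\tilde p_j \le (1+\eps)p'_j$ and $\tilde s_k \le (1+\eps)s'_k$, since $(1+\eps)^{\ceil{x}}\le (1+\eps)(1+\eps)^x$ and the relevant exponents are nonnegative. The arithmetic step adds at most $\eps^5 T$; because $\tilde p_j \ge \eps^4 T$ this is at most an $\eps$-fraction of $\tilde p_j$, so $\bar p_j \le (1+\eps)\tilde p_j \le (1+\eps)^2 p'_j$, and because $\tilde s_k \ge \eps^3 T$ the added $\eps^5 T$ is at most an $\eps^2$-fraction, so $\bar s_k \le (1+\eps^2)\tilde s_k \le (1+\eps)^2 s'_k$. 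Since $(1+\eps)^2 \le 1+3\eps$ for $\eps\le 1$, any machine with load at most $T'$ in the $I_3$-schedule has load at most $(1+3\eps)T'$ in $I_4$, i.e. its load grows by at most $b=3\eps T'$; applying the transformation principle with this $b$ yields the claimed $((1+3\eps)T',L')$-schedule, the free-space bound surviving because the makespan bound was raised accordingly.

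I do not expect a genuine obstacle: the only delicate point is landing on the constant $1+3\eps$ rather than something larger, and this is exactly why the earlier steps $I_1,\dots,I_3$ were arranged so that every processing time is at least $\eps^4 T$ and every setup time at least $\eps^3 T$ — these lower bounds let both the $(1+\eps)$ geometric factor and the additive $\eps^5 T$ of the arithmetic step be absorbed into a single clean $(1+\eps)^2$ blow-up. Everything else is routine bookkeeping.
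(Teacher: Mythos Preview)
Your proposal is correct and follows essentially the same approach as the paper: keep the assignment fixed, bound the per-machine load increase, and note that the backward direction is trivial because both rounding steps round up. The only cosmetic difference is in how the arithmetic rounding is absorbed---you convert the additive $\eps^5 T$ into a multiplicative factor via the lower bounds $\tilde p_j\ge\eps^4 T$ and $\tilde s_k\ge\eps^3 T$, obtaining $\bar p_j,\bar s_k\le(1+\eps)^2\le(1+3\eps)$ times the originals, whereas the paper first stretches by $(1+\eps)$ for the geometric step and then counts at most $T'/(\eps^4 T)$ jobs (hence at most that many setups) per machine to bound the arithmetic step additively by $2\eps T'$; both routes land on the same constant.
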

\begin{proof}
For the first claim, we first stretch a given schedule by $(1+\eps)$.
This enables us to use the processing and setup times due to the geometric rounding step.
Now, using the ones due to the second step increases the schedule by at most $2\eps T'$, because there where at most $T'/(\eps^4 T)$ many big jobs on any machine to begin with.
The second claim is obvious.
\end{proof}
Based on the rounding steps, we define two makespan bounds $\bar{T}$ and $\breve{T}$:
Let $\bar{T}$ be the makespan bound that is obtained from $T$ by the application of the Lemmata \ref{lem:sclass_rounding1}-\ref{lem:sclass_rounding4} in sequence, i.e., $\bar{T} = (1 + \eps^2)(1 + \eps)(1 + 3\eps) T = (1+\Oh(\eps))T$.
We will find a $(\bar{T},L)$-schedule for $I_4$ utilizing the $\MCIP$ and afterward apply the Lemmata \ref{lem:sclass_rounding1}-\ref{lem:sclass_rounding4} backwards, to get a schedule with makespan $\breve{T} = (1+\eps)^2\bar{T} + \eps^3 T= (1+\Oh(\eps))T$.

Let $P$ and $S$ be the sets of distinct occurring processing and setup times for instance $I_4$.
Because of the rounding, the minimum and maximum lengths of the setup and processing times, and $\eps < 1$, we can bound $|P|$ and $|S|$ by $\Oh(\log_{1+\eps} 1/\eps)=\Oh(1/\eps\log 1/\eps)$.

\subparagraph*{Utilization of the MCIP.}

At this point, we can employ the module configuration IP.
The basic objects in this context are the setup classes, i.e., $\bobs=[K']$, and the different values are the numbers of jobs with a certain processing time, i.e., $D = |P|$.
We set $n_{k,p}$ to be the number of jobs from setup class $k\in[K']$ with processing time $p\in P$.
The modules correspond to batches of jobs together with a setup time.
Batches of jobs can be modeled as configurations of processing times, that is, multiplicity vectors indexed by the processing times.
Hence, we define the set of modules $\mods$ to be the set of pairs of configurations of processing times and setup times with a summed up size bounded by $\bar{T}$, i.e., $\mods = \sett{(C,s)}{C\in\confs_P(\bar{T}),s\in S, s + \size(C)\leq \bar{T}}$, and write $M_p = C_p$ and $s_M = s$ for each module $M=(C,s)\in\mods$.
The values of a module $M$ are given by the numbers $M_p$ and its size $\size(M)$ by $s_M + \sum_{p\in P}M_p p$.
Remember that the configurations $\confs$ are the configurations of module sizes $\msizes$ that are bounded in size by $\bar{T}$, i.e., $\confs=\confs_\msizes(\bar{T})$.
A setup class is eligible for a module, if the setup times fit, i.e., $\bobs_M = \sett{k\in[K']}{s_k=s_M}$.
Lastly, we establish $\eps^5 T=1$ by scaling.

For the sake of readability, we state the resulting constraints of the $\MCIP$ with adapted notation and without duplication of the configuration variables:
\begin{align}
\sum_{C\in\confs}x_{C} & = m & \label{eq:MCIP_machs_class} \\ 
\sum_{C\in\confs} C_h x_{C} & = \sum_{k\in[K']}\sum_{M\in \mods(h)}y^{(k)}_M & \forall h\in\msizes \label{eq:MCIP_mods_class}\\
\sum_{M\in\mods}M_p y^{(k)}_M & = n_{k,p} & \forall k\in[K'],p\in P \label{eq:MCIP_jobs_class}
\end{align}
Note that the coefficients are all integral and this includes those of the objective function, i.e., $\sum_C \size(C)x_C $, because of the scaling step.
\begin{lemma}\label{lem:class_IP_to_sched}
With the above definitions, there is a $(\bar{T},L)$-schedule for $I_4$, iff the $\MCIP$ has a solution with objective value at most $m\bar{T} - L$.
\end{lemma}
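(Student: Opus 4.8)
The plan is to establish the equivalence in Lemma~\ref{lem:class_IP_to_sched} by exhibiting the correspondence between $(\bar T,L)$-schedules for $I_4$ and feasible $\MCIP$ solutions in both directions, and then checking that the free-space condition translates exactly into the bound $m\bar T - L$ on the objective value. The essential observation is that, since $I_4$ contains only big jobs with big setup times, each machine in a schedule carries a bounded number of setup classes, and its content can be read off as a configuration of module sizes: for each class placed on the machine, the jobs of that class on the machine together with the class's setup time form a module, and the multiset of sizes of these modules is a configuration bounded in size by $\bar T$ (since the makespan is at most $\bar T$). Conversely, a configuration together with an assignment of modules of the right sizes describes how to lay out the classes' jobs on a machine.

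First I would prove the forward direction. Given a $(\bar T,L)$-schedule for $I_4$, for each machine $i$ and each setup class $k$ with at least one job on $i$, let $M^{(i,k)}$ be the module whose processing-time configuration counts the jobs of class $k$ on machine $i$ and whose setup time is $\bar s_k$; this module is eligible for $k$ because its setup time equals $s_M$, and $\size(M^{(i,k)})$ equals the space occupied by class $k$ on machine $i$. Setting $y^{(k)}_M$ to the number of machines whose class-$k$ module equals $M$, constraint (\ref{eq:MCIP_jobs_class}) holds because every job of class $k$ and processing time $p$ sits on exactly one machine. For each machine $i$ let $C^{(i)}\in\confs_\msizes(\bar T)$ record the multiplicity of each module size among the $M^{(i,k)}$; its size is at most $\bar T$ since the schedule has makespan $\le \bar T$. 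Setting $x_C$ to the number of machines with configuration $C^{(i)} = C$ satisfies (\ref{eq:MCIP_machs_class}), and (\ref{eq:MCIP_mods_class}) holds because both sides count, for each size $h$, the total number of (machine, class)-incidences whose module has size $h$. Finally $\sum_C \size(C) x_C = \sum_i \size(C^{(i)})$ is the total space used by jobs and setups, which is $m\bar T$ minus the free space, hence at most $m\bar T - L$.

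For the converse, given an $\MCIP$ solution with objective value at most $m\bar T - L$, I would first use (\ref{eq:MCIP_mods_class}) to match, size by size, the $\sum_C C_h x_C$ module-slots in the chosen configurations with the $\sum_k\sum_{M\in\mods(h)} y^{(k)}_M$ chosen module-instances; this yields, for each machine, a configuration and a list of concrete modules of matching sizes, each module carrying a class label (from the brick index of the $y$-variable) for which it is eligible. Laying out on each machine, for each of its modules, the corresponding batch of jobs preceded by the module's setup time produces a valid schedule: the makespan on a machine equals the size of its configuration, which is at most $\bar T$; every job of class $k$ with processing time $p$ is placed exactly once by (\ref{eq:MCIP_jobs_class}); and since eligibility forces the module's setup time to equal $\bar s_k$, one setup of the right length per (machine, class) batch suffices. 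The used space is exactly $\sum_C\size(C)x_C \le m\bar T - L$, so the free space with respect to $\bar T$ is at least $L$, giving a $(\bar T,L)$-schedule for $I_4$.

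The main obstacle is the bookkeeping in the converse direction: the $\MCIP$ only guarantees that the \emph{sizes} of the chosen modules match the configuration entries (constraint (\ref{eq:MCIP_mods_class}) is aggregated over all classes), so one must argue that an arbitrary size-respecting matching between configuration slots and module-instances can be turned into a feasible schedule regardless of which class each module belongs to -- this is fine precisely because a module is self-contained (it already includes its own setup time and the eligible class is determined only up to having the matching setup time), but it needs to be stated carefully. A secondary point worth noting explicitly is that two distinct classes $k\neq k'$ with $\bar s_k = \bar s_{k'}$ could be assigned modules of equal size; the layout still works because each batch gets its own setup, and the integrality and correct multiplicities are exactly what constraints (\ref{eq:MCIP_machs_class})--(\ref{eq:MCIP_jobs_class}) enforce.
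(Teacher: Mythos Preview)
Your proposal is correct and follows essentially the same approach as the paper: in both directions you read off (or lay down) per-machine configurations of module sizes and per-class module multiplicities, and verify constraints (\ref{eq:MCIP_machs_class})--(\ref{eq:MCIP_jobs_class}) together with the objective bound via the used-space/free-space accounting. Your discussion of the size-respecting matching in the converse direction and of classes sharing a rounded setup time is more explicit than the paper's, but the underlying construction is the same.
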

\begin{proof}
Let there be a  $(\bar{T},L)$-schedule for $I_4$.
Then the schedule on a given machine corresponds to a distinct configuration $C$ that can be determined by counting for each possible group size $a$ the batches of jobs from the same class whose length together with the setup time adds up to an overall length of $a$.
Note that the length of this configuration is equal to the used up space on that machine.
We fix an arbitrary setup class $k$ and set the variables $x^{(k)}_C$ accordingly (and $x^{(k')}_C=0$ for $k'\neq k$ and $C\in\confs$).
By this setting we get an objective value of at most $m\bar{T} - L$ because there was $L$ free space in the schedule.
For each class $k$ and module $M$, we count the number of machines on which the there are exactly $M_p$ jobs with processing time $p$ from class $k$ for each $p\in P$, and set $y^{(k)}_M$ accordingly.
It is easy to see that the constraints are satisfied by these definitions.

Given a solution $(x,y)$ of the $\MCIP$, we define a corresponding schedule:
Because of (\ref{eq:MCIP_machs_class}) we can match the machines to configurations such that each machine is matched to exactly one configuration.
If machine $i$ is matched to $C$, for each group $G$ we create $C_G$ slots of length $\size(G)$ on $i$.
Next, we divide the setup classes into batches.
For each class $k$ and module $M$, we create $y^{(k)}_M$ batches of jobs from class $k$ with $M_p$ jobs with processing time $p$ for each $p\in P$ and place the batch together with the corresponding setup time into a fitting slot on some machine.
Because of (\ref{eq:MCIP_jobs_class}) and (\ref{eq:MCIP_mods_class}) all jobs can be placed by this process.
Note that the used space equals the overall size of the configurations and we therefore have free space of at least $L$.
\end{proof}

\subparagraph*{Result.}

Using the above results, we can formulate and analyze the following procedure:
\begin{algorithm}\
\begin{enumerate}
\item Generate the modified instance $I_4$: 
\begin{itemize}
\item Remove the small jobs with small setup times.
\item Increase the setup times of the remaining classes with small setup times.
\item Replace the tiny jobs with big setup times.
\item Round up the resulting processing and setup times. 
\end{itemize}
\item Build and solve the $\MCIP$ for $I_4$.
\item If the $\MCIP$ is infeasible, or the objective value greater than $m \bar{T} - L$, report that $I$ has no solution with makespan $T$.
\item Otherwise build the schedule with makespan $\bar{T}$ and free space at least $L$ for $I_4$. 
\item Transform the schedule into a schedule for $I$ with makespan at most $\breve{T}$:
\begin{itemize}
\item Use the prerounding processing and setup times. 
\item Replace the placeholders by the tiny jobs with big setup times.
\item Use the orignal setup times of the classes with small setup times.
\item Insert the small jobs with small setup times into the free space.
\end{itemize}
\end{enumerate}
\end{algorithm}
The procedure is correct due to the above results.
To analyze its running time, we first bound the parameters of the $\MCIP$.
We have $|\bobs| = K' \leq K$ and $D = |P|$ by definition, 
and $|\mods| = \Oh(|S|(1/\eps^3)^{|P|}) = 2^{\Oh(\nicefrac{1}{\eps}\log^2\nicefrac{1}{\eps})}$, because $|S|,|P|\in\Oh(1/\eps\log 1/\eps)$.
This is true, due to the last rounding step, which also implies $|\msizes|\in\Oh(1/\eps^5)$, yielding $|\confs| = |\msizes|^{\Oh(1/\eps^3)} = 2^{\Oh(\nicefrac{1}{\eps^3}\log\nicefrac{1}{\eps})}$.
According to Observation \ref{rem:MCIP}, this yields a brick size of $t = 2^{\Oh(\nicefrac{1}{\eps^3}\log\nicefrac{1}{\eps})}$, a brick number of $K$, $\Oh(1/\eps^5)$ globally, and $\Oh(1/\eps\log 1/\eps)$ locally uniform constraints for the $\MCIP$.
We have $\Delta =\Oh(1/\eps^5)$, because all occurring values in the processing time matrix are bounded in $\bar{T}$, and we have $\bar{T} =\Oh(1/\eps^5)$, due to the scaling.
Furthermore, the values of the objective function, the right hand side, and the upper and lower bounds on the variables are bounded by $\Oh(n/\eps^5)$, yielding a bound of $\Oh(\log n/\eps^5)$ for the encoding length of the biggest number in the input $\varphi$.
Lastly, all variables can be bounded by $0$ from below and $\Oh(m/\eps^3)$ from above, yielding $\Phi = \Oh(m/\eps^3)$.

By Theorem \ref{thm:solving_n-fold} and some arithmetic, the $\MCIP$ can be solved in time:
\[(rs\Delta)^{\Oh(r^2s+rs^2)}t^2n^2\varphi\log(\Phi)\log(nt\Phi) = 2^{\Oh(\nicefrac{1}{\eps^{11}}\log^2\nicefrac{1}{\eps})}K^2\log n\log m \log Km\]
When building the actual schedule, we iterate through the jobs and machines like indicated in the proof of Lemma \ref{lem:class_IP_to_sched}, yielding the following:
\begin{theorem}
The algorithm for the setup class model finds a schedule with makespan $(1+\Oh(\eps))T$ or correctly determines that there is no schedule with makespan $T$ in time
$2^{\Oh(\nicefrac{1}{\eps^{11}}\log^2\nicefrac{1}{\eps})}K^2nm\log Km$.
\end{theorem}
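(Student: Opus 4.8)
The plan is to combine the correctness of the preprocessing with the running-time accounting already set up in the excerpt. I would structure the proof as follows.

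\textbf{Correctness.} First I would argue that the algorithm never gives a wrong answer. If the $\MCIP$ is infeasible or its optimum exceeds $m\bar{T}-L$, then by Lemma~\ref{lem:class_IP_to_sched} there is no $(\bar{T},L)$-schedule for $I_4$; applying Lemmata~\ref{lem:sclass_rounding1}--\ref{lem:sclass_rounding4} in the forward direction (a $(T,T)$- or more precisely $(T,L)$-schedule for $I$ would induce a $(\bar{T},L)$-schedule for $I_4$ since $L$ is preserved through every step and the makespan inflates exactly to $\bar{T}=(1+\eps^2)(1+\eps)(1+3\eps)T$) this shows $I$ has no schedule with makespan $T$, so the rejection is justified. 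Note here one must use the first half of each lemma and observe that the free-space bound $L$ chosen in Lemma~\ref{lem:sclass_rounding1} is exactly the space occupied by the removed small-setup small jobs, so it carries through unchanged. Conversely, if the $\MCIP$ has a solution of value at most $m\bar{T}-L$, Lemma~\ref{lem:class_IP_to_sched} yields a $(\bar{T},L)$-schedule for $I_4$, and applying the second halves of Lemmata~\ref{lem:sclass_rounding4}, \ref{lem:sclass_rounding3}, \ref{lem:sclass_rounding2}, \ref{lem:sclass_rounding1} in reverse order transforms it into a schedule for $I$ with makespan at most $\breve{T}=(1+\eps)^2\bar{T}+\eps^3 T=(1+\Oh(\eps))T$. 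Rescaling $\eps$ by a constant factor at the very beginning turns $(1+\Oh(\eps))T$ into $(1+\eps)T$ as claimed.

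\textbf{Running time.} Here I would reuse the parameter bounds established right before the theorem: $n=K$, $t=2^{\Oh(\nicefrac{1}{\eps^3}\log\nicefrac{1}{\eps})}$, $r=\Oh(1/\eps^5)$, $s=\Oh(\tfrac1\eps\log\tfrac1\eps)$, $\Delta=\Oh(1/\eps^5)$, $\varphi=\Oh(\log(n/\eps^5))$, $\Phi=\Oh(m/\eps^3)$. Plugging into Theorem~\ref{thm:solving_n-fold} gives the stated bound $2^{\Oh(\nicefrac{1}{\eps^{11}}\log^2\nicefrac1\eps)}K^2\log n\log m\log(Km)$ for solving the $\MCIP$; I would note $\log n\le n$ is absorbed when we add the cost of the remaining steps. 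Generating $I_4$ costs $\Oh(n\log(nm))$ (reading jobs, geometric/arithmetic rounding, bucketing the placeholders per class). Building the explicit schedule from the $\MCIP$ solution, as described in the proof of Lemma~\ref{lem:class_IP_to_sched}, requires matching machines to configurations and distributing batches of jobs into slots; this is linear in the number of jobs times machines up to logarithmic factors, i.e. $\Oh(nm\,\polylog)$, since each of the $m$ machines gets one configuration with $\Oh(1/\eps^3)$ slots and each job is placed once. Summing the three contributions and bounding $\log n,\log m\le \log(Km)$, the dominant term is $2^{\Oh(\nicefrac{1}{\eps^{11}}\log^2\nicefrac1\eps)}K^2 nm\log(Km)$, matching the statement.

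\textbf{Main obstacle.} The routine part is the arithmetic of composing the running-time factors; the part requiring care is making sure the free-space bound $L$ is treated consistently — that it is genuinely preserved (not merely approximately) through all four simplification lemmas and through Lemma~\ref{lem:class_IP_to_sched}, since the whole reduction hinges on the objective value $m\bar{T}-L$ of the $\MCIP$ exactly encoding the existence of enough idle room to greedily reinsert the removed jobs. I would therefore spend the bulk of the write-up cross-checking the direction of each inequality in the four lemmas and confirming that the reverse transformations in Step~5 of the algorithm use precisely the ``obvious'' halves of those lemmas, so that no further makespan loss beyond $\breve{T}$ accrues.
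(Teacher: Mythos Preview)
Your proposal is correct and follows essentially the same approach as the paper: the paper's own proof consists of the one-line remark ``The procedure is correct due to the above results'' together with the parameter bookkeeping for the $\MCIP$ that you reproduce, plus the observation that building the schedule by iterating through jobs and machines accounts for the extra $nm$ factor. Your write-up simply spells out the forward/backward use of Lemmata~\ref{lem:sclass_rounding1}--\ref{lem:sclass_rounding4} and Lemma~\ref{lem:class_IP_to_sched} more explicitly than the paper does, but there is no substantive difference in strategy.
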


\subsection{Splittable Model}\label{sec:EPTAS_slittable}

The approximation scheme for the splittable model presented in this section is probably the easiest one discussed in this work.
There is, however, one problem concerning this procedure:
Its running time is polynomial in the number of machines, which might be exponential in the input size. 
In Section \ref{sec:better_running_time} we show how this problem can be overcome and further improve the running time.

\subparagraph*{Simplification of the Instance.}

In this context the set of big setup jobs $\bsjobs$ is given by the jobs with setup times at least $\eps T$ and the small setup jobs $\ssjobs$ are all the others.
Let $L = \sum_{j\in \ssjobs} (s_j + p_j)$.
Because every job has to be scheduled and every setup has to be paid at least once, $L$ is a lower bound on the summed up space due to small jobs in any schedule. 
Let $I_1$ be the instance that we get by removing all the small setup jobs from the given instance $I$.

\begin{lemma}\label{lem:split_rounding1}
A schedule with makespan $T$ for $I$ induces a $(T,L)$-schedule for $I_1$; and any $(T',L)$-schedule for $I_1$ can be transformed into a schedule for $I$ with makespan at most $T'+\eps T$.
\end{lemma}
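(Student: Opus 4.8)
The plan is to treat the two directions of the lemma separately. For the first, given a schedule for $I$ of makespan $T$, simply delete every job $j\in\ssjobs$ together with all of its setups; in the splittable model a job split into several parts pays $s_j$ once per part, so each such $j$ frees at least $s_j+p_j$ on the machines it used. The makespan does not increase, and the total idle time in $[0,T]$ grows by at least $\sum_{j\in\ssjobs}(s_j+p_j)=L$, so the result is a $(T,L)$-schedule for $I_1$.

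For the converse, start from a $(T',L)$-schedule for $I_1$. Since the splittable model imposes no constraints on starting times, first compact every machine so that its $I_1$-jobs occupy a prefix and its free space forms a single block of length $f_i$ at the end, with $\sum_i f_i\ge L$. I would then reinsert the jobs of $\ssjobs$ by a next-fit sweep over the machines $1,\dots,m$: on the current machine, repeatedly place a setup followed by as much of the current job's processing as fits, where machine $i$ is allowed to be filled up to $f_i+\eps T$ (one setup's worth beyond its free block); as soon as this capacity is reached, the current job is split and continued -- with a fresh setup -- on the next machine. Two facts then give the makespan bound. First, a setup is only \emph{started} while the used length is still below $f_i$, and $s_j<\eps T$ for every small-setup job, so the used length on machine $i$ never exceeds $f_i+\eps T$; hence its makespan rises to at most $(T'-f_i)+(f_i+\eps T)=T'+\eps T$. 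Second, whenever a job is continued onto the next machine, machine $i$ must have been filled to \emph{exactly} $f_i+\eps T$, since the continuation happens precisely because the capped processing amount runs out at $f_i+\eps T$.

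The one genuinely non-trivial point, which I expect to be the main obstacle, is to show that the sweep never runs out of free space, i.e.\ that all of $\ssjobs$ is placed within the $m$ machines. Suppose not. Then on every machine the used length is at least $f_i$, and on each of the, say, $k$ machines from which a job is continued it is exactly $f_i+\eps T$; summing gives a total placed length of at least $\sum_i f_i+k\eps T\ge L+k\eps T$. On the other hand, the placed material consists of (at most) all the processing times, (at most) one setup $s_j$ per job, and exactly $k$ \emph{re-paid} setups, one per continuation, each of size $s_j<\eps T$; hence the total placed length is strictly less than $L+k\eps T$ when $k\ge 1$, contradicting the lower bound, while if $k=0$ it is at most $L$, forcing it to equal $L=\sum_{j\in\ssjobs}(s_j+p_j)$ with every job fully placed using a single setup -- again contradicting the assumption that jobs remain. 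Thus the transformation succeeds, and together with the first direction this proves the lemma. The routine parts I would still spell out are that compaction keeps the schedule feasible in the splittable model and that the capped greedy can always split a job-piece to reach $f_i+\eps T$ exactly.
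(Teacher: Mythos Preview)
Your proof is correct, and the overall strategy---greedy next-fit reinsertion of the small-setup jobs into the compacted free space---is the same as the paper's. The implementation detail differs: the paper fills each machine only up to $T'$, cutting \emph{whatever} item (job piece or setup) straddles the boundary, and then in a post-processing step discards setup fragments and adds at most one fresh setup per machine for the job piece that was left without one; this makes the space argument immediate (total inserted length is exactly $L$, free space is at least $L$), and the single added setup supplies the extra $\eps T$. Your variant instead builds the $\eps T$ overshoot into the fill rule (start a setup only while below $f_i$, cap at $f_i+\eps T$), which avoids the cleanup pass but requires your counting argument to certify that the sweep terminates. Both are valid; the paper's version trades a tiny post-processing step for a trivial capacity bound, while yours trades a short counting argument for a one-pass procedure.
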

\begin{proof}
The first claim is obvious.
Hence, consider a sequence consisting of the jobs from $\ssjobs$ together with their set up times, where the setup up time of a job is the direct predecessor of the job.
We insert the setup times and jobs from this sequence greedily into the schedule in a next-fit fashion:
Given a machine we keep inserting the items from the sequence on the machine at the end of the schedule until the taken up space on the machine reaches $T'$.
If the current item does not fit exactly, we cut it, such that the used space on the machine is exactly $T'$.
Then we continue with the next machine. 
We can place the whole sequence like this without exceeding the makespan $T'$, because we have free space of at least $L$ which is the summed up length of the items in the sequence.
Next, we remove each setup time that was placed only partly on a machine together witch those that were placed at the end of the schedule, and insert a fitting setup time for the jobs that were scheduled without one, which can happen only once for each machine.
This yields a feasible schedule, whose makespan is increased by at most $\eps T$.   
\end{proof}
Next, we round up the processing times of $I_1$ to the next multiple of $\eps^2 T$, that is, for each job $j\in\jobs$ we set $\bar{p}_j = \ceil{p_j/(\eps^2 T)} \eps^2 T$ and $\bar{s}_j = \ceil{s_j/(\eps^2 T)} \eps^2 T$.
We call the resulting instance $I_2$, and denote its job set by $\jobs'$.

\begin{lemma}\label{lem:split_rounding2}
If there is a $(T,L')$-schedule for $I_1$, there is also a $((1 + 2\eps)T,L')$-schedule for $I_2$ in which the length of each job part is a multiple of $\eps^2 T$, and any $(T',L')$-schedule for $I_2$ yields a $(T',L')$-schedule for $I_1$.
\end{lemma}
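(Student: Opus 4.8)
The plan is to prove the two directions of Lemma~\ref{lem:split_rounding2} separately, both by purely local arguments that track, machine by machine, how much extra length the rounding introduces.

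First I would handle the forward direction. Take a $(T,L')$-schedule for $I_1$. For each job $j$ scheduled on a machine, we currently pay $s_j + p_j'$, where $p_j'$ is the total processing length of $j$ on that machine (possibly $j$ is split, so $p_j'$ may be only a fraction of $p_j$; but in the splittable model each piece carries its own setup). Replacing $p_j$ by $\bar p_j$ and $s_j$ by $\bar s_j$ increases the contribution of each job \emph{part} on the machine by at most $2\eps^2 T$, since rounding up to the next multiple of $\eps^2 T$ costs less than $\eps^2 T$ per quantity. So the key point is to bound the number of job parts on any machine in the original $(T,L')$-schedule. Every job part contributes at least $s_j \ge \eps T$ (for big-setup jobs, which are all that remain in $I_1$) — actually it suffices that $s_j + p_j > 0$, but to get a good bound I use that in $I_1$ all jobs are big-setup, so each part occupies length $> \eps T$; hence there are fewer than $T/(\eps T) = 1/\eps$ parts per machine. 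Then the total increase on a machine is at most $(1/\eps)\cdot 2\eps^2 T = 2\eps T$, giving a $((1+2\eps)T, L')$-schedule. I should be a little careful: I want the resulting schedule to additionally have every job-part length a multiple of $\eps^2 T$; but $\bar p_j$ is already such a multiple and I am rounding the \emph{whole} processing time of each job uniformly, so if job $j$ is split into fractions $\lambda_j(k)$ summing to $1$, I need to re-choose fractions so that each part length is a multiple of $\eps^2 T$ and they sum to $\bar p_j$. Since $\bar p_j$ is itself a multiple of $\eps^2 T$, I can move the split points to the nearest multiples of $\eps^2 T$, rounding each part up by less than $\eps^2 T$ and absorbing the discrepancy; combined with the already-counted slack this still fits within the $2\eps T$ budget per machine. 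This re-chopping is the one slightly fiddly point, and is where I would spend the most care.

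The reverse direction is immediate, as the lemma's statement already signals: a $(T',L')$-schedule for $I_2$ uses processing and setup times $\bar p_j \ge p_j$ and $\bar s_j \ge s_j$, so decreasing each of them back to the $I_1$ values only shrinks the space used on every machine. The makespan does not increase and the free space does not decrease, so the very same assignment $(\kappa,\lambda,\sigma)$ (possibly after trivially merging now-adjacent parts of the same job, though that is not even necessary) is a valid $(T',L')$-schedule for $I_1$. This follows the general principle stated in the paragraph preceding Section~\ref{sec:EPTAS}: decreasing processing or setup times preserves a $(T',L')$-schedule.

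So the main obstacle, such as it is, is not conceptual but bookkeeping: making sure that when I round processing times I can simultaneously re-split each job into parts that are all multiples of $\eps^2 T$ without overshooting the per-machine budget of $2\eps T$. The clean way to organize this is to first argue the bound of $1/\eps$ on the number of parts per machine from the big-setup property, then charge at most $2\eps^2 T$ of growth to each part (one $\eps^2 T$ for rounding the setup, one for rounding/re-snapping the processing fraction), and conclude $(1+2\eps)T$. I would present it in exactly that order.
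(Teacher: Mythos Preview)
Your proposal is correct and follows essentially the same approach as the paper: bound the number of job parts per machine by $1/\eps$ via the big-setup property, charge at most $2\eps^2 T$ per part for rounding setup and processing, and observe the reverse direction is trivial. The only cosmetic difference is that the paper rounds each job part's processing time up to a multiple of $\eps^2 T$ independently and then discards any excess beyond $\bar p_j$, whereas you phrase it as re-snapping split points; both realize the same charging argument.
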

\begin{proof}
Consider a $(T,L)$-schedule for $I_1$. 
There are at most $1/\eps$ jobs scheduled on each machine, since each setup time has a length of at least $\eps T$.
On each machine, we extend each occurring setup time and the processing time of each occurring job part by at most $\eps^2T$ to round it to a multiple of $\eps^2T$. 
This step extends the makespan by at most $2\eps T$. 
Since now each job part is a multiple of $\eps^2 T$, the total processing time of the job is a multiple of $\eps^2 T$ too. 

In the last step, we check for each job $j \in \bsjobs$ if the total processing time is now larger than the smallest multiple of $\eps^2T$, which is larger than its original processing time. 
If this is the case, we discard the spare processing time. 
Lastly, there is at least as much free space in the resulting schedule as in the original one, because we properly increased the makespan bound.
The second claim is obvious.
\end{proof}
Based on the two Lemmata, we define two makespan bounds $\bar{T} = (1 + 2\eps)T$ and $\breve{T} = \bar{T} + \eps T = (1+3\eps)T$.
We will use the $\MCIP$ to find a $(\bar{T},L)$-schedule for $I_2$ in which the length of each job part is a multiple of $\eps^2 T$.
Using the two Lemmata, this will yield a schedule with makespan at most $\breve{T}$ for the original instance $I$.

\subparagraph*{Utilization of the MCIP.}

The basic objects in this context are the (big setup) jobs, i.e., $\bobs = \bsjobs = \jobs'$, and they have only one value ($D = 1$), namely, their processing time.
Moreover, the modules are defined as the set of pairs of job piece sizes and setup times, i.e., $\mods = \sett[\big]{(q,s)}{s,q\in\sett{x\eps^2 T}{x\in\ZZ, 0<x\leq 1/\eps^2}, s\geq \eps T}$, and we write $s_M = s$ and $q_M = q$ for each module $M = (q,s)\in \mods$.
Corresponding to the value of the basic objects the value of a module $M$ is $q_M$, and its size $\size(M)$ is given by $q_M+s_M$.
A job is eligible for a module, if the setup times fit, i.e., $\bobs_M = \sett{j\in\jobs'}{s_j=s_M}$.
In order to ensure integral values, we establish $\eps^2 T=1$ via a simple scaling step.
The set of configurations $\confs$ is comprised of all configurations of module sizes $\msizes$ that are bounded in size by $\bar{T}$, i.e., $\confs=\confs_\mods(\bar{T})$.
We state the constraints of the $\MCIP$ for the above definitions with adapted notation and without duplication of the configuration variables:
\begin{align}
\sum_{C\in\confs}x_{C} & = m & \label{eq:MCIP_machs_split} \\ 
\sum_{C\in\confs} C_h x_{C} & = \sum_{j\in\jobs'}\sum_{M\in \mods(h)}y^{(j)}_M & \forall h\in\msizes \label{eq:MCIP_mods_split}\\
\sum_{M\in\mods}q_M y^{(j)}_M & = p_j & \forall j\in\jobs' \label{eq:MCIP_jobs_split}
\end{align}
Note that we additionally minimize the summed up size of the configurations, via the objective function $\sum_C \size(C)x_C$.
\begin{lemma}\label{lem:split_IP_to_sched}
With the above definitions, there is a ($\bar{T}, L$)-schedule for $I_2$ in which the length of each job piece is a multiple of $ \eps^2 T$, iff $\MCIP$ has a solution with objective value at most $m\bar{T} - L$.
\end{lemma}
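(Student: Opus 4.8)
The plan is to mirror the proof of Lemma~\ref{lem:class_IP_to_sched}, establishing the two directions of the equivalence separately. In both directions the correspondence is between machines and configurations, and between job pieces of a fixed job and modules in that job's brick; the objective bound $m\bar{T}-L$ translates exactly into ``free space at least $L$'' because the objective counts $\sum_C\size(C)x_C$, the total space used by all chosen configurations, and every machine contributes a configuration of size at most $\bar T$ by definition of $\confs=\confs_\mods(\bar T)$.

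First I would treat the forward direction. Given a $(\bar T,L)$-schedule for $I_2$ in which every job piece has length a multiple of $\eps^2 T$, note that on each machine the schedule consists of a collection of (job piece, setup) pairs, each such pair having piece length $q$ and setup $s$ with $q,s\in\{x\eps^2T:0<x\le 1/\eps^2\}$ and $s\ge\eps T$ (the bounds $q\le 1/\eps^2\cdot\eps^2T=T$ and $s\ge\eps T$ hold since the makespan is at most $\bar T$ and the job is a big-setup job); hence each such pair is precisely a module $M\in\mods$. Counting, for each module size $h$, how many modules of size $h$ sit on a machine gives a configuration $C\in\confs$ of size equal to the space used on that machine, which is at most $\bar T$. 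Fixing an arbitrary job $j^\star$, set $x^{(j^\star)}_C$ to the number of machines carrying configuration $C$ and all other configuration variables to zero; this satisfies \eqref{eq:MCIP_machs_split}. For each job $j$ and module $M=(q,s)$, set $y^{(j)}_M$ to the number of pieces of $j$ realised with piece length $q$ and setup $s$ (which is only nonzero when $s=s_j$, so the eligibility upper bounds are respected). Then \eqref{eq:MCIP_jobs_split} holds because $\sum_M q_M y^{(j)}_M$ sums the piece lengths of $j$, which equals $p_j$ after the rounding of Lemma~\ref{lem:split_rounding2}, and \eqref{eq:MCIP_mods_split} holds because both sides count, for each size $h$, the total number of size-$h$ modules appearing in the schedule. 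The objective value is $\sum_C\size(C)x^{(j^\star)}_C=\sum_i(\text{space used on machine }i)\le m\bar T - L$, since the free space is at least $L$.

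Next I would do the reverse direction. Given a feasible solution $(x,y)$ with objective at most $m\bar T-L$: by \eqref{eq:MCIP_machs_split} the total number of chosen configurations (summed over bricks) is $m$, so match machines to configurations bijectively; a machine matched to $C$ receives $C_h$ empty slots of size $h$ for each $h\in\msizes$. By \eqref{eq:MCIP_mods_split}, the total number of size-$h$ slots created equals $\sum_j\sum_{M\in\mods(h)}y^{(j)}_M$, the total number of size-$h$ modules demanded; so we may assign each module a distinct slot of matching size. Concretely, for each job $j$ and each module $M=(q_M,s_M)$ with $y^{(j)}_M>0$, create $y^{(j)}_M$ pieces of $j$ of length $q_M$, each with its setup $s_j=s_M$ (eligibility guarantees $s_M=s_j$), and place each such (piece, setup) pair into one of the reserved size-$(q_M+s_M)$ slots. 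Every module is placed, so every piece of every job is placed, and by \eqref{eq:MCIP_jobs_split} the piece lengths of $j$ sum to $p_j$, i.e.\ $j$ is fully scheduled; the piece lengths are multiples of $\eps^2T$ by construction of $\mods$. Each machine's used space equals the total size of the modules in its configuration's slots, which is at most $\size(C)\le\bar T$, so the makespan is at most $\bar T$; and the total used space equals $\sum_C\size(C)x_C\le m\bar T-L$, so the free space is at least $L$. This gives the desired $(\bar T,L)$-schedule for $I_2$.

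I do not expect any real obstacle here; the statement is essentially bookkeeping, and the only points requiring a little care are checking that every (piece, setup) pair occurring in a valid schedule is genuinely an element of $\mods$ (which uses $s_j\ge\eps T$ for $j\in\bsjobs$, the piece-length-multiple-of-$\eps^2T$ hypothesis, and the makespan bound $\bar T$ to cap both $q$ and $s$ by $T$), and that the objective value translates precisely to the free-space bound. The latter is immediate once one observes that in any schedule without idle time inside a machine's used prefix, the total used space over all machines plus the total free space equals $m$ times the makespan bound $\bar T$.
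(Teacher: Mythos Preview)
Your proposal is correct and follows essentially the same approach as the paper's proof: both directions proceed by the natural bijection between machines and configurations on one hand, and between a job's pieces and that job's module variables on the other, with the objective value equated to total used space so that the bound $m\bar T-L$ corresponds exactly to free space at least $L$. The only cosmetic difference is that you spell out slightly more carefully why each (piece, setup) pair lands in $\mods$ and why the objective/free-space translation is exact; the paper leaves these as ``easy to see''.
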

\begin{proof}
Given such a schedule for $I_2$, the schedule on each machine corresponds to exactly one configuration $G$ that can be derived by counting the job pieces and setup times with the same summed up length $a$ and setting $C_G$ accordingly, where $G$ is the group of modules with length $a$.
The size of the configuration $C$ is equal to the used space on the respective machine.
Therefore, we can fix some arbitrary job $j$ and set the variables $x^{(j)}_C$ to the number of machines whose schedule corresponds to $C$ (and $x^{(j')}_C=0$ for $j'\neq j$ and $C\in\confs$).
Since there is at least a free space of $L$ for the schedule, the objective value is bounded by $m\bar{T} - L$. 
Furthermore, for each job $j$ and job part length $q$, we count the number of times a piece of $j$ with length $q$ is scheduled and set $y^{(j)}_{(q,s_j)}$ accordingly.
It is easy to see that the constraints are satisfied.

Now, let $(x,y)$ be a solution to the $\MCIP$ with objective value at most $m\bar{T} - L$.
We use the solution to construct a schedule:
For job $j$ and configuration $C$ we reserve $x^{(j)}_{C}$ machines.
On each of these machines we create $C_h$ slots of length $h$, for each module size $h\in\msizes$.
Note that because of (\ref{eq:MCIP_machs_split}) there is the exact right number of machines for this.
Next, consider each job $j$ and possible job part length $q$ and create $y^{(j)}_{(q,s_j)}$ split pieces of length $q$ and place them together with a setup of $s_j$ into a slot of length $s_j+q$ on any machine.
Because of (\ref{eq:MCIP_jobs_split}) the entire job is split up by this, and because of (\ref{eq:MCIP_mods_split}) there are enough slots for all the job pieces.
Note that the used space in the created schedule is equal to the objective value of $(x,y)$ and therefore there is at least $L$ free space.
\end{proof}

\subparagraph*{Result.}

Summing up, we can find a schedule of length at most $(1 + 3\eps) T$ or correctly determine that there is no schedule of length $T$ with the following procedure:

\begin{algorithm}
\ 
\begin{enumerate}
\item Generate the modified instance $I_2$:
\begin{itemize}
\item Remove the small setup jobs.
\item Round the setup and processing times of the remaining jobs.
\end{itemize}
\item Build and solve the $\MCIP$ for this case.
\item If the IP is infeasible, or the objective value greater than $m\bar{T} - L$, report that $I$ has no solution with makespan $T$.
\item Otherwise build the schedule with makespan $\bar{T}$ and free space at least $L$ for $\bar{I}$.
\item Transform the schedule into a schedule for $I$ with makespan at most $\breve{T}$:
\begin{itemize}
\item Use the original processing and setup times.
\item Greedily insert the small setup jobs.
\end{itemize}
\end{enumerate}
\end{algorithm}
To assess the running time of the procedure, we mainly need to bound the parameters of the $\MCIP$, namely $|\bobs|$, $|\msizes|$, $|\mods|$, $|\confs|$ and $D$.
By definition, we have $|\bobs| = |\jobs'| \leq n$ and $D=1$.
Since all setup times and job piece lengths are multiples of $\eps^2 T$ and bounded by $T$, we have $|\mods| = \Oh(1/\eps^4)$ and $|\msizes| = \Oh(1/\eps^2)$.
This yields $|\confs| \leq |\msizes|^{\Oh(1/\eps + 2)} = 2^{\Oh(\nicefrac{1}{\eps}\log \nicefrac{1}{\eps})}$, because the size of each module is at least $\eps T$ and the size of the configurations bounded by $(1+2\eps)T$.

According to Observation \ref{rem:MCIP}, we now have brick-size $t = 2^{\Oh(\nicefrac{1}{\eps}\log \nicefrac{1}{\eps})}$, brick number $|\bobs| = n$, $r = |\groups|+1 = \Oh(1/\eps^2)$ globally uniform and $s = D = 1$ locally uniform constraints.
Because of the scaling step, all occurring numbers in the constraint matrix of the $\MCIP$ are bounded by $1/\eps^2$ and therefore $\Delta \leq 1/\eps^2$.
Furthermore, each occurring number can be bounded by $\Oh(m/\eps^2)$ and this is an upper bound for each variable as well, yielding $\varphi = \Oh(\log m/\eps^2 )$ and $\Phi = \Oh(m/\eps^2)$.
Hence the $\MCIP$, can be solved in time: \[(rs\Delta)^{\Oh(r^2s+rs^2)}t^2n^2\varphi\log(\Phi)\log(nt\Phi) = 2^{\Oh(\nicefrac{1}{\eps^4}\log\nicefrac{1}{\eps})}n^2\log^2 m\log nm\]

While the first step of the procedure is obviously dominated by the above, this is not the case for the remaining ones.
In particular, building the schedule from the IP solution costs $\Oh((n+m)/\eps^2)$, if the procedure described in the proof of Lemma \ref{lem:split_IP_to_sched} is realized in a straight-forward fashion.
The last step of the algorithm is dominated by this, yielding the running time stated in the theorem below.
Note that the number of machines $m$ could be exponential in the number of jobs, and therefore the described procedure is a PTAS only for the special case of $m = \poly(n)$.
However, this limitation can be overcome with a little extra effort, as we discuss in Section \ref{sec:better_running_time}.
\begin{theorem}
The algorithm for the splittable model finds a schedule with makespan at most $(1 + 3\eps)T$ or correctly determines that there is no schedule with makespan $T$ in time $2^{\Oh(\nicefrac{1}{\eps^4}\log\nicefrac{1}{\eps})}n^2m\log m\log nm$.
\end{theorem}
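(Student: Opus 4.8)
The plan is to prove the statement by composing the three structural lemmas of this subsection with the parameter bookkeeping already prepared, so the argument naturally falls into a correctness part and a running-time part.

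\emph{Correctness.} I would chain the reductions in both directions. Suppose $I$ admits a schedule of makespan $T$. By Lemma~\ref{lem:split_rounding1} this induces a $(T,L)$-schedule for $I_1$, and by Lemma~\ref{lem:split_rounding2} a $(\bar T,L)$-schedule for $I_2$ in which every job part is a multiple of $\eps^2 T$, where $\bar T=(1+2\eps)T$. Lemma~\ref{lem:split_IP_to_sched} then guarantees that the $\MCIP$ is feasible with objective value at most $m\bar T-L$, so Step~3 of the algorithm never reports infeasibility in this case. Conversely, whenever Steps~2--4 produce a feasible solution of value at most $m\bar T-L$, Lemma~\ref{lem:split_IP_to_sched} yields a $(\bar T,L)$-schedule for $I_2$ with the part-length property, and applying Lemmata~\ref{lem:split_rounding2} and~\ref{lem:split_rounding1} in reverse turns this into a schedule for $I$ of makespan at most $\breve T=(1+3\eps)T$. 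Hence the procedure is a correct dual-approximation algorithm with the claimed guarantee: the factor $(1+3\eps)$ is exactly the accumulated loss $\bar T\to\breve T$ incurred when reverting to the original processing and setup times and greedily reinserting the small-setup jobs.

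\emph{Running time.} I would first bound the $\MCIP$ parameters as in the preceding discussion: $|\bobs|=|\jobs'|\le n$ and $D=1$; since all piece lengths and setup times are multiples of $\eps^2T$ bounded by $T$, one has $|\mods|=\Oh(1/\eps^4)$ and $|\msizes|=\Oh(1/\eps^2)$; and since each module has size at least $\eps T$, a configuration contains $\Oh(1/\eps)$ modules, giving $|\confs|=|\msizes|^{\Oh(1/\eps)}=2^{\Oh(\nicefrac{1}{\eps}\log\nicefrac{1}{\eps})}$. By Observation~\ref{rem:MCIP} this means brick size $t=2^{\Oh(\nicefrac{1}{\eps}\log\nicefrac{1}{\eps})}$, brick number $n$, $r=\Oh(1/\eps^2)$, $s=1$; the scaling step forces $\Delta\le 1/\eps^2$, while all right-hand sides, objective coefficients and variable bounds are $\Oh(m/\eps^2)$, so $\varphi=\Oh(\log(m/\eps^2))$ and $\Phi=\Oh(m/\eps^2)$. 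Substituting into Theorem~\ref{thm:solving_n-fold} and simplifying (the dominant term $(rs\Delta)^{\Oh(r^2s+rs^2)}=2^{\Oh(\nicefrac{1}{\eps^4}\log\nicefrac{1}{\eps})}$ absorbs $t^2$ and the remaining $\poly(1/\eps)$ factors) gives a solve time of $2^{\Oh(\nicefrac{1}{\eps^4}\log\nicefrac{1}{\eps})}n^2\log^2 m\log nm$. It then remains to add the non-IP steps: building $I_2$ is $\poly(|I|)$ and dominated, while constructing the schedule from the IP solution as in the proof of Lemma~\ref{lem:split_IP_to_sched} and performing the final reinsertion costs $\Oh((n+m)/\eps^2)$. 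Taking the maximum of these contributions and absorbing the $\poly(1/\eps)$ factors into the exponential term yields the stated bound $2^{\Oh(\nicefrac{1}{\eps^4}\log\nicefrac{1}{\eps})}n^2m\log m\log nm$.

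The step I expect to require the most care is not any single calculation but keeping the dual-approximation bookkeeping tight: one must verify that each makespan bound in the chain $T\to\bar T\to\breve T$ is large enough to absorb the corresponding transformation in \emph{both} directions, and in particular that the free-space bound $L$ is carried through unchanged because every makespan increase is done "properly" (this is precisely the point flagged in the lemma statements and in the general remark on $(T',L')$-schedules). Only if this is done consistently is it guaranteed that reporting infeasibility in Step~3 is never a false negative; the remainder is routine substitution into Observation~\ref{rem:MCIP} and Theorem~\ref{thm:solving_n-fold}.
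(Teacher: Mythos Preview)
Your proposal is correct and follows essentially the same approach as the paper: the correctness is obtained by chaining Lemmata~\ref{lem:split_rounding1}, \ref{lem:split_rounding2} and~\ref{lem:split_IP_to_sched} in both directions (the paper merely summarizes this as ``the procedure is correct due to the above results''), and the running-time analysis bounds the $\MCIP$ parameters and invokes Observation~\ref{rem:MCIP} and Theorem~\ref{thm:solving_n-fold} exactly as the paper does, including the $\Oh((n+m)/\eps^2)$ cost for the schedule construction that ultimately introduces the factor $m$ in the final bound.
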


\subsection{Preemptive Model}\label{sec:EPTAS_preemptive}

In the preemptive model we have to actually consider the time-line of the schedule on each machine instead of just the assignment of the jobs or job pieces, and this causes some difficulties.
For instance, we will have to argue that it suffices to look for a schedule with few possible starting points, and we will have to introduce additional constraints in the IP in order to ensure that pieces of the same job do not overlap.
Our first step, in dealing with this extra difficulty is to introduce some concepts and notation:
For a given schedule with a makespan bound $T$, we call a job piece together with its setup a \emph{block}, and we call the schedule $X$-layered, for some value $X$, if each block starts at a multiple of $X$.
Corresponding to this, we call the time in the schedule between two directly succeeding multiples of $X$ a \emph{layer} and the corresponding time on a single machine a \emph{slot}.
We number the layers bottom to top and identify them with their number, that is, the set of layers $\layers$ is given by $\sett{\ell\in\ZZ_{>0}}{(\ell-1) X\leq T}$.
Note that in an $X$-layered schedule, there is at most one block in each slot and for each layer there can be at most one block of each job present.
Furthermore, for $X$-layered schedules, we slightly alter the definition of free space:
We solely count the space from slots that are completely free.
If in such a schedule, for each job there is at most one slot occupied by this job but not fully filled, we additionally call the schedule \emph{layer-compliant}.

\subsubsection*{Simplification of the Instance}

In the preemptive model we distinguish \emph{big}, \emph{medium} and \emph{small} setup jobs, using two parameters $\delta$ and $\mu$:
The big setup jobs $\bsjobs$ are those with setup time at least $\delta T$, the small $\ssjobs$ have a setup time smaller than $\mu T$, and the medium $\msjobs$ are the ones in between.
We set $\mu = \eps^2\delta$ and we choose $\delta\in\set{\eps^1,\dots,\eps^{\nicefrac{2}{\eps^2}}}$ such that the summed up processing time together with the summed up setup time of the medium setup jobs is upper bounded by $m \eps T$, i.e., $\sum_{j\in\msjobs}(s_j + p_j)\leq m \eps T$.
If there is a schedule with makespan $T$, such a choice is possible, because of the pidgeon hole principle, and because the setup time of each job has to occur at least once in any schedule.
Similar arguments are widely used, e.g. in the context of geometrical packing algorithms.
Furthermore we distinguish the jobs by processing times, calling those with processing time at least $\eps T$ \emph{big} and the others \emph{small}.
For a given set of jobs $J$, we call the subsets of big or small jobs $J_\pbig$ or $J_\psmall$ respectively.
We perform three simplification steps, aiming for an instance in which the small and medium setup jobs are big; small setup jobs have setup time $0$; and for which an $\eps\delta T$-layered, layer-compliant schedule exists. 

Let $I_1$ be the instance we get by removing the small jobs with medium setup times $\msjobs_\psmall$ from the given instance $I$.
\begin{lemma}\label{lem_prmt_simplification_1}
If there is a schedule with makespan at most $T$ for $I$, there is also such a schedule for $I_1$, and if there is a schedule with makespan at most $T'$ for $I_1$ there is a schedule with makespan at most $T' + (\eps + \delta)T$ for $I$.
\end{lemma}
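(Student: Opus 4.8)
The plan is to prove the two directions separately; the first is immediate and the second is the substantive one. For the forward direction, observe that $I_1$ is obtained from $I$ by deleting the job set $\msjobs_\psmall$, so any feasible schedule for $I$ with makespan at most $T$ restricts to a feasible schedule for $I_1$ with makespan at most $T$ — simply forget the deleted job parts and the associated setups. No interaction constraints can be violated by removing jobs, so nothing needs to be checked here.

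For the reverse direction, I would start from a schedule for $I_1$ with makespan at most $T'$ and greedily reinsert the jobs of $\msjobs_\psmall$. First I would form a sequence of items consisting of, for each job $j\in\msjobs_\psmall$, its setup $s_j$ immediately followed by the (unsplit) job of length $p_j$; since these are small jobs and small-or-medium setups, each such job has $p_j < \eps T$ and $s_j < \delta T$, so the length of the block $s_j+p_j$ is at most $(\eps+\delta)T$. I would then append these blocks, one per machine, on top of the existing schedule in a next-fit manner: on each machine I add whole blocks at the top of its schedule. Because we are working in the preemptive model, I must also respect the no-overlap-of-parts-of-the-same-job constraint, but since each reinserted job is placed entirely on a single machine as one uninterrupted block, this constraint is trivially satisfied. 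The key quantitative point is that the total length of all reinserted blocks is $\sum_{j\in\msjobs_\psmall}(s_j+p_j)\le\sum_{j\in\msjobs}(s_j+p_j)\le m\eps T$ by the choice of $\delta$, so on average each machine receives at most $\eps T$ of additional load; a next-fit / load-balancing argument then shows we can distribute the blocks so that each machine receives at most one "overhanging" block beyond an added height of $\eps T$, and since a single block has height at most $(\eps+\delta)T$, the makespan grows by at most $(\eps+\delta)T$, giving a schedule for $I$ of makespan at most $T'+(\eps+\delta)T$.

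The main obstacle, and the point that needs care, is making the distribution argument precise: one has to argue that the blocks, each of bounded height $(\eps+\delta)T$ and total height $\le m\eps T$, can be assigned to the $m$ machines so that no machine gets more than $\eps T$ worth of blocks \emph{except} for possibly one final block that tips it over. The clean way is a standard greedy/next-fit pass over the machines: keep filling the current machine with blocks until its added height would exceed $\eps T$, then move on; since the total is at most $m\eps T$, one never runs out of machines before placing all blocks, and each machine's added height is at most $\eps T$ plus the height of one block, i.e.\ at most $\eps T+(\eps+\delta)T$ — which is slightly worse than claimed. To recover the stated bound $(\eps+\delta)T$ one instead bounds the added height on each machine by $\max(\text{one block height},\ \eps T)\le(\eps+\delta)T$: a machine is only "overfilled" past $\eps T$ if the overflowing block is the first one placed there, in which case the added height is just that single block's height $\le(\eps+\delta)T$; otherwise the added height is at most $\eps T\le(\eps+\delta)T$. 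Finally I would note that this increase does not affect the free-space bookkeeping in the direction we need, and that the reinserted jobs require no new setups beyond the ones included in their blocks, so the resulting schedule for $I$ is feasible with makespan at most $T'+(\eps+\delta)T$.
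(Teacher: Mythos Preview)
Your forward direction is fine, but the backward direction contains a genuine gap: the greedy placement of \emph{whole} blocks does not achieve the bound $(\eps+\delta)T$, and your recovery argument is incorrect. Concretely, take $m=2$, $\eps T=1$, $\delta T=0.01$, and three jobs in $\msjobs_\psmall$ each with $s_j=0.005$ and $p_j=0.655$, hence block length $0.66$. The total is $1.98\le m\eps T=2$, so the hypothesis is satisfied, but no assignment of whole blocks to two machines keeps the added load at most $(\eps+\delta)T=1.01$: some machine must receive two blocks and thus $1.32$. Your claim ``a machine is only overfilled past $\eps T$ if the overflowing block is the first one placed there'' is simply false for next-fit, and the variant where you refuse to place an overflowing block runs out of machines (as the same example shows). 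The only bound the next-fit argument actually yields is $\eps T$ plus one block, i.e.\ $(2\eps+\delta)T$, which is what you first obtained and then tried to improve.

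The paper's proof avoids this by exploiting preemption: it reserves on each machine a window $[T'+\delta T,\,T'+(\delta+\eps)T]$ of length exactly $\eps T$ and fills these windows greedily with the setup/job sequence, \emph{splitting} a job when the window boundary is reached (and discarding a setup if that is the current item). Because the windows have total length $m\eps T$ and the sequence has total length at most $m\eps T$, everything fits. The only defect is that the first job fragment on each machine may lack a setup; this is repaired using the reserved buffer $[T',T'+\delta T]$, which has length $\ge\delta T\ge s_j$. The splitting is essential for achieving the tight bound---your decision to keep each removed job in one piece is exactly what makes your argument fail.
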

\begin{proof}
The first claim is obvious.
For the second, we create a sequence containing the jobs from $\msjobs_\psmall$ each directly preceded by its setup time.
Recall that the overall length of the objects in this sequence is at most $m\eps T$, and the length of each job is bounded by $\eps T$.
We greedily insert the objects from the sequence, considering each machine in turn.
On the current machine we start at time $T' + \delta T$ and keep inserting until $T' + \delta T + \eps T$ is reached.
If the current object is a setup time, we discard it and continue with the next machine and object.
If, on the other hand, it is a job, we split it, such that the remaining space on the current machine can be perfectly filled.
We can place all objects like this, however the first job part placed on a machine might be missing a setup.
We can insert the missing setups because they have length at most $\delta T$ and between time $T'$ and $T'+ \delta T$ there is free space.
\end{proof}
Next, we consider the jobs with small setup times:
Let $I_2$ be the instance we get by removing the small jobs with small setup times $\ssjobs_\psmall$ and setting the setup time of the big jobs with small setup times to zero, i.e., $\bar{s}_j=0$ for each $j\in \ssjobs_\pbig$.
Note that in the resulting instance each small job has a big setup time.
Furthermore, let $L := \sum_{j\in \ssjobs_\psmall} p_j + s_j$.
Then $L$ is an obvious lower bound for the space taken up by the jobs from $\ssjobs_\psmall$ in any schedule.
\begin{lemma}\label{lem_prmt_simplification_2}
If there is a schedule with makespan at most $T$ for $I_1$, there is also a $(T,L)$-schedule for $I_2$; and if there is a $\gamma T$-layered $(T',L)$-schedule for $I_2$, with $T'$ a multiple of $\gamma T$, there is also a schedule with makespan at most $(1 + \gamma^{-1}\mu)T'  + (\mu + \eps) T$ for $I_1$.
\end{lemma}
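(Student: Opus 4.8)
The first statement is essentially free, so I would dispose of it at once: given a schedule of makespan at most $T$ for $I_1$, delete every piece and every setup of the jobs in $\ssjobs_\psmall$, and delete the now-superfluous setups of the jobs in $\ssjobs_\pbig$, leaving everything else untouched. The makespan cannot increase, and since any schedule must run each job of $\ssjobs_\psmall$ for at least its processing time $p_j$ and pay its setup $s_j$ at least once, the removed mass is at least $L=\sum_{j\in\ssjobs_\psmall}(p_j+s_j)$; hence the idle time with respect to $T$ is now at least $L$, i.e.\ we obtain a $(T,L)$-schedule for $I_2$.

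For the converse I would take a $\gamma T$-layered $(T',L)$-schedule for $I_2$ with $T'$ a multiple of $\gamma T$, so that it uses exactly the layers $1,\dots,T'/(\gamma T)$, and rebuild a feasible schedule for $I_1$ in two stages. Stage one reinstates the setups that were zeroed out on the jobs of $\ssjobs_\pbig$: since each such setup has length $s_j<\mu T$, the plan is to enlarge every layer from height $\gamma T$ to $\gamma T+\mu T$, adding the extra $\mu T$ at the same layer boundaries on all machines simultaneously so that the relative timing between machines — and with it the non-overlap of two pieces of one job — is preserved, and then to place, into the $\mu T$ of fresh idle room created directly in front of each piece of a job in $\ssjobs_\pbig$, the missing setup of that piece. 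With $T'/(\gamma T)$ layers this blows the makespan bound up from $T'$ to $T'+\tfrac{\mu}{\gamma}T'=(1+\gamma^{-1}\mu)T'$, while every completely free slot survives (only larger), so the free space remains at least $L$.

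Stage two reinserts the jobs of $\ssjobs_\psmall$, each of processing time below $\eps T$ and setup time below $\mu T$, whose total size together with one setup apiece is exactly $L$. I would first rearrange each machine, if necessary, so that its idle slots become contiguous at the top, append a strip of height $(\mu+\eps)T$ on every machine, order the jobs of $\ssjobs_\psmall$ into a sequence in which each job is immediately preceded by its setup, and fill the resulting free region greedily in next-fit order, machine by machine, cutting the current job whenever a machine's free region is exhausted and prepending a fresh setup to its continuation on the next machine; the appended strip absorbs the at most one job and one setup by which the process overshoots on the machine it is currently filling. Since the available free space is at least the total mass to be placed, all of $\ssjobs_\psmall$ is accommodated, the makespan grows by at most $(\mu+\eps)T$, and reinstating the original positive setups of $\ssjobs_\pbig$ in place of the zeros costs nothing (we only ever added setups); this yields a schedule for $I_1$ of makespan at most $(1+\gamma^{-1}\mu)T'+(\mu+\eps)T$.

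The step I expect to be the real obstacle is making the two rearrangements in the second half respect the preemptive constraints: the layer enlargement of stage one must be carried out so that no setup is ever interrupted and no two pieces of one job become concurrent, which is delicate precisely because the blocks of the large-setup jobs span many layers and are crossed by the boundaries at which we wish to insert room, and the idle-slot consolidation of stage two must likewise not create overlapping pieces of a common job. Here I would lean on the sharper structure of the schedules one actually works with — layer-compliance, and the placement of the large-setup blocks — to localize the enlargement to boundaries that cut no setup and to perform the consolidation one machine at a time without conflicts. Stage two's packing is otherwise routine, the only bookkeeping being to charge the extra setups generated by the cuts to the $(\mu+\eps)T$ budget, which is affordable because only small jobs and small setups are involved.
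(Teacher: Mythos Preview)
Your treatment of the first claim and of the layer-enlargement idea (your Stage~1) is essentially what the paper does: insert a $\mu T$ window at every layer boundary on all machines simultaneously, which preserves non-overlap and provides room for the zeroed setups of $\ssjobs_\pbig$. The accounting $(1+\gamma^{-1}\mu)T'$ is correct.

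The genuine gap is in Stage~2, and you already put your finger on it without closing it. The step ``rearrange each machine so that its idle slots become contiguous at the top'' is not safe in the preemptive model: compressing the blocks on one machine shifts them in time by an amount that depends on that machine alone, so two pieces of the same job on different machines can be pushed into overlap. A two-machine example suffices --- job $j$ occupies $[2,3]$ on machine $i$ (with $[0,2]$ idle) and $[0,1]$ on machine $i'$; after consolidation both pieces sit in $[0,1]$. The ``sharper structure'' you invoke (layer-compliance, placement of large-setup blocks) is not available here: the lemma is stated for an arbitrary $\gamma T$-layered $(T',L)$-schedule, and you cannot assume more. Your greedy filling description is also internally inconsistent --- you both cut the current job across machines and claim the strip absorbs the whole overshoot on the current machine; in the former reading the two parts of the cut job can again overlap in time, since the free regions on consecutive machines sit at different, possibly overlapping, heights.

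The paper avoids consolidation altogether. It fills the free slots \emph{in place}, walking through the slots of each machine in temporal order and cutting a small job only between consecutive slots of the \emph{same} machine (the last overflow on a machine is parked at time $T'$). This guarantees every job of $\ssjobs_\psmall$ lives on a single machine, so no self-overlap can arise. The price is that a small job may now occupy several non-adjacent slots on one machine and hence need several setups; but these extra setups are paid out of exactly the same $\mu T$ windows you used in Stage~1, because every slot boundary is a layer boundary. In other words, the paper performs your window insertion \emph{after} the greedy filling, so that the windows simultaneously supply the missing setups of $\ssjobs_\pbig$ and the new setups of the cut $\ssjobs_\psmall$ pieces; together with the $\eps T$ overhang at time $T'$ this gives the bound $(1+\gamma^{-1}\mu)T' + (\mu+\eps)T$.
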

\begin{proof}
The first claim is obvious, and for the second consider a $\gamma T$-layered $(T',L)$-schedule for $I_2$.
We create a sequence that contains the jobs of $\ssjobs_\psmall$ and their setups, such that each job is directly preceded by its setup.
Remember that the remaining space in partly filled slots is not counted as free space.
Hence, since the overall length of the objects in the sequence is $L$, there is is enough space in the free slots of the schedule to place them.
We do so in a greedy fashion guaranteeing that each job is placed on exactly one machine:
We insert the objects from the sequence into the free slots, considering each machine in turn and starting on the current machine from the beginning of the schedule and moving on towards its end.
If an object cannot be fully placed into the current slot there are two cases:
It could be a job or a setup.
In the former case, we cut it and continue placing it in the next slot, or, if the current slot was the last one, we place the rest at the end of the schedule.
In the latter case, we discard the setup and continue with the next slot and object.
The resulting schedule is increased by at most $\eps T$, which is caused by the last job placed on a machine.

To get a proper schedule for $I_1$ we have to insert some setup times:
For the large jobs with small setup times and for the jobs that were cut in the greedy procedure.
We do so by inserting a time window of length $\mu T$ at each multiple of $\gamma T$ and at the end of the original schedule on each machine.
By this, the schedule is increased by at most $\gamma^{-1}\mu T' + \mu T$.
Since all the job parts in need of a setup are small and did start at multiples of $\mu T$ or at the end, we can insert the missing setups.
Note that blocks that span over multiple layers are cut by the inserted time windows.
This, however, can easily be repaired by moving the cut pieces properly down.
\end{proof}

We continue by rounding the medium and big setup and all the processing times.
In particular, we round the processing times and the big setup times up to the next multiple of $\eps\delta T$ and the medium setup times to the next multiple of $\eps\mu T$, i.e., $\bar{p}_j= \ceil{p_j/(\eps\delta T)} \eps\delta T$ for each job $j$, $\bar{s}_j= \ceil{s_j/(\eps\delta T)} \eps\delta T$ for each big setup job $j\in\bsjobs$, and $\bar{s}_j= \ceil{s_j/(\eps\mu T)} \eps\mu T$ for each medium setup job $j\in\msjobs_\pbig$.
\begin{lemma}\label{lem_prmt_simplification_3}
If there is a $(T,L)$-schedule for $I_2$, there is also an $\eps\delta T$-layered, layer-compliant $((1+3\eps)T,L)$-schedule for $I_3$; and if there is a $\gamma T$-layered $(T',L)$-schedule for $I_3$, there is also such a schedule for $I_2$.
\end{lemma}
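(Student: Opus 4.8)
The second implication is immediate. Given a $\gamma T$-layered $(T',L)$-schedule for $I_3$, I would keep every starting point, every partition size $\jcard(j)$, and every fraction $\jfrac_j(\cdot)$, but replace each setup length $\bar{s}_j$ by $s_j$ and each piece length $\jfrac_j(k)\bar{p}_j$ by $\jfrac_j(k)p_j$. Since all lengths only decrease, every block still starts at a multiple of $\gamma T$, the makespan does not grow, and every completely free slot stays completely free, so the result is a $\gamma T$-layered $(T',L)$-schedule for $I_2$.

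For the first implication I would build the layered schedule for $I_3$ from scratch, using the given $(T,L)$-schedule for $I_2$ only to control the total block volume. In that schedule the blocks occupy at most $mT-L$ space, and passing from the $I_2$-lengths to the $I_3$-lengths increases the volume contributed by any single job by at most a factor $1+\Oh(\eps)$: for a big or medium setup job this is because the setup time (at least $\delta T$, resp.\ $\mu T$) dominates the rounding error ($\eps\delta T$, resp.\ $\eps\mu T$) and $\bar{p}_j\le(1+\eps)p_j$ for big jobs, and for a job that is forced to be split because $s_j+p_j$ exceeds the makespan, the number of parts needed in $I_3$ is no larger than in the $I_2$-schedule, since raising the makespan to $(1+3\eps)T$ enlarges the room available for one part by at least $2\eps T$, which outweighs the growth of $\bar{p}_j$. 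Hence the total $I_3$-volume stays below $(1+\Oh(\eps))(mT-L)$.

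Now fix the number of layers to be the largest $\ell$ with $\ell\,\eps\delta T\le(1+3\eps)T$. Because $\bar{p}_j$ and, for $j\in\bsjobs$, also $\bar{s}_j$ are multiples of $\eps\delta T$, every big setup job and every job in $\ssjobs_\pbig$ can be re-split into parts whose blocks occupy a whole number of slots, i.e.\ with no ragged end at all. Each medium setup job $j\in\msjobs_\pbig$ is left unsplit; its single block has length $\bar{s}_j+\bar{p}_j\le(1+3\eps)T-\eps\delta T$ (using $\delta\le\eps$ and $\eps$ small) and, once padded up to the next layer boundary, occupies at most one partial slot, which belongs to $j$ alone. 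Thus every job has at most one partial slot and the schedule is layer-compliant. I would then place the blocks into the machine-layer slots greedily, dedicating $\Oh(\eps m)$ machines to the medium setup blocks (possible since these jobs have total length $\Oh(\eps m T)$) and, on every machine, stacking its blocks from the bottom so that the idle slots form a suffix of completely free slots. The volume bound from the previous paragraph, together with the fact that layering wastes at most one slot per padded block, then yields that at least $L$ worth of slots stay completely free, giving the desired $\eps\delta T$-layered, layer-compliant $((1+3\eps)T,L)$-schedule for $I_3$.

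The main obstacle is the greedy placement in the last step. The parts of a single job have to be assigned to pairwise distinct layers (two parts of a job in the same layer on any two machines would overlap in time, so this is forced anyway by the definition of a layered schedule), each slot may hold at most one block, and the completely free slots must still total $L$. Fitting all of this into a makespan of $(1+3\eps)T$ amounts to distributing the $3\eps T$ of slack among three competing demands---the $1+\Oh(\eps)$ volume blow-up from the rounding, the one slot wasted per layered medium block, and the preservation of $L$---and checking that, after the customary rescaling of $\eps$, these stay jointly within budget; everything else is routine bookkeeping.
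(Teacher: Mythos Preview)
Your treatment of the second implication is fine.

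For the first implication, the gap you yourself call ``the main obstacle'' is real, and it is not routine bookkeeping. A volume bound of the form $\sum_j(\bar s_j+\bar p_j)\le m(1+3\eps)T-L$ is necessary but far from sufficient for a feasible layered preemptive schedule. Take $m=2$, three small-setup jobs with processing time $2T/3$ each, and $\eps$ small. A $(T,0)$-schedule for $I_2$ exists (put $a$ and one third of $c$ on machine~1, $b$ and the other third of $c$ on machine~2, with the two $c$-pieces in disjoint time windows). But three single blocks of size $\approx 2T/3$ cannot be packed into two columns of height $(1+3\eps)T$; and if you split $c$ and ``stack from the bottom'' as you propose, the two $c$-pieces land in the same layers on the two machines and overlap in time. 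The simultaneous satisfaction of the per-machine capacity constraint and the per-job non-overlap constraint is exactly the content of the lemma; calling it greedy does not make it so.

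The paper takes a completely different route. It does not rebuild the schedule from volume information; it \emph{transforms} the given $(T,L)$-schedule for $I_2$. First the time axis is stretched by $(1+3\eps)$, which preserves non-overlap for free, and the resulting slack inside each block's ``container'' is used to align the big-setup blocks to the $\eps\delta T$-grid and to realise the rounding of processing and setup times. The remaining small- and medium-setup jobs may now sit only partially in some layers. For these the paper builds a flow network (a node per such job, a node per layer, unit capacity on a job--layer edge iff the job is partially present in that layer, integral capacities from source and to sink) and uses integrality of a maximum flow to decide, for every such job--layer pair, whether the job will occupy a full slot there or vacate the layer entirely. An elaborate sequence of machine swaps, push-and-cut steps, and repair operations then realises the integral flow as a layered, layer-compliant schedule while keeping the original free slots intact. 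The flow argument is precisely what stands in for your missing placement step; it cannot be replaced by a greedy procedure, and it is the reason the original non-overlap structure must be carried through rather than discarded.
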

While the second claim is easy to see, the proof of the first is rather elaborate and unfortunately a bit tedious.
Hence, since we believe Lemma \ref{lem_prmt_simplification_3} to be fairly plausible by itself, we postpone its proof to the end of the section and proceed discussing its use.

For the big and small setup jobs both processing and setup times are multiples of $\eps\delta T$.
Therefore, the length of each of their blocks in an $\eps\delta T$-layered, layer-compliant schedule is a multiple of $\eps\delta T$.
For a medium setup job, on the other hand, we know that the overall length of its blocks has the form $x\eps\delta T + y \eps\mu T$, with non-negative integers $x$ and $y$.
In particular it is a multiple of $\eps\mu T$, because $\eps\delta T = (1/\eps^2) \eps\mu T$.
In a $\eps\delta T$-layered, layer-compliant schedule, for each medium setup job the length of all but at most one block is a multiple of $\eps\delta T$ and therefore a multiple of $\eps\mu T$.
If both the overall length and the lengths of all but one block are multiples of $\eps\mu T$, this is also true for the one remaining block.
Hence, we will use the $\MCIP$ not to find an $\eps\delta T$-layered, layer-compliant schedule in particular, but an $\eps\delta T$-layered one with block sizes as described above and maximum free space.

Based on the simplification steps, we define two makespan bounds $\bar{T}$ and $\breve{T}$:
Let $\bar{T}$ be the makespan bound we get by the application of the Lemmata \ref{lem_prmt_simplification_1}-\ref{lem_prmt_simplification_3}, i.e., $\bar{T} = (1+3\eps)T$.
We will use the $\MCIP$ to find an $\eps\delta T$-layered $(\bar{T},L)$-schedule for $I_3$, and apply the Lemmata \ref{lem_prmt_simplification_1}-\ref{lem_prmt_simplification_3} backwards to get schedule for $I$ with makespan at most $\breve{T} = (1 + (\eps\delta)^{-1}\mu)\bar{T} + (\mu + \eps) T + (\eps + \delta)T \leq (1 + 9\eps)T$, using $\eps\leq 1/2$.

\subsubsection*{Utilization of the MCIP}

Similar to the splittable case, the basic objects are the (big) jobs, i.e.,  $\bobs = \jobs_{\pbig}$, and their single value is their processing time ($D = 1$).
The modules, on the other hand, are more complicated, because they additionally need to encode which layers are exactly used and, in case of the medium jobs, to which degree the last layer is filled.
For the latter we introduce buffers, representing the unused space in the last layer, and define modules as tuples $(\ell,q,s,b)$ of starting layer, job piece size, setup time and buffer size.
For a module $M = (\ell,q,s,b)$, we write $\ell_M = \ell$, $q_M=q$, $s_M=s$ and $b_M = b$, and we define the size $\size(M)$ of $M$ as $ s + q + b$.
The overall set of modules $\mods$ is the union of the modules for big, medium and small setup jobs $\bsmods$, $\msmods$ and $\ssmods$ that are defined in the following.
For this let 
$Q^{\mathrm{bst}} = \sett{q}{q = x\eps\delta T, x\in\ZZ_{> 0},q \leq \bar{T}}$ and
$Q^{\mathrm{mst}} = \sett{q}{q = x\eps\mu T, x\in\ZZ_{> 0},q \leq \bar{T}}$ be the sets of possible job piece sizes of big and medium setup jobs; 
$S^{\mathrm{bst}} = \sett{s}{s = x\eps\delta T, x\in\ZZ_{\geq 1/\eps},s \leq \bar{T}}$ and
$S^{\mathrm{mst}} = \sett{s}{s = x\eps\mu T, x\in\ZZ_{\geq 1/\eps},s \leq \delta T }$ be the sets of possible big and medium setup times; 
$B = \sett{b}{b = x\eps\mu T, x\in\ZZ_{\geq 0}, b < \eps\delta T}$ the set of possible buffer sizes; 
and $\Xi = \set{1,\dots,1/(\eps\delta) + 3/\delta}$ the set of layers. 
We set:
\begin{align*}
\bsmods &= \sett{(\ell,q,s,0)}{\ell\in \Xi,q\in Q^{\mathrm{bst}},s\in S^{\mathrm{bst}}, (\ell - 1)\eps\delta T + s + q\leq \bar{T} }\\
\msmods &= \sett{(\ell,q,s,b)\in \Xi\times Q^{\mathrm{mst}}\!\!\times S^{\mathrm{mst}}\!\!\times B}{x = s + q + b\in\eps\delta T\ZZ_{>0}, (\ell - 1)\eps\delta T + x\leq \bar{T} }\\
\ssmods &= \sett{(\ell,\eps\delta T,0,0)}{\ell\in \Xi}
\end{align*}
Concerning the small setup modules, note that the small setup jobs have a setup time of $0$ and therefore may be covered slot by slot.
We establish $\eps\mu T=1$ via scaling, to ensure integral values.
A big, medium or small job is eligible for a module, if it is also big, medium or small respectively and the setup times fit.

We have to avoid that two modules $M_1,M_2$, whose corresponding time intervals overlap, are used to cover the same job or in the same configuration.
Such an overlap is given, if there is some layer $\ell$ used by both of them, that is, $(\ell_M - 1)\eps\delta T \leq (\ell - 1)\eps\delta T < (\ell_M - 1)\eps\delta T + \size(M)$ for both $M \in \set{M_1,M_2}$.
Hence, for each layer $\ell\in\Xi$, we set $\mods_\ell\subseteq \mods$ to be the set of modules that use layer $\ell$.
Furthermore, we partition the modules into groups $\groups$ by size and starting layer, i.e., $\groups = \sett{G\subseteq \mods}{M,M'\in G\Rightarrow \size(M)=\size(M') \wedge \ell_M = \ell_{M'}}$.
The size of a group $G\in\groups$ is the size of a module from $G$, i.e. $\size(G) = \size(M)$ for $M\in G$.
Unlike before we consider \emph{configurations of module groups} rather than module sizes. 
More precisely, the set of configurations $\confs$ is given by the configurations of groups, such that for each layer at most one group using this layer is chosen, i.e., $\confs = \sett{C\in\ZZ_{\geq 0}^\groups}{\forall\ell\in\layers:\sum_{G\subseteq\mods_\ell}C_G\leq 1}$.
With this definition we prevent overlap conflicts on the machines.
Note that unlike in the cases considered so far, the size of a configuration does not correspond to a makespan in the schedule, but to used space, and the makespan bound is realized in the definition of the modules instead of in the definition of the configurations.
To also avoid conflicts for the jobs, we extend the basic $\MCIP$ with additional locally uniform constraints.
In particular, the constraints of the extended $\MCIP$ for the above definitions with adapted notation and without duplication of the configuration variables are given by:
\begin{align}
\sum_{C\in\confs}x_{C} & = m & \label{eq:MCIP_machs_preempt} \\ 
\sum_{C\in\confs(T)} C_G x_{C} & = \sum_{j\in\jobs}\sum_{M\in G}y^{(j)}_M & \forall G \in\groups \label{eq:MCIP_mods_preempt}\\
\sum_{M\in\mods}q_M y^{(j)}_M & = p_j & \forall j\in\jobs \label{eq:MCIP_jobs_preempt}\\
\sum_{M\in\mods_\ell} y^{(j)}_M & \leq 1 & \forall j\in\jobs, \ell\in\Xi \label{eq:MCIP_colis_preempt}
\end{align}
Like in the first two cases we minimize the summed up size of the configurations, via the objective function $\sum_C \size(C)x_C$.
Note that in this case the size of a configuration does not have to equal its height.
It is easy to see that the last constraint is indeed locally uniform.
However, since we have an inequality instead of an equality, we have to introduce $|\layers|$ slack variables in each brick, yielding:
\begin{observation}\label{rem:MCIP_prmt}
The $\MCIP$ extended like above is an $n$-fold IP with brick-size $t = |\mods| + |\confs| + |\Xi|$, brick number $n = |\jobs|$, $r = |\groups|+1$ globally uniform and $s = D + |\Xi|$ locally uniform constraints.
\end{observation}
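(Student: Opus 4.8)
The plan is to mirror the reasoning behind Observation~\ref{rem:MCIP}, accounting for the two changes in the extended program: the configurations now range over multiplicity vectors of module \emph{groups} $\groups$ rather than of module sizes $\msizes$, and there is the extra family of constraints~\eqref{eq:MCIP_colis_preempt}. The first change is cosmetic for the structural analysis, since~\eqref{eq:MCIP_mods_preempt} plays exactly the role of~\eqref{eq:MCIP_mods} with $\msizes$ replaced by $\groups$; so the work lies in absorbing~\eqref{eq:MCIP_colis_preempt} into the $n$-fold shape of Theorem~\ref{thm:solving_n-fold}.

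First I would restore the duplicated configuration variables $x^{(j)}_C$ for every $j\in\jobs$, $C\in\confs$, exactly as in the discussion preceding Observation~\ref{rem:MCIP}, so that~\eqref{eq:MCIP_machs_preempt} reads $\sum_{j\in\jobs}\sum_{C\in\confs}x^{(j)}_C = m$ and~\eqref{eq:MCIP_mods_preempt} reads $\sum_{j\in\jobs}\sum_{C\in\confs}C_G x^{(j)}_C = \sum_{j\in\jobs}\sum_{M\in G}y^{(j)}_M$; the equivalence of this reformulation with the stated program is precisely the back-and-forth argument given there. Next, for each $j\in\jobs$ and $\ell\in\layers$ I would introduce a nonnegative slack variable $z^{(j)}_\ell$ and replace~\eqref{eq:MCIP_colis_preempt} by $\sum_{M\in\mods_\ell}y^{(j)}_M + z^{(j)}_\ell = 1$, with lower bound $0$ and upper bound $1$ on each $z^{(j)}_\ell$.

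Then the block structure can be read off directly. The brick indexed by $j\in\jobs$ collects the $|\confs|$ variables $x^{(j)}_C$, the $|\mods|$ variables $y^{(j)}_M$, and the $|\layers|$ slacks $z^{(j)}_\ell$, so $t = |\mods| + |\confs| + |\layers|$ and the brick number is $n = |\jobs|$. Constraints~\eqref{eq:MCIP_machs_preempt} and~\eqref{eq:MCIP_mods_preempt} couple the variables of all bricks with an identical coefficient block per brick, hence form the globally uniform part, yielding the $r = |\groups| + 1$ rows of $A_1$ and the right-hand side $(m,0,\dots,0)$. Constraints~\eqref{eq:MCIP_jobs_preempt} and the slacked version of~\eqref{eq:MCIP_colis_preempt} involve only the $y$- and $z$-variables of a single brick, with the same pattern for every $j$, hence form the locally uniform part, yielding the $s = D + |\layers| = D + |\Xi|$ rows of $A_2$ with per-brick right-hand side $(p_j,1,\dots,1)$. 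This is exactly the $n$-fold product form, proving the claim.

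I do not expect a genuine obstacle here; the only points requiring care are the bookkeeping around the duplicated configuration variables (already settled once before Observation~\ref{rem:MCIP}) and ensuring that the slack variables are added \emph{per brick} rather than globally, so that $A_2$ really has the advertised $D+|\Xi|$ rows and the block-diagonal structure is preserved. Boundedness of the coefficients of all four constraint families and of the objective follows from the scaling step $\eps\mu T = 1$, but that is only relevant for the subsequent running-time analysis and not for verifying the $n$-fold form.
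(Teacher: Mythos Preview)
Your proposal is correct and follows exactly the paper's approach. The paper itself justifies this observation in a single sentence immediately preceding it---noting that the new constraint~\eqref{eq:MCIP_colis_preempt} is locally uniform and that, being an inequality, it requires $|\Xi|$ slack variables per brick---and you have simply spelled out the details (duplicated configuration variables, per-brick slacks, identification of $A_1$ and $A_2$) that the paper leaves implicit.
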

\begin{lemma}\label{lem:prmt_IP_to_sched}
With the above definitions, there is an $\eps\delta T$-layered ($\bar{T}, L$)-schedule for $I_3$ in which the length of a block is a multiple of $\eps\delta T$, if it belongs to a small or big setup job, or a multiple of $\eps\mu T$ otherwise, iff the extended $\MCIP$ has a solution with objective value at most $m\bar{T} - L$.
\end{lemma}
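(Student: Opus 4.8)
\emph{Overview.} The plan is to prove the equivalence by the same back-and-forth translation between geometric schedules and IP solutions used for Lemmata~\ref{lem:class_IP_to_sched} and~\ref{lem:split_IP_to_sched}, but with the layered structure and the two overlap restrictions handled explicitly. The new ingredients are: reading a layered block of a medium setup job as a module together with the right buffer, and using the configuration definition of $\confs$ together with the extra constraints~\eqref{eq:MCIP_colis_preempt} to rule out overlaps on machines and within a job.

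\emph{From a schedule to a solution.} Given an $\eps\delta T$-layered $(\bar T,L)$-schedule for $I_3$ with the stated block lengths, I would look at one machine at a time. Each block starts in some layer $\ell$, consists of a setup of length $s$ followed by a piece of length $q$, and, being layered, occupies an integral number of consecutive slots; let $b$ be the unfilled part of its last slot. For a big or small setup job $b=0$ and $s+q$ is a multiple of $\eps\delta T$, giving a module in $\bsmods$ or $\ssmods$; for a medium setup job $s+q$ is a multiple of $\eps\mu T$ and $s+q+b$ is a multiple of $\eps\delta T$, giving a module in $\msmods$. Grouping the modules on a machine by size and starting layer yields a configuration: since at most one block occupies each slot and blocks do not overlap, for every layer at most one chosen group uses it, so the constraint defining $\confs$ is met. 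Fixing an arbitrary job $j^{*}$, set $x^{(j^{*})}_{C}$ to the number of machines realizing $C$ (and the other $x$-variables to zero), and set $y^{(j)}_M$ to the number of blocks of $j$ that translate to $M$. Then \eqref{eq:MCIP_machs_preempt} counts each machine once; \eqref{eq:MCIP_mods_preempt} holds because each block contributes one unit to a group on both the machine and the job side; \eqref{eq:MCIP_jobs_preempt} holds because the pieces of $j$ sum to $p_j$; and \eqref{eq:MCIP_colis_preempt} holds because in a layered schedule each job has at most one block per layer. The objective equals the total space occupied by blocks (setups, pieces, and buffers), which is at most $m\bar T-L$ since the free space is at least $L$.

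\emph{From a solution to a schedule.} Conversely, let $(x,y)$ be feasible with objective at most $m\bar T-L$. Since all modules of a group share the same starting layer and size, each group uses at least its own starting layer, so the constraint defining $\confs$ forces $C_G\le 1$ for every $G$; thus a configuration is a set of pairwise non-overlapping groups. Using \eqref{eq:MCIP_machs_preempt}, match each machine to a configuration, and on a machine matched to $C$ reserve, for each $G$ with $C_G=1$, the slots of layers $\ell_G,\ell_G+1,\dots$ spanned by $\size(G)$. For each job $j$ and module $M$, create $y^{(j)}_M$ blocks (setup of length $s_M$, piece of $j$ of length $q_M$) and place each into a reserved slot of a group containing $M$, leaving the final $b_M$ of that slot empty. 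By \eqref{eq:MCIP_mods_preempt} the number of blocks assigned to each group equals the number of slots reserved for it, so the placement succeeds; by \eqref{eq:MCIP_jobs_preempt} the pieces of $j$ total $p_j$, so $j$ is completely scheduled; and by \eqref{eq:MCIP_colis_preempt} at most one block of $j$ uses any given layer, so no two pieces of $j$ run in parallel. Together with the per-layer restriction on machines this yields a feasible, $\eps\delta T$-layered schedule with block lengths as claimed (multiples of $\eps\delta T$ for big and small setup jobs by the definitions of $\bsmods,\ssmods$, and multiples of $\eps\mu T$ otherwise), and the space occupied by all blocks equals the objective, so the free space — counting only completely empty slots — is at least $m\bar T-(m\bar T-L)=L$.

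\emph{Main obstacle.} The constraint bookkeeping is routine as in the previous two lemmata; the genuinely delicate points are the faithful correspondence between a layered medium-setup block and a module-with-buffer in both directions, and the simultaneous avoidance of the two overlap types — several blocks in one slot (ruled out by the one-group-per-layer structure of $\confs$) and two pieces of the same job in the same layer (ruled out by \eqref{eq:MCIP_colis_preempt}) — while keeping the free-space accounting consistent with the convention that only fully empty slots count and buffers count as used.
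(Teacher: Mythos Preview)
Your proof is correct and follows the same approach as the paper's own proof: translate blocks to modules machine-by-machine to obtain configurations and job-by-job to obtain the $y$-variables, and conversely reserve slots per configuration and fill them with pieces per module, with \eqref{eq:MCIP_colis_preempt} and the one-group-per-layer structure of $\confs$ handling the two overlap conditions.

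One small caveat: a block of a small setup job may span several slots, but $\ssmods$ only contains single-slot modules $(\ell,\eps\delta T,0,0)$, so your sentence ``giving a module in $\bsmods$ or $\ssmods$'' is not literally correct for such blocks; the paper (and the obvious fix) is to set $y^{(j)}_{(\ell,\eps\delta T,0,0)}=1$ for every layer $\ell$ in which the small setup job $j$ is present, i.e., to decompose each such block slot by slot.
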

\begin{proof}
We first consider such a schedule for $I_3$.
For each machine, we can derive a configuration that is given by the starting layers of the blocks together with the summed up length of the slots the respective block is scheduled in.
The size of the configuration $C$ is equal to the used space on the respective machine.
Hence, we can fix some arbitrary job $j$ and set $x^{(j)}_C$ to the number of machines corresponding to $j$ (and $x^{(j')}_C=0$ for $j'\neq j$).
Keeping in mind that in an $\eps\delta T$-layered schedule the free space is given by the free slots, the above definition yields an objective value bounded by $m\bar{T} - L$, because there was free space of at least $L$.
Next, we consider the module variables for each job $j$ in turn:
If $j$ is a small setup job, we set $y^{(j)}_{(\ell,\eps\delta T,0,0)}$ to $1$, if the $j$ occurs in $\ell$, and to $0$ otherwise.
Now, let $j$ be a big setup job.
For each of its blocks, we set $y^{(j)}_{(\ell,z-s_j,s_j,0)} = 1$, where $\ell$ is the starting layer and $z$ the length of the block.
The remaining variables are set to $0$.
Lastly, let $j$ be a medium setup job.
For each of its blocks, we set $y^{(j)}_{(\ell,z-s_j,s_j,b)} = 1$, where $\ell$ is the starting layer of the block, $z$ its length and $b = \ceil{z/(\eps\delta T)}\eps\delta T - z$.
Again, the remaining variables are set to $0$.
It is easy to verify that all constraints are satisfied by this solution.

If, on the other hand, we have a solution $(x,y)$ to the $\MCIP$ with objective value at most $m\bar{T} - L$, we reserve $\sum_{j}x^{(j)}_{C}$ machines for each configuration $C$.
There are enough machines to do this, because of (\ref{eq:MCIP_machs_preempt}).
On each of these machines we reserve space:
For each $G\in\groups$, we create an allocated space of length $\size(G)$ starting from the starting layer of $G$, if $C_G = 1$.
Let $j$ be a job and $\ell$ be a layer.
If $j$ has a small setup time, we create $y^{(j)}_{(\ell,\eps\delta T,0,0)} $ pieces of length $\eps\delta T$ and place these pieces into allocated spaces of length $\eps\delta T$ in layer $\ell$.
If, on the other hand, $j$ is a big or medium setup job, we consider each possible job part length $q\in Q^{\mathrm{bst}}$ or $q\in Q^{\mathrm{mst}}$, create $y^{(j)}_{(\ell,q,s_j,0)} $ or $y^{(j)}_{(\ell,q,s_j,b)}$, with $b = \ceil{q/(\eps\delta T)}\eps\delta T - \eps\delta T$, pieces of length $q$, and place them together with their setup time into allocated spaces of length $q$ in layer $\ell$.
Because of (\ref{eq:MCIP_jobs_preempt}) the entire job is split up by this, and because of (\ref{eq:MCIP_mods_preempt}) there are enough allocated spaces for all the job pieces.
The makespan bound is ensured by the definition of the modules, and overlaps are avoided, due to the definition of the configurations and (\ref{eq:MCIP_colis_preempt}).
Furthermore, the used slots have an overall length equal to the objective value of $(x,y)$ and therefore there is at least $L$ free space. 
\end{proof}

\subsubsection*{Result}

Summing up the above considerations, we get:
\begin{algorithm}\
	\begin{enumerate}
   	\item If there is no suitable class of medium setup jobs, report that there is no schedule with makespan $T$ and terminate the procedure.
		\item Generate the modified instance $I_3$: 
		\begin{itemize}
			\item Remove the small jobs with medium setup times.
			\item Remove the small jobs with small setup times, and decrease the setup time of big jobs with small setup time to $0$.
			\item Round the big processing times, as well as the medium, and the big setup times.
		\end{itemize}
		\item Build and solve the $\MCIP$ for $I_3$. 
		\item If the $\MCIP$ is infeasible, or the objective value greater than $m \bar{T} - L$, report that $I$ has no solution with makespan $T$.
		\item Otherwise build the $\eps\delta T$-layered schedule with makespan $\bar{T}$ and free space at least $L$ for $I_3$.
		\item Transform the schedule into a schedule for $I$ with makespan at most $\breve{T}$:
		\begin{itemize}
			\item Use the prerounding processing and setup times. 
			\item Insert the small jobs with small setup times into the free slots and insert the setup times of the big jobs with small setup times. 
			\item Insert the small jobs with medium setup times.
		\end{itemize}
	\end{enumerate}
\end{algorithm}
We analyze the running time of the procedure, and start by bounding the parameters of the extended $\MCIP$.
We have $|\bobs| = n$ and $D=1$ by definition, and the number of layers $|\layers|$ is obviously $\Oh(1/(\eps\delta)) = \Oh(1/\eps^{2/\eps + 1}) = 2^{\Oh(\nicefrac{1}{\eps}\log\nicefrac{1}{\eps})}$.
Furthermore, it is easy to see that $|Q^{\mathrm{bst}}| = \Oh(1/(\eps\delta))$, $|Q^{\mathrm{mst}}| = \Oh(1/(\eps^3\delta))$, $|S^{\mathrm{bst}}| = \Oh(1/(\eps\delta))$, $|S^{\mathrm{mst}}| = \Oh(1/(\eps^3))$, and $|B|=\Oh(1/\eps^2)$. 
This gives us $\bsmods \leq |\layers||Q^{\mathrm{bst}}||S^{\mathrm{bst}}|$, $\msmods \leq |\layers||Q^{\mathrm{mst}}||S^{\mathrm{mst}}||B| $ and $\ssmods = |\layers|$, and therefore $|\mods| = |\bsmods| + |\msmods| + |\ssmods| = 2^{\Oh(\nicefrac{1}{\eps}\log\nicefrac{1}{\eps})}$.
Since their are $\Oh(1/(\delta\eps)) $ distinct module sizes, the number of groups $|\groups|$ can be bounded by $\Oh(|\layers|/(\eps\delta)) = 2^{\Oh(\nicefrac{1}{\eps}\log\nicefrac{1}{\eps})}$.
Hence, for the number of configurations we get $|\confs| = \Oh((1/(\eps\delta))^{|\groups|}) = 2^{2^{\Oh(\nicefrac{1}{\eps}\log\nicefrac{1}{\eps})}}$.
By Observation \ref{rem:MCIP_prmt}, the modified $\MCIP$ has $r = 2^{\Oh(\nicefrac{1}{\eps}\log\nicefrac{1}{\eps})}$ many globally and $s = 2^{\Oh(\nicefrac{1}{\eps}\log\nicefrac{1}{\eps})}$ many locally uniform constraints; its brick number is $n$, and its brick size is $ t = 2^{2^{\Oh(\nicefrac{1}{\eps}\log\nicefrac{1}{\eps})}}$.
All occurring values in the matrix are bounded by $\bar{T}$, yielding $\Delta \leq \bar{T} = 1/(\eps\mu) + 1/\mu =  2^{\Oh(\nicefrac{1}{\eps}\log\nicefrac{1}{\eps})}$, due to the scaling step.
Furthermore, the numbers in the input can be bounded by $m2^{\Oh(\nicefrac{1}{\eps}\log\nicefrac{1}{\eps})}$ and all variables can be upper bounded by $\Oh(m)$.
Hence, we have $\varphi = \Oh(\log m + 1/\eps \log 1/\eps)$ and $\Phi = \Oh(m)$, and due to Theorem \ref{thm:solving_n-fold} we can solve the $\MCIP$ in time:
\[(rs\Delta)^{\Oh(r^2s+rs^2)}t^2n^2\varphi\log(\Phi)\log(nt\Phi) = 2^{2^{\Oh(\nicefrac{1}{\eps}\log\nicefrac{1}{\eps})}}n^2\log^2 m\log nm\]
A straight-forward realization of the procedure for the creation of the $\eps\delta T$-layered $(\bar{T},L)$-schedule for $I_3$ (the fifth step), which is described in the proof of Lemma \ref{lem:prmt_IP_to_sched}, will take $nm2^{\Oh(\nicefrac{1}{\eps}\log\nicefrac{1}{\eps})}$ time, yielding:
\begin{theorem}
The algorithm for the preemptive model finds a schedule with makespan at most $(1+9\eps)T$ or correctly determines that there is no schedule with makespan $T$ in time $2^{2^{\Oh(\nicefrac{1}{\eps}\log\nicefrac{1}{\eps})}}n^2 m\log m\log nm$.
\end{theorem}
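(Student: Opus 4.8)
The plan is to separate the two assertions of the theorem and, for each, to assemble ingredients that are already available. For the assertion ``the algorithm never wrongly rejects'' I would argue the forward direction: suppose $I$ admits a schedule of makespan $T$. Then step~1 succeeds, i.e.\ a value $\delta$ with $\sum_{j\in\msjobs}(s_j+p_j)\le m\eps T$ is found; this is the pigeonhole argument sketched before Lemma~\ref{lem_prmt_simplification_1}, using that among the $\Oh(1/\eps^2)$ candidate powers of $\eps$ the induced medium bands carry each job's setup-plus-processing mass boundedly often, so some band has total mass at most $m\eps T$ (recall the total processing plus setup mass of any makespan-$T$ schedule is at most $mT$ and each setup is paid at least once). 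Chaining Lemmata~\ref{lem_prmt_simplification_1}, \ref{lem_prmt_simplification_2} and \ref{lem_prmt_simplification_3} forward then yields an $\eps\delta T$-layered, layer-compliant $((1+3\eps)T,L)$-schedule for $I_3$ with blocks of the prescribed granularity, and Lemma~\ref{lem:prmt_IP_to_sched} shows the extended $\MCIP$ has a solution of objective value at most $m\bar T-L$; hence step~4 does not reject.

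For the assertion ``any schedule output is a feasible $(1+9\eps)T$-schedule for $I$'' I would run the chain backwards. Starting from an $\MCIP$ solution of objective value at most $m\bar T-L$, Lemma~\ref{lem:prmt_IP_to_sched} produces an $\eps\delta T$-layered $(\bar T,L)$-schedule for $I_3$ with the right block granularities. I would then apply the backward directions of Lemmata~\ref{lem_prmt_simplification_3} and \ref{lem_prmt_simplification_2} --- observing that $\bar T=(1+3\eps)T$ is a multiple of $\gamma T=\eps\delta T$ by construction, so the hypothesis of Lemma~\ref{lem_prmt_simplification_2} is met --- to obtain a schedule for $I_1$ of makespan at most $(1+(\eps\delta)^{-1}\mu)\bar T+(\mu+\eps)T$, and finally Lemma~\ref{lem_prmt_simplification_1} backwards to reach makespan at most $\breve T=(1+(\eps\delta)^{-1}\mu)\bar T+(\mu+\eps)T+(\eps+\delta)T$ for $I$. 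With $\mu=\eps^2\delta$, $\bar T=(1+3\eps)T$ and $\delta\le\eps\le 1/2$ this is at most $(1+9\eps)T$, exactly as computed just before the algorithm; rescaling $\eps$ by a constant turns this into the claimed $(1+\eps)$-guarantee.

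For the running time I would just collect the parameter bounds recorded before the theorem --- brick number $n$, $r,s=2^{\Oh((1/\eps)\log(1/\eps))}$ globally resp.\ locally uniform constraints (Observation~\ref{rem:MCIP_prmt}), brick size $t=2^{2^{\Oh((1/\eps)\log(1/\eps))}}$, $\Delta=2^{\Oh((1/\eps)\log(1/\eps))}$, $\varphi=\Oh(\log m+(1/\eps)\log(1/\eps))$ and $\Phi=\Oh(m)$ --- and substitute them into Theorem~\ref{thm:solving_n-fold}. Since $r^2s+rs^2=2^{\Oh((1/\eps)\log(1/\eps))}$, the leading factor $(rs\Delta)^{\Oh(r^2s+rs^2)}$ is $2^{2^{\Oh((1/\eps)\log(1/\eps))}}$ and absorbs the $(1/\eps)\log(1/\eps)$ contributions hidden in $\varphi$ and $\log(nt\Phi)$, leaving $2^{2^{\Oh((1/\eps)\log(1/\eps))}}n^2\log^2 m\log nm$ for solving the $\MCIP$. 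Building $I_3$, extracting the layered schedule as in the proof of Lemma~\ref{lem:prmt_IP_to_sched}, and reinserting the removed jobs and setups as in Lemmata~\ref{lem_prmt_simplification_1}--\ref{lem_prmt_simplification_3} each cost $nm\,2^{\Oh((1/\eps)\log(1/\eps))}$, while the search for $\delta$ is negligible; so the overall bound is dominated by $2^{2^{\Oh((1/\eps)\log(1/\eps))}}n^2 m\log m\log nm$, as stated.

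The assembly above is essentially bookkeeping; the only genuinely delicate ingredient --- and the step I would expect to be the main obstacle if Lemma~\ref{lem_prmt_simplification_3} were not already available --- is precisely that lemma, i.e.\ transforming an arbitrary $(T,L)$-schedule for $I_2$ into an $\eps\delta T$-layered, layer-compliant one at the cost of only a $(1+3\eps)$ factor in the makespan while preserving the free-space lower bound. Snapping every block onto a common grid of granularity $\eps\delta T$ (and the medium-setup pieces additionally onto the finer grid $\eps\mu T$) without ever creating more than one partially filled slot per job is the real work; the present theorem is then immediate.
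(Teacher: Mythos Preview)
Your proposal is correct and follows essentially the same approach as the paper: the theorem in the paper has no separate proof but is just the summary of the preceding analysis, and you have faithfully reconstructed that summary---chaining Lemmata~\ref{lem_prmt_simplification_1}--\ref{lem_prmt_simplification_3} and Lemma~\ref{lem:prmt_IP_to_sched} in both directions for correctness, and substituting the recorded parameter bounds into Theorem~\ref{thm:solving_n-fold} for the running time. Your explicit check that $\bar T$ is a multiple of $\eps\delta T$ (needed for the hypothesis of Lemma~\ref{lem_prmt_simplification_2}) and your identification of Lemma~\ref{lem_prmt_simplification_3} as the only nontrivial ingredient are both apt; the remark about rescaling $\eps$ is superfluous here since the theorem is stated with the factor $(1+9\eps)$, but it does no harm.
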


\subsubsection*{Proof of Lemma \ref{lem_prmt_simplification_3}}

We divide the proof into three steps, which can be summarized as follows:
\begin{enumerate}
\item We transform a $(T,L)$-schedule for $I_2$ into a $((1+3\eps)T,L)$-schedule for $I_3$ in which the big setup jobs are already properly placed inside the layers.
\item We construct a flow network with integer capacities and a maximum flow, based on the placement of the remaining jobs in the layers.
\item Using flow integrality and careful repacking, we transform the schedule into a $\eps\delta T$-layered, layer-compliant schedule.
\end{enumerate}
More precisely the above transformation steps will produce a $\eps\delta T$-layered, layer-compliant $((1+3\eps)T,L)$-schedule with the additional properties, that too much processing time may be inserted for some jobs or setup times are produced that are not followed by the corresponding job pieces.
Note that this does not cause any problems:
We can simply remove the extra setups and processing time pieces.
For the medium jobs this results in a placement with at most one used slot that is not fully filled, as required in a layer-compliant schedule.

\subparagraph*{Step 1.}

Remember that a block is a job piece together with its setup time placed in a given schedule.
Consider a $(T,L)$-schedule for $I_2$ and suppose that for each block in the schedule there is a container perfectly encompassing it.
Now, we stretch the entire schedule by a factor of $(1+3\eps)$ and in this process we stretch and move the containers correspondingly.
The blocks are not stretched but moved in order to stay in their container, and we assume that they are positioned at the bottom, that is, at the beginning of the container.
Note that we could move each block inside its respective container without creating conflicts with other blocks belonging to the same job. 
In the following, we use the extra space to modify the schedule.
Similar techniques are widely used in the context of geometric packing algorithms. 
	
Let $j$ be a big setup job. 
In each container containing a block belonging to $j$, there is a free space of at least $3\eps\delta T$, because the setup time of $j$ is at least $\delta T$ and therefore the container had at least that length before the stretching.
Hence, we have enough space to perform the following two steps.
We move the block up by at most $\eps\delta T$, such that it starts at a multiple of $\eps\delta T$.
Next, we enlarge the setup time and the processing time by at most $\eps\delta T$, such that both are multiples of $\eps\delta T$.	
Now the setup time is equal to the rounded setup time, while the processing time might be bigger, because we performed this step for each piece of the job.
We outline the procedure in Figure \ref{fig:container}.
\begin{figure}
	\centering
	\begin{tikzpicture}[scale=0.8]
	\tikzmath{ 
   	\nl1 = 13; 
		\l1 = 0.35; 
		\f1 = 11/8;
		\x1 = 0;
		\x2 = 13;
		\xa1 = 1;
		\xa2 = 3;
		\g1 = 0.4; 
		\s1 = 4.12* 2; 
		\s2 = 9; 
		\p1 = 0.3* 2; 
		\p2 = 1; 
		\yai0 = \g1; 
		\yai1 = \g1; 
		\yai2 = \yai1 + \s1; 
		\yai3 = \yai2 + \p1; 
		\yai4 = \yai3; 
		\xb1 = 4;
		\xb2 = 6;
		\ybi0 = \g1 * \f1; 
		\ybi1 = \g1 * \f1; 
		\ybi2 = \ybi1 + \s1; 
		\ybi3 = \ybi2 + \p1; 
		\ybi4 = \ybi0 + \f1*(\s1 + \p1); 
		\xc1 = 7;
		\xc2 = 9;
		\yci0 = \ybi0; 
		\yci1 = 1; 
		\yci2 = \yci1 + \s1; 
		\yci3 = \yci2 + \p1; 
		\yci4 = \ybi4; 
		\xd1 = 10;
		\xd2 = 12;
		\ydi0 = \ybi0; 
		\ydi1 = 1; 
		\ydi2 = \ydi1 + \s2; 
		\ydi3 = \ydi2 + \p2; 
		\ydi4 = \ybi4; 
}

	\draw[thick,fill=lightgray] (\xa1,{\yai1*\l1}) rectangle (\xa2,{\yai2*\l1});
	\draw[thick,pattern = north east lines] (\xa1,{\yai1*\l1}) rectangle (\xa2,{\yai2*\l1});
	\draw[thick,fill=lightgray] (\xa1,{\yai2*\l1}) rectangle (\xa2,{\yai3*\l1});
	\draw[very thick] (\xa1,{\yai0*\l1}) rectangle (\xa2,{\yai4*\l1});
	
   
	\draw[thick,fill=lightgray] (\xb1,{\ybi1*\l1}) rectangle (\xb2,{\ybi2*\l1});
	\draw[thick,pattern = north east lines] (\xb1,{\ybi1*\l1}) rectangle (\xb2,{\ybi2*\l1});
	\draw[thick,fill=lightgray] (\xb1,{\ybi2*\l1}) rectangle (\xb2,{\ybi3*\l1});
	\draw[very thick] (\xb1,{\ybi0*\l1}) rectangle (\xb2,{\ybi4*\l1});
   
	\draw[thick,fill=lightgray] (\xc1,{\yci1*\l1}) rectangle (\xc2,{\yci2*\l1});
	\draw[thick,pattern = north east lines] (\xc1,{\yci1*\l1}) rectangle (\xc2,{\yci2*\l1});
	\draw[thick,fill=lightgray] (\xc1,{\yci2*\l1}) rectangle (\xc2,{\yci3*\l1});
	\draw[very thick] (\xc1,{\yci0*\l1}) rectangle (\xc2,{\yci4*\l1});
	
	\draw[thick,fill=lightgray] (\xd1,{\ydi1*\l1}) rectangle (\xd2,{\ydi2*\l1});
	\draw[thick,pattern = north east lines] (\xd1,{\ydi1*\l1}) rectangle (\xd2,{\ydi2*\l1});
	\draw[thick,fill=lightgray] (\xd1,{\ydi2*\l1}) rectangle (\xd2,{\ydi3*\l1});
   
	\foreach \i in {0,...,\nl1}
	{	\draw[dashed] (\x1,\i * \l1) -- (\x2,\i * \l1);
	}
	
	\end{tikzpicture}
	\caption{The stretching and rounding steps, for a small job part with big setup time starting in the first layer of the schedule, depicted from left to right: The schedule and the containers are stretched; the block is moved up; and the processing and the setup time are increased. The hatched part represents the setup time, the thick rectangle the container, and the dashed lines the layers, with $\eps = \delta = 1/8$.}
	\label{fig:container}
\end{figure}
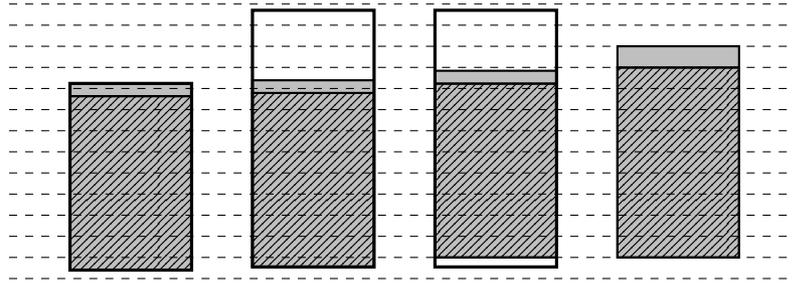
	
We continue with the small setup jobs.
These jobs are big and therefore for each of them there is a summed up free space of at least $3\eps^2 T$ in the containers belonging to the respective job---more than enough to enlarge some of the pieces such that their overall length matches the rounded processing time.  
	
Lastly, we consider the medium setup jobs.
These jobs are big as well and we could apply the same argument as above, however, we need to be a little bit more careful in order to additionally realize the rounding of the setup times and an additional technical step, we need in the following.
Fix a medium setup job $j$ and a container filled with a block belonging to $j$.
Since the setup time has a length of at least $\mu T$, the part of the container filled with it was increased by at least $3\eps\mu T$.
Hence, we can enlarge the setup time to the rounded setup time without using up space in the container that was created due to the processing time part.
We do this for all blocks belonging to medium setup jobs.
The extra space in the containers of a medium setup job due to the processing time parts is still at least $3\eps^2 T\geq 3\eps\delta T$.
For each medium setup job $j$ we spend at most $\eps\delta T$ of this space to enlarge its processing time to its rounded size and again at most $\eps\delta T$ to create a little bit of extra processing time in the containers belonging to $j$.
The size of this extra processing time is bounded by $\eps\delta T$ and chosen in such a way that the overall length of all blocks belonging to $j$ in the schedule is also a multiple of $\eps\delta T$.
Because of the rounding, the length of the added extra processing time for each $j$ is a multiple of $\eps\mu T$.
The purpose of the extra processing time is to ensure integrality in the flow network, which is constructed in the next step. 
	
Note that the free space that was available in the original schedule was not used in the above steps, in fact it was even increased by the stretching.
Hence, we have created a $((1+3\eps)T,L)$-schedule for $I_3$---or a slightly modified version thereof---and the big setup jobs are already well behaved with respect to the $\eps\delta T$-layers, that is, they start at multiples of $\eps\delta T$, and fully fill the slots they are scheduled in.

\subparagraph*{Step 2.}

Note that for each job $j$ and layer $\ell\in\layers$, the overall length $q_{j,\ell}$ of job and setup pieces belonging to $j$ and placed in $\ell$ is bounded by $\eps\delta T$.
We say that $j$ is \emph{fully}, or \emph{partially}, or \emph{not} scheduled in layer $\ell$, if $q_{j,\ell} = 1$, or $q_{j,\ell} \in (0,1)$, or $q_{j,\ell} = 0$ respectively.
Let $X_j$ be the set of layers in which $j$ is scheduled partially and $Y_\ell$ the set of (medium or small setup) jobs partially scheduled in~$\ell$.
Then $a_j = \sum_{\ell\in X_j} q_{j,\ell}$ is a multiple of $\eps\delta T$ and we set $n_j = a_j/(\eps\delta T)$.
Furthermore, let $b_\ell = \sum_{j\in Y_\ell} q_{j,\ell} $ and $k_\ell = \ceil{b_\ell/(\eps\delta T)}$.
	
Our flow network has the following structure: 
There is a node $v_j$ for each medium or small setup job, and a node $u_\ell$ for each layer $\ell$, as well as a source $\alpha$ and a sink $\omega$.
The source node is connected to the job nodes via edges $(\alpha,v_j)$ with capacity $n_j$; and the layer nodes are connected to the sink via edges $(u_\ell,\omega)$ with capacity $k_\ell$.
Lastly, there are edges $(v_j,u_\ell)$ between job and layer nodes with capacity $1$, if $j$ is partially scheduled in layer $\ell$, or $0$ otherwise.
In Figure \ref{fig:Network} a sketch of the network is given.
	
The schedule can be used to define a flow $f$ with value $\sum_{j}n_j$ in the network, by setting $f(\alpha, v_j) = n_j$, $f(u_\ell, \omega) = b_\ell / (\eps\delta T)$, and $f(v_j,u_\ell) = q_{j,\ell}/(\eps\delta T)$.
It is easy to verify that $f$ is a maximum flow, and because all capacities in the flow network are integral, we can find another maximum flow $f'$ with integral values.
\begin{figure}	
\centering
\def\layersep{2.5cm}
\begin{tikzpicture}
[shorten >=1pt, node distance=\layersep]
    \tikzstyle{every pin edge}=[shorten <=3pt]
    \tikzstyle{neuron}=[circle,draw = black,minimum size=20pt]
    \tikzstyle{annot} = [text width=4em, text centered]
        
    \node[neuron] (I-1) at (0,4) {};
    \node (I-2) at (0,3) {$\vdots$};
    \node[neuron] (I-3) at (0,2) {$v_j$};
    \node (I-4) at (0,1) {$\vdots$};
    \node[neuron] (I-5) at (0,0) {};    
        
    \node[neuron, left of=I-3] (S) {$\alpha$};    
    
	\path (S) edge node[above]{$n_j$} (I-3);
	\path (S) edge (I-1);
	\path (S) edge (I-5);

    \node[neuron, right of=I-1] (H-1) {};
    \node[right of=I-2] (H-2) {$\vdots$};
    \node[neuron, right of=I-3] (H-3) {$u_\ell$};
    \node[right of=I-4] (H-4) {$\vdots$};
    \node[neuron, right of=I-5] (H-5) {};
    
    \path (I-1) edge node[above]{$1/0$} (H-1);
    \path (I-1) edge (H-3);
    \path (I-1) edge (H-5);
	\path (I-3) edge (H-1);
	\path (I-3) edge (H-3);
	\path (I-3) edge (H-5);
	\path (I-5) edge (H-1);
	\path (I-5) edge (H-3);
	\path (I-5) edge (H-5);

    \node[neuron, right of=H-3] (T) {$\omega$}; 
    
	\path (H-1) edge (T);
	\path (H-3) edge node[above]{$k_\ell$} (T);
	\path (H-5) edge (T); 

%

   \node[annot,above of=I-1, node distance=1cm] (hj) {Jobs};
    \node[annot,right of=hj] (hl) {Layer};

\end{tikzpicture}
\caption{Flow network for layers and partially scheduled jobs.}
\label{fig:Network}
\end{figure}
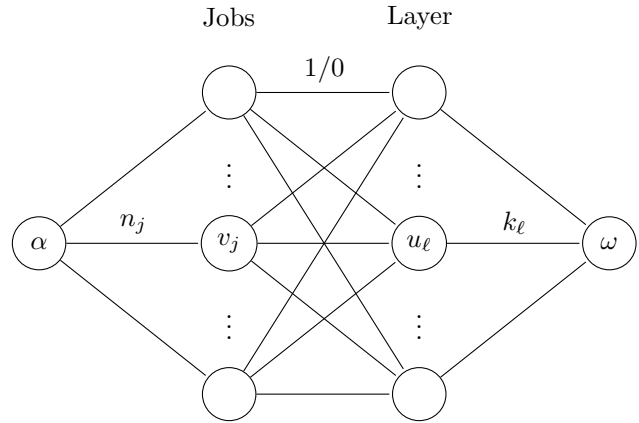

\subparagraph*{Step 3.}

We start by introducing some notation and a basic operation for the transformation of the schedule:
Given two machines $i$ and $i'$ and a time $t$, a \emph{machine swap} between $i$ and $i'$ at moment $t$ produces a schedule, in which everything that was scheduled on $i$ from $t$ on is now scheduled on $i'$ and vice versa.
If on both machines there is either nothing scheduled at $t$, or blocks are starting or ending at $t$, the resulting schedule is still feasible.
Moreover, if there is a block starting at $t$ on one of the machines and another one belonging to the same job ending on the other we can merge the two blocks and transform the setup time of the first into processing time.
We assume in the following that we always merge if this is possible, when performing a machine swap.
Remember that by definition blocks belonging to the same job cannot overlap.
However, if there was overlap, it could be eliminated using machine swaps \cite{schuurman1999preemptive}.

If a given slot only contains pieces of jobs that are partially scheduled in the layer, we call the slot \emph{usable}.
Furthermore, we say that a job $j$ is \emph{flow assigned} to layer $\ell$, if $f'(v_j,u_\ell) = 1$.
In the following, we will iterate through the layers, and create as many usable slots as possible, reserve them for flow assigned jobs, and fill them with processing and setup time of the corresponding slot later on.
To do so, we have to distinguish different types of blocks belonging to jobs that are partially placed in a given layer:
Inner blocks, which lie completely inside the layer and touch at most one of its borders; upper cross-over blocks, which start inside the layer and end above it; and lower cross-over blocks, which start below the layer and end inside it.
When manipulating the schedule layer by layer, the cross-over jobs obviously can cause problems.
To deal with this, we will need additional concepts:
A \emph{repair piece} for a given block is a piece of setup time of length less than $\eps\delta T$, with the property that the block and the repair piece together make up exactly one setup of the respective job. 
Hence, if a repair-piece is given for a block, the block is comprised completely of setup time.
Moreover, we say that a slot reserved for a job $j$, has a \emph{dedicated setup}, if there is a block of $j$ including a full setup \emph{starting or ending} inside the slot.

In the following, we give a detailed description of the transformation procedure followed by a high-level overview of the procedure.
The procedure runs through two phases. 
In the first phase the layers are transformed one after another from bottom to top.
After a layer is transformed the following invariants will always hold:
\begin{enumerate}
\item A scheduled block either includes a full setup, or has a repair piece, and in the latter case it was an upper cross-over block in a previous iteration.
\item Reserved slots that are not full have a dedicated setup.
\end{enumerate}
Note that the invariants are trivially fulfilled in the beginning.
During the first phase, we remove some job and setup parts from the schedule that are reinserted into the reserved slots in the second phase.
Let $\ell\in\layers$ denote the current layer.

In the first step, our goal is to ensure that jobs that are fully scheduled in $\ell$ occupy exactly one slot, thereby creating as many usable slots as possible.
Let $j$ be a job that is fully scheduled in layer $\ell$.
If there is a block belonging to $j$ and ending inside the layer at time $t$, there is another block belonging to $j$ and starting at $t$, because $j$ is fully scheduled in $\ell$ and there are no overlaps.
Hence, we can perform a machine swap at time $t$ between the two machines the blocks are scheduled on.
We do so, for each job fully scheduled in the layer and each corresponding pair of blocks.
After this step, there are at least $k_\ell$ usable slots and at most $k_\ell$ flow assigned jobs in layer $\ell$.

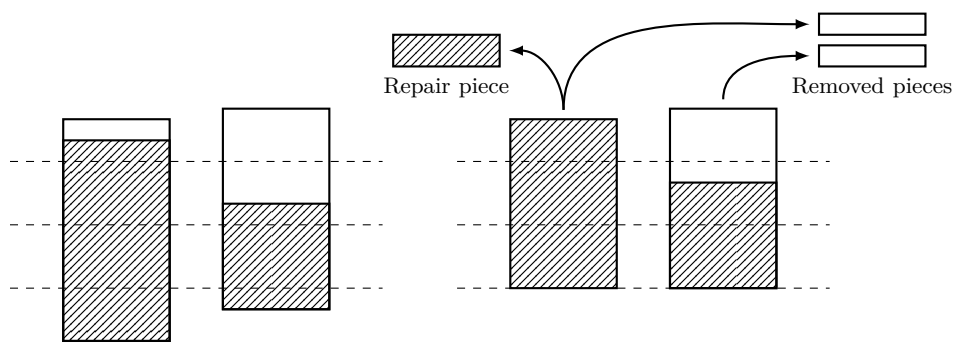
\begin{figure}
\centering
\begin{tikzpicture}[scale=0.14] 
\tikzmath{ 
\lsep  = 6;
\psep = 7; 
\bwidth = 10;
\bsep = 5;
\bborder = 5;  	
\lwidth = \bborder + \bwidth + \bsep + \bwidth + \bborder;
\rAbot = 5;
\rBbot = 2;
\rAtop = 4;
\rBtop = 5;
\rAtopS = 2;
\rBtopS = 2;
\rAleftbotX1 = \bborder;
\rAleftbotY1 = \lsep - \rAbot;
\rAlrighttopX1 = \bborder + \bwidth;
\rArighttopY1 = 3 * \lsep + \rAtop;
\rAlrighttopSX1 = \bborder + \bwidth;
\rArighttopSY1 = 3 * \lsep + \rAtopS;
\rBleftbotX1 = \bborder + \bwidth + \bsep;
\rBleftbotY1 = \lsep - \rBbot;
\rBlrighttopX1 = \bborder + \bwidth + \bsep + \bwidth;
\rBrighttopY1 = 3 * \lsep + \rBtop;
\rBrighttopSX1 = \bborder + \bwidth + \bsep + \bwidth;
\rBrighttopSY1 = 2 * \lsep + \rBtopS;
\rAleftbotX2 = \lwidth + \psep + \bborder;
\rAleftbotY2 = \lsep;
\rAlrighttopX2 = \lwidth + \psep + \bborder + \bwidth;
\rArighttopY2 = 3 * \lsep + \rAtop;
\rBleftbotX2 = \lwidth + \psep + \bborder + \bwidth + \bsep;
\rBleftbotY2 = \lsep;
\rBlrighttopX2 = \lwidth + \psep + \bborder + \bwidth + \bsep + \bwidth;
\rBrighttopY2 = 3 * \lsep + \rBtop;
\rBlrighttopSX2 = \lwidth + \psep + \bborder + \bwidth + \bsep + \bwidth;
\rBrighttopSY2 = 2 * \lsep + \rBtopS + \rBbot;
\repairleftbotX = \lwidth + 1;
\repairleftbotY = 4*\lsep + 3;
\repairrighttopX = \lwidth + 1 + \bwidth;
\repairrighttopY = 4*\lsep + 3 + \rAbot - \rAtopS;
\restlbX1 = 2 * \lwidth + \psep - 1;
\restlbY1 = 4*\lsep + 3;
\resttrX1 = 2 * \lwidth + \psep - 1 + \bwidth;
\resttrY1 = 4*\lsep + 3 + \rBbot;
\restlbX2 = 2 * \lwidth + \psep - 1;
\restlbY2 = 4*\lsep + 3 + \rBbot + 1;
\resttrX2 = 2 * \lwidth + \psep - 1 + \bwidth;
\resttrY2 = 4*\lsep + 3 + \rBbot + 1 + \rAtopS;
}

\foreach \i in {1,...,3}
{	
\draw[dashed] (0,\i * \lsep) -- (\lwidth,\i * \lsep);
\draw[dashed] ({(\lwidth + \psep)},\i * \lsep) -- ({(2 * \lwidth + \psep)},\i * \lsep);
}

\draw[thick] (\rAleftbotX1,\rAleftbotY1) rectangle (\rAlrighttopX1,\rArighttopY1);
\draw[thick,pattern = north east lines] (\rAleftbotX1,\rAleftbotY1) rectangle (\rAlrighttopSX1,\rArighttopSY1);
\draw[thick] (\rBleftbotX1,\rBleftbotY1) rectangle (\rBlrighttopX1,\rBrighttopY1);
\draw[thick,pattern = north east lines] (\rBleftbotX1,\rBleftbotY1) rectangle (\rBrighttopSX1,\rBrighttopSY1);

\draw[thick,pattern = north east lines] (\rAleftbotX2,\rAleftbotY2) rectangle (\rAlrighttopX2,\rArighttopY2);
\draw[thick] (\rBleftbotX2,\rBleftbotY2) rectangle (\rBlrighttopX2,\rBrighttopY2);
\draw[thick,pattern = north east lines] (\rBleftbotX2,\rBleftbotY2) rectangle (\rBlrighttopSX2,\rBrighttopSY2);

\draw[thick,pattern = north east lines] (\repairleftbotX,\repairleftbotY) rectangle (\repairrighttopX,\repairrighttopY);
\draw[thick] (\restlbX1, \restlbY1) rectangle (\resttrX1, \resttrY1);
\draw[thick] (\restlbX2, \restlbY2) rectangle (\resttrX2,\resttrY2);

\node[font=\footnotesize] at (\repairleftbotX + \bwidth/2,\repairleftbotY-2) {Repair piece};
\node[font=\footnotesize] at (\restlbX1 + \bwidth/2, \restlbY1 -2) {Removed pieces};

\node (A1) at (\rAlrighttopX2- \bwidth/2,\rArighttopY2 ) {};
\node (A2) at (\rBlrighttopX2- \bwidth/2,\rBrighttopY2 ) {};
\node (R) at (\repairrighttopX,{\repairrighttopY - (\rAbot - \rAtopS)/2}) {};
\node (P1) at (\restlbX1, \restlbY1 + 1) {};
\node (P2) at (\restlbX2, \restlbY2 + 1) {};

\draw [->, >=latex, thick] (A1) to [out= 90 ,in=0] (R);
\draw [->, >=latex, thick] (A1) to [out= 90 ,in=180] (P2);
\draw [->, >=latex, thick] (A2) to [out= 90 ,in=180] (P1);

\end{tikzpicture}
\caption{The rectangles represent blocks, the hatched parts the setup times, and the dashed lines layer borders. The push and cut step is performed on two blocks. For one of the two a repair piece is created.}
\label{fig:pushandcut}
\end{figure}
Next, we consider upper cross-over blocks of jobs that are partially scheduled in the layer $\ell$ but are not flow assigned to it.
These are the blocks that cause the most problems, and we perform a so-called \emph{push and cut step} (see Figure \ref{fig:pushandcut}) for each of them:
If $q$ is the length of the part of the block lying in $\ell$, we cut away the upper part of the block of length $q$ and move the remainder up by $q$.
If the piece we cut away does contain some setup time, we create a repair piece for the block out of this setup time.
The processing time part of the piece, on the other hand, is removed.
Note that this step preserves the first invariant.
The repair piece is needed in the case that the job corresponding to the respective block is flow assigned to the layer in which the block ends.

We now remove all inner blocks from the layer, as well as the parts of the upper and lower cross-over blocks that lie in the layer.
After this, all usable slots are completely free.
Furthermore, note that the the first invariant might be breached by this.

Next, we arbitrarily reserve usable slots for jobs flow assigned to the layer. 
For this, note that due to the definition of the flow network, there are at most $k_\ell$ jobs flow assigned to the layer and there are at least as many usable slots, as noted above.

Using machine swaps at the upper and lower border of the layer, we then ensure, that the upper and lower cross-over blocks of the jobs flow assigned to the layer lie on the same machine as the reserved slot.
This step might breach the second invariant as well.

However, for each job $j$ flow assigned to the layer, we perform the repair steps in order to restore the invariants:
If there is an upper cross-over block for $j$, we reinsert the removed part of the block at the end of the slot, thereby providing a dedicated setup for the remaining free space in the slot.
If there is a lower, but no upper cross-over block for $j$, there are two cases: 
Either there was a repair piece for the block or not.
In both cases we reinsert the removed part of the block in the beginning of the slot and in the first we additionally insert as much setup of the repair piece as possible.
The possible remainder of the repair piece is removed.
Now the slot is either full, or a full setup is provided.
If there is neither an upper nor a lower block for $j$, there is an in inner block belonging to $j$.
This has to be the case, because otherwise the capacity in the flow network between $j$ and $\ell$ is $0$ and $j$ could not have been flow assigned to $\ell$.
Moreover, this inner block contains a full setup and we can place it in the beginning of the slot, thus providing the dedicated setup.
The invariants are both restored.

After the first phase is finished, we have to deal with the removed pieces in the second one.
The overall length of the reserved slots for a job $j$ equals the overall length $a_j$ of its setup and job pieces from layers in which $j$ was partially scheduled.
Since, we did not create or destroy any job piece, we can place the removed pieces corresponding to job $j$ perfectly into the remaining free space of the slots reserved for $j$, and we do so after transforming them completely into processing time.
Because of the second invariant, there is a dedicated setup in each slot, however, it may be positioned directly above the newly inserted processing time.
This can be fixed by switching the processing time with the top part of the respective setup time. 

Lastly, all remaining usable slots are completely free at the end of this procedure, and since the others are full they have an overall size of at least $L$. 
We conclude the proof of Lemma \ref{lem_prmt_simplification_3} with an overview of the transformation procedure.
\begin{algorithm}
\ 

\noindent \emph{Phase 1:} For each layer $\ell\in\layers$, considered bottom to top, perform the following steps:
\begin{enumerate}
   \item Use machine swaps to ensure that jobs fully scheduled in $\ell$ occupy exactly one slot.
   \item For each upper cross-over block of a job partially scheduled but not flow assigned to $\ell$ perform a push and cut step.
   \item Remove inner blocks and parts of cross-over blocks that lie in $\ell$.
   \item Reserve usable slots for jobs flow assigned to the layer.
   \item Use machine swaps to ensure, that cross-over blocks of flow assigned jobs lie on the same machine as the reserved slot.
   \item For each job $j$ flow assigned to the layer, perform exactly one of the repair steps.
\end{enumerate}
\noindent \emph{Phase 2:} 
\begin{enumerate}
   \item Transform all removed pieces into processing time and insert the removed pieces into the reserved slots.
   \item If processing time has been inserted ahead of the dedicated setup of the slot, reschedule properly.
\end{enumerate}
\end{algorithm}

\section{Improvements of the running time}\label{sec:better_running_time}

In this section, we revisit the splittable and the setup time model.
For the former, we address the problem of the running time dependence in the number of machines $m$, and for both we present an improved rounding procedure, yielding a better running time.

\subsection{Splittable Model -- Machine Dependence}

In the splittable model, the number of machines $m$ may be super-polynomial in the input size, because it is not bounded by the number of jobs $n$.
Hence, we need to be careful already when defining the schedule in order to get a polynomially bounded output.
We say a machine is \textit{composite} if it contains more than one job, and we say it is \textit{plain} if it contains at most one job.
For a schedule with makespan $T$, we call each machine \textit{trivial} if it is plain and has load $T$ or if it is empty, and \textit{nontrivial} otherwise.
We say a schedule with makespan $T$ is \textit{simple}, if the number of nontrivial machines is bounded by $\binom{n}{2}$.

\begin{lemma}
If there is a schedule with makespan $T$ for $I$ there is also a simple schedule with makespan $T$.
\end{lemma}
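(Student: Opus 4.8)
The plan is to pass to an extremal makespan-$T$ schedule and then read the bound off the combinatorial structure of which job parts sit on which machines. Concretely, among all schedules for $I$ of makespan at most $T$ I would first choose one using the fewest job parts overall, i.e.\ minimizing $\sum_{j\in\jobs}\jcard(j)$, and, subject to that, minimizing the number of nontrivial machines. Two preliminary reductions are immediate. First, if some job has two parts on the same machine we may merge them into a single part: this removes one setup and does not increase any load, so in the chosen schedule every job has at most one part per machine. Second, a nontrivial machine is either \emph{composite} (at least two jobs) or \emph{plain} with load strictly inside $(0,T)$, so it suffices to bound the number of machines of these two types.

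The heart of the argument is to show that the bipartite \emph{incidence graph} $B$ — whose vertices are the jobs together with the nonempty machines, with an edge $\{j,i\}$ whenever a part of $j$ runs on machine $i$ — contains no cycle. Suppose it did, say $j_1-i_1-j_2-i_2-\dots-j_k-i_k-j_1$ (the cycle is even, since $B$ is bipartite). Introduce a parameter $\theta\ge 0$ and push it around the cycle: increase the processing amount of $j_1$ on $i_1$ by $\theta$, decrease that of $j_2$ on $i_1$ by $\theta$, increase that of $j_2$ on $i_2$ by $\theta$, and so on. For every job this keeps the total processing constant, and on every machine the two incident adjustments cancel, so every machine load — and hence the bound $T$ — is preserved for all sufficiently small $\theta$. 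Raising $\theta$ to the first value at which a shrinking part hits zero lets us delete that part together with its setup, which only decreases the corresponding machine's load; the result is a feasible makespan-$T$ schedule with one part fewer, contradicting minimality. Hence $B$ is a forest. This is the step I expect to be the main obstacle, precisely because a setup is charged in full and not fractionally: one has to verify that the cyclic push only \emph{reweights} already-present parts, so that no new setup is ever created and no load ever rises, and that removing the vanished part keeps every load at most $T$.

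The remainder is bookkeeping on the forest $B$ over its $n$ job-vertices and its nonempty-machine vertices. A composite machine is exactly a machine-vertex of degree at least $2$; if $M'$ denotes the set of nonempty machines and $c$ the number of components of $B$, then on one hand $|E(B)| = n + |M'| - c$, while on the other hand $|E(B)|=\sum_{i\in M'}\deg(i)\ge |M'| + \#\{\text{composite machines}\}$, and comparing the two yields at most $n-1$ composite machines. For a plain nontrivial machine (one job, load in $(0,T)$), suppose some job $j$ occupied two of them: shifting processing of $j$ from the heavier one to the lighter one until the lighter reaches load exactly $T$ either turns the lighter machine trivial without changing the part count — contradicting the secondary minimality — or exhausts $j$ on the heavier machine and thereby removes a part — contradicting the primary minimality. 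Hence each job lies on at most one plain nontrivial machine, so there are at most $n$ of them. Altogether the number of nontrivial machines is linear in $n$, in particular at most $\binom{n}{2}$, and the constructed schedule is simple.
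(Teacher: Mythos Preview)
Your argument is correct and follows a genuinely different route from the paper. The paper argues by a direct pigeonhole-and-swap: whenever more than $\binom{n}{2}$ composite machines are present, two of them must share a common pair of jobs, and exchanging processing time of that pair between the two machines removes one of the four parts; iterating this caps the composite machines at $\binom{n}{2}$, and a subsequent shifting step handles the plain nontrivial machines. You instead fix an extremal schedule and show, via cycle cancelling, that its job--machine incidence graph is a forest --- essentially the classical observation that basic solutions of assignment-type LPs have forest support, here adapted so that setups are never created during the shift and only disappear when a part is deleted. What this buys you is a sharper bound (at most $n-1$ composite plus at most $n$ plain nontrivial machines, so $O(n)$ rather than $O(n^2)$) and a clean structural picture; what the paper's approach buys is a fully explicit local transformation that does not appeal to extremality. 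One small caveat: your closing inequality $2n-1\le\binom{n}{2}$ only holds for $n\ge 5$, so strictly speaking a one-line remark for $n\le 4$ (or the observation that any polynomial bound suffices for the downstream algorithm) is needed; the paper's own final inequality $\binom{n-i}{2}+i\le\binom{n}{2}$ has the analogous wrinkle for very small $n$.
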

\begin{proof}
Let there be a schedule $S$ with makespan $T$ for $I$. 
For the first step, let us assume there are more than $\binom{n}{2}$ composite machines.
In this case, there exist two machines $M_1$ and $M_2$ and two jobs $a,b \in \jobs, a\not = b$ such that both machines contain parts of both jobs since there are at most $\binom{n}{2}$ different pairs of jobs.
Let $t_{M_x}(y)$ be the processing time combined with the setup time of job $y \in \{a,b\}$ on machine $M_{x}$, $x \in \{1,2\}$.
W.l.o.g. let $t_{M_1}(a)$ be the smallest value of the four. 
We swap this job part and its setup time with some of the processing time of the job $b$ on machine $M_2$. 
If the processing time of $b$ on $M_2$ is smaller than $t_{M_1}(a)$, there is no processing time of $b$ on $M_2$ left and we can discard the setup time from $b$ on this machine.
We can repeat this step iteratively until there are at most $\binom{n}{2}$ machines containing more than one job.

In the second step, we shift processing time from the composite machines to the plain ones. 
We do this for each job until it is either not contained on a composite machine or each plain machine containing this job has load $T$. 
If the job is no longer contained on a composite machine, we shift the processing time of the job such that all except one machine containing this job has load $T$.
Since this job does not appear on any composite machines, their number can be bounded by $\binom{n-1}{2}$, by repeating the first step.
Therefore, the number of nontrivial machines is bounded by $\binom{n-i}{2} + i \leq \binom{n}{2}$ for some $i\in\set{0,\dots,n}$.
\end{proof}

For a simple schedule a polynomial representation of the solution is possible:
For each job, we give the number of trivial machines containing this jobs, or fix a first and last trivial machine belonging to this job.
This enables a polynomial encoding length of the output, given that the remaining parts of the jobs are not fragmented into too many parts, which can be guaranteed using the results of Section \ref{sec:EPTAS}. 

To guarantee that the MCIP finds a simple solution, we need to modify it a little. 
We have to ensure that nontrivial configurations are not used to often. 
We can do this by summing up the number of those configurations and bound them by $\binom{n}{2}$. 
Let $\confs' \subseteq \confs$ be the set of nontrivial configurations, i.e., the set of configurations containing more than one module or one module with size smaller than $T$.
We add the following globally uniform constraint to the MCIP:
\begin{align}
\sum_{C \in \confs'} x_C \leq \binom{|\bsjobs|}{2}
\end{align}
Since this is an inequality, we have to introduce a slack variable increasing the brick size by one.
Furthermore, the bound on the biggest number occurring in the input as well as the range of the variables has to be increased by a factor of $\Oh(n^2)$, yielding a slightly altered running time for the $\MCIP$ of:
\[2^{\Oh(\nicefrac{1}{\eps^4}\log\nicefrac{1}{\eps})}n^2\log^3 nm\]

The number of modules with maximum size denotes for each job in $\bsjobs$ how many trivial machines it uses. 
The other modules can be mapped to the nontrivial configurations and the jobs can be mapped to the modules. 

We still have to schedule the jobs in $\ssjobs$. 
We do this as described in the proof of Lemma \ref{lem:split_rounding1}. 
We fill the nontrivial machines greedily step by step starting with the jobs having the smallest processing time. 
When these machines are filled, there are some completely empty machines left.  
Now, we estimate how many machines can be completely filled with the current job $j$. 
This can be done, by dividing the remaining processing time by $T-s_i$ in $\mathcal{O}(1)$. 
The remaining part is scheduled on the next free machine. 
This machine is filled up with the next job and again the number of machines which can be filled completely with the rest of this new job is determined. 
These steps are iterated until all jobs in $\ssjobs$ are scheduled.
This greedy procedure needs at most $\mathcal{O}(|\bsjobs|(|\bsjobs|-1) + |\ssjobs|) = \mathcal{O}(n^2)$ operations. 
Therefore we can avoid the dependence in the number of machines and the overall running time is dominated by the time it takes to solve the $\MCIP$.

\subsection{Improved Rounding Procedures}

To improve the running time in the splittable and setup class model, we reduce the number of module sizes via a geometric and an arithmetic rounding step.
In both cases, the additional steps are performed following all the other simplification steps.
The basic idea is to include setup times together with their corresponding job pieces or batches of jobs respectively into containers with suitably rounded sizes and to model these containers using the modules.
The containers have to be bigger in size than the objects they contain and the load on a machine is given by the summed up sizes of the containers on the machine.
Let $H^*$ be a set of container sizes. 
Then a $H^*$-structured schedule is a schedule in which each setup time together with its corresponding job piece or batch of jobs is packed in a container with the smallest size $h\in H^*$ such that the summed up size of the setup time and the job piece or batch of jobs is upper bounded by $h$.

\subparagraph*{Splittable Model.}

Consider the instance $I_2$ for the splittable model described in Section~\ref{sec:EPTAS_slittable}.
In this instance, each setup and processing time is a multiple of $\eps^2 T$ and we are interested in a schedule of length $(1 + 2\eps)T$.
For each multiple $h$ of $\eps^2 T$, let $\tilde{h}=(1+\eps)^{\ceil{\log_{1+\eps}h/(\eps^2 T)}}\eps^2 T$ and $\bar{h} = \ceil{\tilde{h}/\eps^2 T}\eps^2 T$, and $\bar{H} = \sett{\bar{h}}{h\in\eps^2T\ZZ_{\geq 1},h\leq (1 + 2\eps)^2 T}$.
Note that $|\bar{H}|\in\Oh(1/\eps\log 1/\eps)$
\begin{lemma}
If there is a $((1 + 2\eps)T,L')$-schedule for $I_2$ in which the length of each job part is a multiple of $\eps^2 T$, there is also a $\bar{H}$-structured $((1 + 2\eps)^2T,L')$-schedule for $I_2$ with the same property.
\end{lemma}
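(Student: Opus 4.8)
The plan is to keep the assignment and the splitting of the jobs completely unchanged and only re-organize the time line on each machine: each \emph{block} --- a job piece together with the preceding setup of the same job --- is placed inside the smallest container from $\bar H$ into which it fits, padded with idle time if necessary. Since in the splittable model there are no starting-time constraints, this is automatically feasible, and since no job piece is touched, all job parts stay multiples of $\eps^2 T$, so the ``same property'' clause is immediate.

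First I would record the two facts that make the containers available. In $I_2$ the small-setup jobs have been removed (they are exactly the ones dropped in the step producing $I_1$) and the remaining setup and processing times were rounded up to multiples of $\eps^2 T$; hence every setup time is at least $\eps T$ and every block has size $s$ equal to a positive multiple of $\eps^2 T$ with $\eps T \le s \le (1+2\eps)T$. Taking $h = s$ is therefore a legal index in the definition of $\bar H$, so $\bar s := \bar h \in \bar H$ and $\bar s \ge s$; in particular the smallest element of $\bar H$ of size at least $s$ exists and is bounded by $\bar s$, so every block can indeed be packed into its smallest fitting container.

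Next I would bound the blow-up. By definition $\tilde s = (1+\eps)^{\lceil\log_{1+\eps} s/(\eps^2 T)\rceil}\eps^2 T < (1+\eps)s$ and $\bar s < \tilde s + \eps^2 T$. Here the point is that $s \ge \eps T$, so $\eps^2 T \le \eps s$, and hence $\bar s < (1+\eps)s + \eps s = (1+2\eps)s$. Thus the container chosen for each block has size less than $(1+2\eps)s$, so the load of every machine grows by a factor of at most $(1+2\eps)$, from at most $(1+2\eps)T$ to at most $(1+2\eps)^2 T$. Equivalently, the increase on each machine is at most $2\eps(1+2\eps)T$, so by the general observation in Section~\ref{sec:EPTAS} (raising the makespan bound by a uniform per-machine increase preserves the free-space bound) we obtain a $\bar H$-structured $((1+2\eps)^2T,\,L')$-schedule, as required.

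I do not expect a genuine obstacle. The one thing one must be careful about is that the geometric rounding to $\tilde s$ together with the subsequent arithmetic rounding to a multiple of $\eps^2 T$ cost only a factor $(1+2\eps)$ rather than a factor $(1+\eps)$ plus an uncontrolled additive term; this works precisely because the rounding is performed on blocks, each of which contains a full setup of length at least $\eps T$, so that the additive $\eps^2 T$ error of the arithmetic step is absorbed into a multiplicative $\eps$-error.
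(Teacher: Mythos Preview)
Your proposal is correct and follows essentially the same argument as the paper: both bound the container size by $\bar h < (1+\eps)h + \eps^2 T \le (1+2\eps)h$, using that each block contains a full setup of length at least $\eps T$ so that the additive $\eps^2 T$ from the arithmetic rounding is absorbed into a multiplicative $\eps h$. The paper phrases this as ``stretching the schedule by $(1+2\eps)$ creates enough space'', whereas you phrase it as a per-machine load increase, but the content is identical.
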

\begin{proof}
Consider such a schedule for $I_2$ and a pair of setup time  $s$ and job piece $q$ scheduled on some machine.
Let $h=s+q$.
Stretching the schedule by $(1 + 2\eps)$ creates enough space to place the pair into a container of size $\bar{h}$, because $(1+\eps)h\leq \tilde{h}$, and $\eps h \leq \eps^2 T$, since $s\geq \eps T$.
\end{proof}

To implement this lemma into the procedure the processing time bounds $\bar{T}$ and $\breve{T}$ both have to be properly increased.
Modeling a $\bar{H}$-structured schedule can be done quite naturally:
We simply redefine the size $\size(M)$ of a module $M=(s,q)\in\mods$ to be $\bar{s+q}$.
With this definition, we have $|H|=|\bar{H}|=\Oh(1/\eps\log 1/\eps)$, yielding an improved running time for solving the $\MCIP$ of:
\[2^{\Oh(\nicefrac{1}{\eps^2}\log^3\nicefrac{1}{\eps})}n^2\log^2 m\log nm\]
Combining this with the results above and the considerations in Section \ref{sec:EPTAS_slittable} yields the running time claimed below Theorem \ref{thm:main_result}.

\subparagraph*{Setup Class Model.}

In the setup class model, an analogue approach also yields a reduced set of module sizes, that is, $|H|=\Oh(1/\eps\log 1/\eps)$.
Therefore, the $\MCIP$ can be solved in time:
\[2^{\Oh(\nicefrac{1}{\eps^{3}}\log^4\nicefrac{1}{\eps})}K^2\log n\log m \log Km\]
Hence, we get the running time claimed beneath Theorem \ref{thm:main_result}.

\section{Conclusion}

We presented a more advanced version of the classical configuration IP, showed that it can be solved efficiently using algorithms for $n$-fold IPs, and developed techniques to employ the new IP for the formulation of efficient polynomial time approximation schemes for three scheduling problems with setup times, for which no such algorithms were known before.

For further research the immediate questions are whether improved running times for the considered problems, in particular for the preemptive model, can be achieved; whether the MCIP can be solved more efficiently; and to which other problems it can be reasonably employed. 
From a broader perspective, it would be interesting to further study the potential of new algorithmic approaches in integer programming for approximation, and, on the other hand, further study the respective techniques themselfs.

\bibliography{references}

\end{document}